\newcommand{\nonelementary}{non-elementary\xspace}
\newcommand{\Reportout}[1]{#1}
\newcommand{\Reportin}[1]{}
\renewcommand{\Reportout}[1]{}
\renewcommand{\Reportin}[1]{#1}
\begin{document}

\begin{frontmatter}
\title{The Complexity of Monotone Hybrid Logics over Linear Frames and the Natural Numbers}
\author{Stefan G\"oller}
\address{Department of Computer Science, Universit\"at Bremen, Germany} 
\author{Arne Meier}
\address{Institute of Theoretical Computer Science, Leibniz Universit\"at Hannover, Germany} 
\author{Martin Mundhenk}
\address{Institute of Computer Science, Friedrich-Schiller-Universit\"at Jena, Germany}
\author{Thomas Schneider}
\address{Department of Computer Science, Universit\"at Bremen, Germany} 
\author{Michael Thomas}
\address{TWT GmbH, Germany}
\author{Felix Wei\ss}
\address{Institute of Computer Science, Friedrich-Schiller-Universit\"at Jena, Germany}


  
  \begin{abstract}
 Hybrid logic with binders is an expressive specification language.
 Its satisfiability problem is undecidable in general.
 If frames are restricted to \Nat or general linear orders,
 then satisfiability is known to be decidable, but of non-elementary
complexity.
 In this paper, we consider monotone hybrid logics (i.e., the Boolean
connectives are conjunction and disjunction only)
 over \Nat and general linear orders.
 We show that the satisfiability problem remains non-elementary over
linear orders,
 but its complexity drops to \PSPACE-completeness over \Nat.
 We categorize the strict fragments arising from different combinations
 of modal and hybrid operators into \NP-complete and tractable (i.e.
complete for $\NC1$ or \LOGSPACE).
 Interestingly, \NP-completeness depends only on the fragment and not
on the frame.
 For the cases above \NP, satisfiability over linear orders is harder
than over \Nat,
 while below \NP it is at most as hard.
 In addition 
 we examine model-theoretic properties
 of the fragments in question.
  \end{abstract}

  \begin{keyword}
satisfiability, modal logic, complexity, hybrid logic
  \end{keyword}
 \end{frontmatter}

\section{Introduction}

Hybrid logic is an extension of modal logic with nominals, satisfaction operators and binders.
The downarrow binder $\dna$, which is related to the freeze operator in temporal logic \cite{hen90}, 
provides high expressivity. 
The price paid is the undecidability of the satisfiability problem
for the hybrid language with the downarrow binder $\dna$ \cite{blse95,gor96,ArBM99b}. 
In contrast, modal logic, and its extension with nominals and the satisfaction operator,
is \PSPACE-complete \cite{lad77,ArBM99b}.

In order to regain decidability, syntactic and semantic restrictions have been considered.
It has been shown in \cite{cafr05} that the absence of certain combinations of universal operators ($\Box$, $\wedge$)
with $\dna$ brings back decidability, and that the hybrid language with $\dna$ is decidable over
frames of bounded width. 
Furthermore, this language is decidable over transitive and complete frames
\cite{mssw05}, and over frames with an equivalence relation (ER frames) \cite{musc09}.
Adding the at-operator $\at$---which allows to jump to states named by nominals---leads to undecidability over transitive frames \cite{mssw05},
but not over ER frames \cite{musc09}.
Over linear frames and transitive trees, $\dna$ on its own does not add expressivity, but combinations with
$\at$ or the global modality---an additional $\Diamond$ interpreted over the universal relation---do. 
These languages are decidable and of non-elementary complexity \cite{frrisc03,mssw05};
if the number of state variables is bounded, then they are of elementary complexity \cite{SchwentickW07,Weber07,BozzelliL08}.

We aim for a more fine-grained distinction between fragments of different complexities
by systematically restricting the set of Boolean connectives and combining this with
restrictions to the modal/hybrid operators and to the underlying frames. 
In \cite{MMSTWW09}, we have focussed on four frame classes that allow cycles, 
and studied the complexity of satisfiability for fragments obtained by arbitrary combinations
of Boolean connectives and four modal/hybrid operators.
The main open question in \cite{MMSTWW09} is the one for tight upper bounds for monotone fragments including the $\Box$-operator.
Even though there are many logics for which 
the restriction to monotone Boolean connectives leads to a significant decrease in complexity,
it is not straightforward, and therefore interesting to find out, 
where this happens for hybrid logics.

In this study, we classify the computational complexity of satisfiability
for monotone fragments of hybrid logic with arbitrary combinations of the
operators $\Diamond$, $\Box$, $\dna$ and $\at$ over linear orders and the natural numbers.
Whereas the full logic is non-elementary and decidable \cite{mssw05} for both frame classes,
we show that in the monotone case this high complexity is gained only over linear orders
and drops to \PSPACE-completeness over the natural numbers.
Informally speaking, the reason is that linearly ordered frames
may consist of arbitrarily many dense parts that can be distinguished using the expressive power
of all four operators.
These dense parts and their distances are used to store information
that cannot be stored in a frame without dense parts as, e.g., the natural numbers.
For all other monotone fragments that contain the $\Diamond$-operator, we show \NP-completeness 
independent on the frame class, for linear orders, all remaining fragments (i.e. the fragments without $\Diamond$)
can be shown to be $\NC1$-complete.
The reason is, informally speaking, that all (sub-)formulas of the form $\Box\alpha$
are easily satisfied in a state without successor, 
which can essentially be used to reduce this problem to the satisfiability problem for monotone propositional formulae.
This argument does not go through over the natural numbers, a total frame where every state has a successor.
Over this frame class, we give a decision procedure that runs in logarithmic space for
the fragment with all operators except $\Diamond$ (and prove a matching lower bound),
and in $\NC1$ for all other fragments.

These results give rise to two interesting observations.
First, the \NP-completeness results are independent on the frame class.
Second, for the fragment whose satisfiability problem is above \NP, linear orders make the problem harder than the natural numbers,
and for the richest fragment below \NP, it is the opposite way round---the natural numbers make the problem harder than linear orders. 
Notice also that, in the case where Boolean operators are not restricted to monotone ones, all fragments are \NP-hard.


Our results are shown in Figure \ref{fig:m_Verband}.

\begin{figure}[h]
        \centering
                \begin{tikzpicture}[ -,y=1cm,
                                     x=1.2cm,
                                     auto,
                                     node distance=1.8cm,
                                     semithick,
                                     rounded corners,
                                     pspace/.style=  {style=rectangle,
                                                      draw=black,
                                                      fill=darkgray,
                                                      text=white,
                                                      minimum size=7mm},
                                     pspaceLeg/.style=  {style=rectangle,
                                                      draw=black,
                                                      minimum size=4mm,
                                                      fill=darkgray},
                                     np/.style=      {style=rectangle,
                                                      draw=black,
                                                      minimum size=7mm,
                                                      fill=lightgray},
                                     npLeg/.style=    {style=rectangle,
                                                      draw=black,
                                                      minimum size=4mm,
                                                      fill=lightgray},
                                     nc1/.style=     {style=rectangle,
                                                      draw=black,
                                                      minimum size=7mm,
                                                      fill=white},
                                     nc1Leg/.style=  {style=rectangle,
                                                      draw=black,
                                                      minimum size=4mm,
                                                      fill=white},
                                     logspace/.style= {style=rectangle,
                                                      draw=black,
                                                      minimum size=7mm,
                                                      fill=white,
                                                      postaction={pattern=north east lines}},
                                     logspaceLeg/.style= {style=rectangle,
                                                      draw=black,
                                                      minimum size=4mm,
                                                      fill=white,
                                                      postaction={pattern=north east lines}}
                                   ]
  \node[nc1]    (empty)    at (-3,0)                          {$\emptyset$};

  \node[nc1]   (bo)   at (-3.7,1)	{$\Box$};
  \node[np]     (di)   at (-5.1,1)	{$\Diamond$};
  \node[nc1]    (da)   at (-2.3,1)		{$\dna$};
  \node[nc1]    (at)   at (-.9,1)		{$\at$};

  \node[np]     (dibo)  at (-6.5,2.5)	{$\Diamond,\Box$};
  \node[np]     (dida)  at (-5.1,2.5)	{$\Diamond,\dna$};
  \node[np]     (diat)  at (-3.7,2.5)	{$\Diamond,\at$};
  \node[nc1]   (boda)  at (-2.3,2.5)	{$\Box,\dna$};
  \node[nc1]   (boat)  at (-.9,2.5)	{$\Box,\at$};
  \node[nc1]    (daat)  at (.5,2.5)	{$\dna,\at$};

  \node[np]     (diboda)	at (-6,4)	{$\Diamond, \Box, \dna$};
  \node[np]     (diboat)  	at (-4,4)	{$\Diamond, \Box, \at$};
  \node[np]     (didaat)  	at (-2,4)	{$\Diamond, \dna, \at$};
  \node[logspace]   (bodaat)  	at (0,4)	{\MyColorBox[white]{$\Box, \dna, \at$}};  
  
  \node[pspace] (all)	at (-3,5.4)	{$\boldsymbol\Diamond\boldsymbol,\,\boldsymbol\Box\boldsymbol,\, \boldsymbol\dna\boldsymbol,\, \boldsymbol\at$};  
  
  \node[pspaceLeg,
			label={0:\begin{tabular}{@{}l@{}}
								\\\small\strut $\lin$: decidable, non-elementary\\
								\small\strut $\Nat$: \PSPACE-complete
							 \end{tabular}}]   (pspace)   at (-7.8, -1)    {};

  \node[npLeg,label={0:\begin{tabular}{@{}l@{}}
								\\\small\strut\NP-complete\\
								\small\strut quasi-polysize model property
							 \end{tabular}}]       (np)       at (-7.8, -2.1)    {};

  \node[logspaceLeg,label={0:\begin{tabular}{@{}l@{}}
								\\\small\strut $\lin$: \NC1-complete; 
								\small\strut $\Nat$: \L-compl.\\
								\small\strut canonical model property
							 \end{tabular}}] (logspace) at (-3.3, -1)    {};

  \node[nc1Leg,label={0:\begin{tabular}{@{}l@{}}
								\\\small\strut\NC1-complete\\
								\small\strut canonical model property
							 \end{tabular}}]      (nc1)      at (-3.3, -2.1)    {};

  \path[]
	(empty)	edge		(di)
					edge	[shorten >=-.04cm,shorten <=-.05cm]	(bo)
					edge[shorten >=-.04cm,shorten <=-.05cm]		(da)
					edge		(at)
	(di)			edge		(dida)
					edge[shorten >=-.02cm,shorten <=-.06cm]			(dibo)
					edge[shorten >=-.02cm,shorten <=-.06cm]			(diat)
	(bo)			edge	[shorten >=-.02cm,shorten <=-.02cm]		(dibo)
					edge	[shorten >=-.05cm,shorten <=-.06cm]		(boda)
					edge	[shorten >=-.02cm,shorten <=-.02cm]		(boat)
	(da)	 		edge[shorten >=-.02cm,shorten <=-.02cm]			(dida)
					edge			(boda)
					edge[shorten >=-.02cm,shorten <=-.02cm]			(daat)
	(at)			edge[shorten >=-.02cm,shorten <=-.02cm]			(diat)
					edge          (boat)
					edge[shorten >=-.06cm,shorten <=-.06cm]			(daat)
         (dibo) edge          (diboda)
                edge[shorten >=-.07cm,shorten <=-.02cm] (diboat)
         (dida) edge          (diboda)
                edge[shorten >=-.03cm,shorten <=-.02cm](didaat)
         (diat) edge          (diboat)
                edge[shorten >=-.02cm,shorten <=-.06cm]          (didaat)
         (boda) edge[shorten >=-.02cm,shorten <=-.02cm](diboda)
                edge[shorten >=-.04cm,shorten <=-.02cm](bodaat)
         (boat) edge[shorten >=-.02cm,shorten <=-.02cm]          (diboat)
                edge          (bodaat)
         (daat) edge[shorten >=-.05cm,shorten <=-.02cm](didaat)
                edge          (bodaat)
         (diboda)[shorten >=-.08cm,shorten <=-.03cm]edge            (all)
         (diboat)[shorten >=-.04cm,shorten <=-.02cm]edge            (all)
         (didaat)                edge            (all)
         (bodaat)[shorten >=-.08cm,shorten <=-.04cm]edge            (all);
\end{tikzpicture}
          \caption{%
            Our complexity results for satisfiability over
            linear frames $(\lin)$ and the natural numbers $(\Nat)$
            for hybrid logic
            with monotone Boolean operators and different combinations of modal/hybrid operators%
          }
          \label{fig:m_Verband}
\end{figure}


\section{Preliminaries}
\label{sec:prelims}

  \ourparagraph{Hybrid Logic.}
    In the following, we introduce the notions and definitions of hybrid logic. The terminology is largely taken from \cite{arblma00}.

    Let $\PROP$ be a countable set of \emph{atomic propositions}, $\NOM$ be a countable set of \emph{nominals}, $\SVAR$ be a countable set of \emph{variables} and $\ATOM = \PROP \cup \NOM \cup \SVAR$.
    We adhere to the common practice of denoting atomic propositions by $p,q,\ldots$, nominals by $i,j,\ldots$, and variables by $x,y,\ldots$
    We define the language of \emph{hybrid (modal) logic} $\HL$ as the set of well-formed formulae of the form
      \[
        \varphi \bnf a \mid \top \mid \bot \mid \neg\varphi \mid \varphi\land\varphi \mid \varphi\lor\varphi \mid \Diamond \varphi \mid \Box \varphi \mid \dna x.\varphi  \mid \at_t \varphi
      \]
    where $a \in \ATOM$, $x \in \SVAR$ and $t \in \NOM \cup \SVAR$.

We define the usual Kripke semantics only to be able to refer to already existing
results. We will then simplify the standard semantics for monotone formulae.
Formulae of $\HL$ are interpreted on \emph{(hybrid) Kripke structures} $K=(W,R,\eta)$, 
consisting of a set of \emph{states} $W$, a \emph{transition relation} $R\colon W\times W$, 
and a \emph{labeling function} $\eta\colon \PROP\cup\NOM \to\wp(W)$ that maps $\PROP$ and $\NOM$ 
to subsets of $W$ with $|\eta(i)| = 1$ for all $i \in \NOM$. 
The relational structure $(W,R)$ is the \emph{Kripke frame} underlying $K$.
In order to evaluate $\dna$-formulae, an assignment $g\colon \SVAR \to W$ is necessary. 
Given an assignment $g$, a state variable $x$ and a state $w$, an \emph{$x$-variant $g^x_w$ of $g$} is defined
by $g^x_w(x)=w$ and $g^x_w(x')=g(x')$ for all $x \neq x'$. 
For any $a \in \ATOM$, let $[\eta,g](a)=\{g(a)\}$ if $a \in \SVAR$ and $[\eta,g](a)=\eta(a)$, otherwise.
    The satisfaction relation of hybrid formulae is defined as follows.

    \begin{tabbing}
    \hspace*{0pt}\= $K,g,w \models \varphi\land\psi$~ \= ~~if and only if~~ \= $\exists w' \in W(wRw' \mathbin{\&} K,g,w' \models \varphi)$\\[2px]
      \> $K,g,w \models a$ \> ~~if and only if \> $w \in [\eta,g](a)$, $a \in \ATOM$, \\[2px]
      \> $K,g,w \models \top$, \> ~~and $K,g,w \not\models \bot$, \>  \\[2px]
      \> $K,g,w \models \neg\varphi$ \> ~~if and only if \> $K,g,w \not\models \varphi$, \\[2px]
      \> $K,g,w \models \varphi\land\psi$ \> ~~if and only if \> $K,g,w \models \varphi$ and $K,g,w \models \psi$, \\[2px]
      \> $K,g,w \models \varphi\lor\psi$ \> ~~if and only if \> $K,g,w \models \varphi$ or $K,g,w \models \psi$, \\[2px]
      \> $K,g,w \models \Diamond \varphi$ \> ~~if and only if \>$\exists w' \in W(wRw' \mathbin{\&} K,g,w' \models \varphi)$,  \\[2px]
      \> $K,g,w \models \Box \varphi$ \> ~~if and only if \> $\forall w' \in W(wRw' \Rightarrow K,g,w' \models \varphi)$, \\[2px]
      \> $K,g,w \models \at_t \varphi$ \> ~~if and only if \> $K,g,[\eta,g](t) \models \varphi$, \\[2px]
      \> $K,g,w \models \dna x. \varphi$ \> ~~if and only if \> $K,g^x_w,w \models \varphi$. 
    \end{tabbing}

    A hybrid formula $\varphi$ is said to be \emph{satisfiable} if there exists a Kripke structure $K=(W,R,\eta)$, a $w \in W$ and an assignment $g\colon \SVAR \to W$ with $K,g,w \models \varphi$.

    The \emph{at} operator $\at_t$ shifts evaluation to the state named by $t\in \NOM \cup \SVAR$.
    The \emph{downarrow binder} $\dna x.$ binds the state variable $x$ to the current state. 
    The symbols $\at_x$, $\dna x.$  
    are called \emph{hybrid operators} whereas the symbols $\Diamond$ and $\Box$ are called \emph{modal operators}. 

    The scope of an occurrence of the binder $\dna$ is defined as usual. 
    For a state variable $x$, an occurrence of $x$ or $\at_x$ in a formula $\varphi$ is called \emph{bound}
    if this occurrence is in the scope of some $\dna$ in $\varphi$, \emph{free} otherwise.
    $\varphi$ is said to contain a free state variable if some $x$ or $\at_x$ occurs free in $\varphi$.

    Given two formulae $\varphi,\alpha$ and a subformula $\psi$ of $\varphi$, we use $\varphi[\psi/\alpha]$
    to denote the result of replacing each occurrence of $\psi$ in $\varphi$ with $\alpha$.
    For considering fragments of hybrid logics, we define subsets of the language $\HL$ as follows.
    Let $O$ be a set of hybrid and modal operators, i.e., a subset of $\{\Diamond,\Box,\dna,\at\}$.
    We define $\HL(O)$ to denote the set of well-formed hybrid formulae using only the operators in $O$,
    and $\MHL(O)$ to be the set of all formulae in $\HL(O)$ that do not use $\neg$.

  \ourparagraph{Properties of Frames.}
    A \emph{frame} $F$ is a pair $(W,R)$, where $W$ is a set of states and $R\subseteq W\times W$ a transition relation.
    A frame $F = (W,R)$ is called
    \begin{itemize}
      \parsep0pt
      \itemsep0pt
      \item
        transitive if $R$ is transitive (for all $u,v,w \in W$: $uRv \land vRw \limplies uRw$),
      \item
        linear if $R$ is transitive, irreflexive and trichotomous ($\forall u,v \in W$: $uRv$ or $u=v$ or $vRu$),
    \end{itemize}

    In this paper we consider the class of all linear frames, denoted by $\lin$,
    and the singleton frame class $\{(\Nat,<)\}$, denoted by \Nat.
    Obviously, $\Nat \subseteq \lin$.

  \ourparagraph{Notational convenience.}
    We can make some simplifying assumptions about syntax and semantics,
    of $\HL(O)$ and $\MHL(O)$,
    which do not restrict generality.
    (1) If $\dna \in O$, then formulae do not contain any nominals. Those can be simulated by free state variables.
    (2) Free state variables are never bound later in the formula,
    and every state variable is bound at most once.
    The latter is no significant restriction because variables bound multiple times
    can be named apart, which is a well-established and computationally easy procedure.
    (3) Monotone formulae do not contain any atomic propositions.
    This restriction is correct because every monotone formula $\varphi$
    is satisfiable if and only if $\varphi$ with all atomic propositions replaced by $\top$
    is satisfiable. This justifies the following restrictions.
    (4) For binder-free fragments, the domain of the labelling function $\eta$ is restricted to nominals,
    and we re-define $\eta\colon \NOM \to W$.
    Furthermore, the absence of $\dna$ makes assignments superfluous: we write $F,w \models \varphi$ instead of $F,g,w \models \varphi$.
    (5) For binder fragments, the satisfaction relation $\models$ is restricted to Kripke \emph{frames}
    $F=(W,<)$, where $<$ is a linear order, and assignments $g : \SVAR \to W$,
    i.e., we write $F,g,w \models \varphi$.
    (6) Over \Nat, we omit the single Kripke frame,
    i.e., we write $\eta,i \models \varphi$ with $\eta : \NOM \to \Nat$ and $i \in \Nat$ for binder-free fragments,
    and $g,i \models \varphi$ with $g: \SVAR \to \Nat$ for binder fragments.

  \ourparagraph{Satisfiability Problems.}
    The \emph{satisfiability problem} for $\HL(O)$ over the frame class $\Fclass{F}$ is defined as follows:
    \problemdef{%
      $\SaT[\Fclass{F}](O)$%
    }{%
      an $\HL(O)$-formula $\varphi$ (without nominals, see above)%
    }{%
      Is there a Kripke structure $K$ based on a frame $(W,R) \in \Fclass{F}$, 
      an assignment $g\colon \SVAR \to W$ and a $w\in W$ such that $K,g,w\models\varphi$\,?%
    }

    The \emph{monotone satisfiability problem} for $\MHL(O)$ over the frame class $\Fclass{F}$ is defined as follows:
    \problemdef{%
      $\MSaT[\Fclass{F}](O)$%
    }{%
      an $\MHL(O)$-formula $\varphi$ without nominals and atomic propositions%
    }{%
      Is there a Kripke frame $(W,R) \in \Fclass{F}$, 
      an assignment $g\colon \SVAR \to W$ and a $w\in W$ such that $F,g,w\models\varphi$\,?%
    }

    If $\Fclass{F}$ is the class of all frames, we simply write $\SaT(O)$ or $\MSaT(O)$.
    Furthermore, we often omit the set parentheses when giving $O$ explicitly, e.g., $\SaT(\Diamond,\Box,\dna,\at)$.

  \ourparagraph{Complexity Theory.}
    We assume familiarity with the standard notions of complexity theory as, e.\,g., defined in \cite{pap94}.
    In particular, we make use of the classes $\L$, $\NLOGSPACE$, $\NP$, $\PSPACE$, and $\coRE$.
    The complexity class $\NONEL$ is the set of all languages $A$ that are decidable
    and for which there exists no $k\in\N$ such that $A$ can be decided using an algorithm 
    whose running time is bounded by $\exp_k(n)$, where $\exp_k(n)$ is the
    $k$-th iteration of the exponential function (e.g., $\exp_3(n)=2^{2^{2^n}}$).

    Furthermore, we need two non-standard complexity classes whose
    definition relies on circuit complexity and formal languages,
    see for instance \cite{vol99,mah07}.
    The class $\NC{1}$ is defined as the set of languages recognizable by a logtime-uniform family of Boolean
    circuits of logarithmic depth and polynomial size over $\{\land,\lor,\neg\}$,
    where the fan-in of $\land$ and $\lor$ gates is fixed to $2$.
    The class \detLOGCFL is defined as the set of languages reducible in logarithmic space
    to some deterministic context-free language.

    The following relations between the considered complexity classes are known.
		\begin{center}
    	$\NC1 \subseteq \LOGSPACE \subseteq \detLOGCFL \subseteq \NP \subseteq \PSPACE \subset \coRE$.
		\end{center}
    It is unknown whether \detLOGCFL contains \NLOGSPACE or vice versa.

A language $A$ is \emph{constant-depth reducible} to $D$, $A\leqcd D$, if there is a logtime-uniform $\AC{0}$-circuit family with oracle gates for $D$ that decides membership in $A$. Unless otherwise stated, all reductions in this paper are \leqcd-reductions.

  \ourparagraph{Known results.}
    The following theorem summarizes results for hybrid languages with Boolean operators $\wedge,\vee,\neg$
    that are known from the literature. 
    Since $\Box \varphi \equiv \neg \Diamond \neg \varphi$, the $\Box$-operator is implicitly present in all fragments containing $\Diamond$ and negation.

    \begin{theorem}[\cite{ArBM99b,arblma00,bhss05b,frrisc03,mssw05}]
      \label{theo:known_for_and_or_neg}
      ~\par
      \begin{Enum}
        \item
          $\SaT(\Diamond,\dna,\at)$ and $\SaT(\Diamond,\dna)$ are \coRE-complete. \cite{ArBM99b}
        \item 
          $\MSaT(\Diamond,\Box)$ is \PSPACE-hard. \cite{bhss05b}
        \item\label{nonelem}
          $\SaT[\mathfrak{F}](\Diamond,\dna,\at)$,
          for $\mathfrak{F} \in \{\lin,\Nat\}$,
          are in \textup{\textsf{NON\-ELEMENTARY}}. \cite{frrisc03,mssw05}
        \item
          \label{it:known_for_and_or_neg_over_lin_N}
          $\SaT[\mathfrak{F}](\Diamond,\dna)$, $\SaT[\mathfrak{F}](\Diamond,\at)$ and $\SaT[\mathfrak{F}](\Diamond)$,
          with $\mathfrak{F} \in \{\lin,\Nat\}$,
          are \NP-complete. \cite{arblma00,frrisc03}
      \end{Enum}
    \end{theorem}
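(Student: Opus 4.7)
Since the theorem collects results from five different papers, my plan is to outline how each of the four items would be established.

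For item~1 (\coRE-completeness of $\SaT(\Diamond,\dna)$ and $\SaT(\Diamond,\dna,\at)$), the upper bound follows from a sound and complete recursively enumerable proof system for hybrid validity: unsatisfiability of $\varphi$ coincides with validity of $\neg\varphi$, so non-satisfiability is r.e. For \coRE-hardness I would reduce from the complement of the halting problem, encoding a deterministic Turing machine run as a $\Diamond$-reachable chain of configurations, using $\dna x$ to freeze the "current" configuration and a conjunction of $\Diamond$- and $\at_x$-assertions to enforce faithful one-step transitions; the formula is satisfiable iff the machine never halts.

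For item~2 (\PSPACE-hardness of $\MSaT(\Diamond,\Box)$), I would adapt Ladner's \PSPACE-hardness construction for modal \textsf{K} to the monotone setting. The main obstacle is the absence of $\neg$: I would split each variable $p$ of a QBF instance into two atoms $p^+$ and $p^-$ representing truth and falsity, and design the $\Diamond/\Box$-skeleton mirroring the quantifier tree so that the only monotone labellings compatible with the skeleton are consistent ones. Engineering this consistency-by-design is the non-trivial core of item~2.

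For item~3 (non-elementary lower bounds over $\lin$ and $\Nat$), I would use the standard yardstick / iterated counter technique. With $\dna x$ anchoring a position and $\at_x$ together with $\Diamond$ and the linear order comparing distances between anchored points, one obtains a counter of exponential length; iterating the construction $k$ times yields $k$-fold exponential counters, into which a non-elementary-hard problem (e.g., tilings of $k$-exponential grids, or first-order satisfiability over finite linear orders) can be reduced. The real difficulty, and the reason this is the hardest of the four items, is transferring the construction to $\Nat$, where the density arguments available in $\lin$ break down; here one must exploit the unbounded supply of binder variables to simulate the effect of dense neighbourhoods by carefully placed nominal-like anchors.

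For item~4 (\NP-completeness of the three sub-fragments over both $\lin$ and $\Nat$), hardness is inherited from propositional SAT via atomic propositions evaluated at the root state. For the upper bound I would establish a polynomial-size model property: because $\Box$ is absent, no universal check across successors is ever required, so it suffices to retain, for each occurrence of $\Diamond\psi$, one witness state, together with all states named by nominals or bound by $\dna$ and referenced later via $\at$. A non-deterministic polynomial-time algorithm then guesses such a linearly ordered chain together with a labelling and verifies satisfaction locally in polynomial time.
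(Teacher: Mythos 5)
The paper offers no proof of this theorem at all: it is an explicit literature summary, each item being credited to the cited works, so the only ``proof'' in the paper is the citation trail. Your decision to sketch arguments is therefore a different route by necessity, but two of your sketches have genuine gaps. The most serious one is item~4. These fragments are $\SaT$ fragments, i.e.\ full Boolean connectives including negation, and the paper states immediately before the theorem that $\Box\varphi\equiv\neg\Diamond\neg\varphi$, so $\Box$ is implicitly present in every fragment containing $\Diamond$. Your upper-bound argument rests precisely on ``because $\Box$ is absent, no universal check across successors is ever required,'' which is false here: $\SaT[\lin](\Diamond)$ already subsumes LTL with $F$ (and hence $G$) over linear orders, and its \NP\ membership is the non-trivial Sistla--Clarke-style bounded-model result (extended to nominals and $\dna$ in the cited works), not a witness-only Skolemisation; the witness-chain procedure you describe would wrongly accept formulas whose $\neg\Diamond$-subformulas are violated by the guessed chain's unconstrained remainder.

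Item~3 also has problems. First, membership in $\NONEL$ as defined in this paper requires \emph{decidability} in addition to the non-elementary lower bound, and your sketch addresses only the lower bound. Second, your assessment of where the difficulty lies is inverted: for the cited non-monotone results, atomic propositions are available, so $\SaT[\Nat](\Diamond,\dna,\at)$ encodes $\FOL(<,P)$ essentially directly via the standard-translation correspondence (with $P(x)$ rendered as $\at_x p$), and Stockmeyer's bound transfers without any need to ``simulate dense neighbourhoods'' over $\Nat$. The density gadgetry is what \emph{this} paper needs for the monotone case over $\lin$, where atomic propositions are eliminated; importing that difficulty into the non-monotone cited result misdescribes the known proofs. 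Items~1 and~2 are reasonable high-level sketches of the standard arguments (with the caveat that your item~1 hardness construction uses $\at_x$ and so does not by itself cover the $\at$-free fragment $\SaT(\Diamond,\dna)$), but since the theorem is proved in the paper by citation only, none of this is needed to validate the statement as it stands.
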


  \ourparagraph{Our contribution.}
    In this paper, we consider the monotone satisfiability problems $\MSaT[\mathfrak{F}](O)$
    for $\mathfrak{F} \in \{\lin,\Nat\}$ and all $O \subseteq \{\Diamond,\Box,\dna,\at\}$.

\noindent

\section{The hard cases: Non-elementary and \texorpdfstring{$\PSPACE$ results}{PSPACE results}}
\noindent
The hardest cases are those with the complete set of operators.
In the non-monotone case, both satisfiability problems are \nonelementary and decidable \cite{mssw05}.
We show that in the monotone case even this hardness is reached,
but only on linear frames, i.e.~$\MSaT[\lin](\Diamond,\Box,\dna,\at)$ is non-elementary and decidable.
In contrast, on the natural numbers the complexity decreases,
i.e.~we show that $\MSaT[\Nat](\Diamond,\Box,\dna,\at)$ is \PSPACE-complete.

Our proofs use reductions to and from fragments of first-order logic on the natural numbers.
Let $\FOL(<,P)$ be the set of all first-order formulae
that use $<$ as the unique binary relation symbol, and $P$ as the unique unary relation symbol.\footnote{%
I.e. $\FOL(<,P)$ is defined as set of all formulae $\varphi$ as follows.
\[
          \varphi \bnf \top \mid x < y \mid P(x) \mid \neg\varphi \mid \varphi\land\varphi \mid \varphi\lor\varphi \mid
                       \exists x \, \varphi \mid \forall x \, \varphi
\]
for variable symbols $x,y\in\SVAR$.}
Let $\SaT[\Nat]_{\FOL}(<,P)$ denote the set of formulae from $\FOL(<,P)$
which are satisfied by a model that has \Nat as its universe, interprets $<$ as the less-than relation on $\Nat\times\Nat$,
and has an arbitrary interpretation for the predicate symbol $P$.
It was shown by Stockmeyer \cite{stmephdthesis} that $\SaT[\Nat]_{\FOL}(<,P)$ is \nonelementary.

Let $\FOL(<)$ be the fragment of $\FOL(<,P)$ in which the predicate symbol $P$ is not used.
Accordingly, $\SaT[\Nat]_{\FOL}(<)$ denotes the set of formulae that are satisfiable over $\Nat$ and the natural interpretation of $<$.
It was shown by Ferrante and Rackoff \cite{fera79}  that $\SaT[\Nat]_{\FOL}(<)$ is in \PSPACE.

Notice that in both fragments $x=y$ can be expressed as $\neg (x<y\, \vee\, y<x)$.
Moreover, every $n\in\Nat$ can be expressed by $x_n$ in the formula 
$\exists x_0 \cdots \exists x_{n-1} [(\bigwedge_{i=0,1,\ldots,n-1} x_i < x_{i+1}) \wedge \forall y (x_n<y \vee \bigvee_{i=0,1,\ldots,n} y=x_i)]$.

\begin{theorem}
  \label{thm:lin-DBda-nonel}
  $\MSaT[\lin](\Diamond,\Box,\dna,\at)$ is \nonelementary and decidable.
\end{theorem}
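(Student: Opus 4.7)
The statement splits into two parts. Decidability is immediate: every formula of $\MHL(\Diamond,\Box,\dna,\at)$ is in particular a formula of $\HL(\Diamond,\Box,\dna,\at)$, and $\SaT[\lin](\Diamond,\Box,\dna,\at)$ is known to be decidable by \cite{mssw05}; so $\MSaT[\lin](\Diamond,\Box,\dna,\at)$ inherits decidability.

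For the \nonelementary lower bound, the plan is to reduce $\SaT[\Nat]_{\FOL}(<,P)$, which is \nonelementary by Stockmeyer \cite{stmephdthesis}, to $\MSaT[\lin](\Diamond,\Box,\dna,\at)$. Given $\psi\in\FOL(<,P)$ --- pushed into negation normal form so that $\neg$ occurs only directly in front of $P$-atoms --- I would construct in polynomial time a formula $\psi^*\in\MHL(\Diamond,\Box,\dna,\at)$ such that $\psi$ has a model of the shape $(\Nat,<,P^{\Nat})$ iff $\psi^*$ is satisfiable over some linear frame. The intended models of $\psi^*$ consist of an infinite increasing chain of designated \emph{milestone} states $m_0<m_1<m_2<\cdots$ that play the role of $\Nat$, together with a small local \emph{gadget} attached to each milestone. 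Every gadget carries two named auxiliary states, a \emph{$P$-witness} and a \emph{$\bar P$-witness}. The truth value of $P(n)$ is encoded geometrically: $P(n)$ is intended to hold iff the $P$-witness of $m_n$ lies inside a densely populated sub-interval, and dually for $\neg P(n)$. Since linear frames can accommodate arbitrarily many pairwise disjoint dense sub-intervals, both truth values are always realizable in $\lin$; this is exactly the resource the natural numbers lack, which intuitively explains why the monotone complexity collapses to \PSPACE over $\Nat$.

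The translation itself is homomorphic. The atomic order $x<y$ becomes $\at_x\Diamond y$, conjunction and disjunction translate directly, quantifiers $\exists x.\,\alpha$ and $\forall x.\,\alpha$ become $\dna w.\Diamond(\textit{Mile}\land \dna x.\alpha^*)$ and the analogous $\Box$-version respectively (with a one-time initialization at $m_0$), and the two atomic cases $P(x)$ and $\neg P(x)$ translate to the positive formulas ``the $P$-witness of $x$ is dense'' and ``the $\bar P$-witness of $x$ is dense''. Density of the relevant sub-interval is expressible monotonically by nested $\Diamond$'s binding intermediate points with $\dna$. Around all of this sits a fixed polynomial-size \emph{skeleton} formula built from $\dna$, $\at$ and $\Box$: it pins down $m_0$ and propagates along the chain, by a $\Box$-guarded invariant, the fact that every milestone is followed by another milestone and carries the intended two-witness gadget.

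The main obstacle is soundness. To extract a value of $P^{\Nat}(n)$ from a model of $\psi^*$ one would like to simply read it off the $P$-witness, but without negation one cannot a priori prevent a milestone from having both witnesses sitting in dense sub-intervals --- a configuration that would make the translations of $P(x)$ and $\neg P(x)$ simultaneously true and destroy the equivalence. I would address this by designing the gadget so that a double-dense configuration is \emph{monotonically} incompatible with the rest of the skeleton: both witnesses are forced to share certain structural roles (for instance, by being located at endpoints of the same interval that the skeleton requires to be split non-densely on one side) so that declaring both sides dense clashes with a $\Box$-propagated invariant present elsewhere. Verifying by induction on $\psi$ that, under this exclusivity, every satisfying model of $\psi^*$ yields a correct interpretation $P^{\Nat}$, and conversely that every model of $\psi$ lifts to a linear frame satisfying $\psi^*$, is the technical heart of the proof; the polynomial bound on $|\psi^*|$ then gives the \nonelementary lower bound by Stockmeyer's theorem.
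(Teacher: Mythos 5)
Your decomposition matches the paper's: decidability is inherited from the full logic via Theorem \ref{theo:known_for_and_or_neg}, and the lower bound comes from a polynomial-time reduction from Stockmeyer's $\SaT[\Nat]_{\FOL}(<,P)$, using the alternation of dense and non-dense parts of a linear frame to store the interpretation of $P$. The genuine gap sits exactly at the point you yourself flag as ``the main obstacle'' and then leave open: how to make the negation-free encodings of $P(x)$ and $\neg P(x)$ mutually exclusive. Saying the gadget should be ``designed so that a double-dense configuration is monotonically incompatible with the skeleton'' restates the requirement rather than meeting it; with two independent density witnesses per milestone, nothing in a monotone formula prevents a model from making both witnesses dense, and then the soundness direction of your reduction (extracting a well-defined $P\subseteq\Nat$ from a model) breaks, which is precisely the induction you defer. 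The paper avoids the problem by not using two density witnesses at all: each $n$ is represented by a \emph{finite} marker of length $2$ (if $n\notin P$) or $3$ (if $n\in P$), and consecutive markers are separated by dense intervals. Exclusivity is then structural rather than axiomatized: ``$y$ is a direct successor of $x$'', ``$x$ has no direct predecessor (above the left anchor $a$)'' and ``the interval between $x$ and $y$ is dense'' are all expressible monotonically with $\Diamond,\Box,\dna,\at$, and a state that has a direct successor can never be the left endpoint of a dense interval, so no state can simultaneously begin a length-$2$ and a length-$3$ marker.

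A second concrete defect is your universal-quantifier case. The ``analogous $\Box$-version'' of $\dna w.\Diamond(\textit{Mile}\wedge\dna x.\alpha^{*})$ does not exist as stated: $\Box(\textit{Mile}\wedge\cdots)$ has the wrong semantics, and the guard $\Box(\textit{Mile}\rightarrow\cdots)$ is unavailable without negation. To relativize $\forall$ monotonically you need a \emph{positive} description of the complement of the guard; the paper translates $\forall x.\alpha$ as $\at_a\Box\dna x.(\mathrm{sep}(x)\vee\mathrm{dirPred}(x)\vee f(\alpha))$, where ``in a separator or has a direct predecessor'' is exactly a monotone formula for ``not the begin of a marker''. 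Your sketch provides no such positive complement of ``milestone'', so even granting an exclusivity gadget, the universal case of your translation is not yet a reduction. The choice of source problem, the order atom $\at_x\Diamond y$, and the decidability argument are all fine; what is missing is the one idea that makes the monotone encoding work.
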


\newcommand{\positive}{\macro{pos}}
\newcommand{\negative}{\macro{neg}}

\newcommand{\directSuc}{\macro{dirSuc}}
\newcommand{\directPred}{\macro{dirPred}}
\newcommand{\noDirectPred}{\macro{noDirPred}}

\begin{proof}
  Decidability follows from Theorem \ref{theo:known_for_and_or_neg} \ref{nonelem}.
  To establish \nonelementary complexity,
  we give a reduction from $\SaT[\Nat]_{\FOL}(<,P)$.
  
  We first show how to encode the intepretation of a predicate symbol, represented by a set $P\subseteq\Nat$, in a linear frame $F = (W,<)$ --
  without using atomic propositions and nominals as agreed in Section \ref{sec:prelims}.
  Using free state variables, we can only distinguish linearly many states at any given time. 
  We therefore use finite intervals (finite subchains of $(W,<)$) to encode whether $n\in P$.
  Such an interval---we call it a \emph{marker}---has length 2 (resp.~3) if for the corresponding $n$ holds $n\not\in P$  (resp.~$n\in P$). 
  Accordingly, we call a marker of length 2 (resp.~3) \emph{negative} (resp.~\emph{positive}). 
  These finite intervals are separated by dense intervals---those are intervals wherein every two states have an
  intermediate state, e.g., $[0,1]_{\mathbb{Q}} = \{q \in \mathbb{Q} \mid 0 \leqslant q \leqslant 1\}$.
  For example, the set $P$ with $0,2\not\in P$ and $1\in P$ is represented by the chain in Figure \ref{fig:appthm2}.
  \begin{figure}[ht]
    \begin{center}
      \begin{tikzpicture}[
 scale=0.9,
 -,
 auto,
 node distance=5cm,
 label distance=3mm,
 semithick,
 state/.style={style=circle, draw=black, minimum size=4mm, inner sep=0mm},
 txt/.style={style=rectangle}]
 
	\node[state] (w1) at (0,0) {};
	\node[state] (w2) at (1.5,0) {};
	\path (w1) edge[->,thick] (w2);
	\node[txt] at (0.75,-0.7) {$0\notin P$};
	\node[state] (w3) at (3.5,0) {};
	\draw [-,thick,decorate,decoration={snake,amplitude=.4mm,segment length=2mm}] (w2) -- (w3);
	\node[state] (w4) at (5,0) {};
	\node[txt] at (5,-0.7) {$1\in P$};
	\node[state] (w5) at (6.5,0) {};
	\path (w3) edge[->,thick] (w4) (w4) edge[->,thick] (w5);
	\node[state] (w6) at (8.5,0) {};
	\draw [-,thick,decorate,decoration={snake,amplitude=.4mm,segment length=2mm}] (w5) -- (w6);
	\node[state] (w7) at (10,0) {};
	\path (w6) edge[->,thick] (w7);
	\node[txt] at (9.25,-0.7) {$2 \notin P$};
	\draw [-,thick,decorate,decoration={snake,amplitude=.4mm,segment length=2mm}] (w7) -- (12,0);	
	\node[txt] at (12.5,0) {$\cdots$};
	
	\node[txt] at (.5,-1.5) {Legend:};
	\node[state] (v1) at (1.5,-1.5) {$w$};
	\node[state,label={0:: ~$v$ is a \underline{direct} successor of $w$}] (v2) at (3,-1.5) {$v$};
	\path (v1) edge[->,thick] (v2);
	\node[state] (v3) at (1.5,-2.1) {$w$};
	\node[state,label={0:: ~$w$  and $v$ are begin and end of a \underline{dense} interval}] (v4) at (3,-2.1) {$v$};
	\draw [-,thick,decorate,decoration={snake,amplitude=.4mm,segment length=2mm}] (v3) -- (v4);	
	\node[state] (v5) at (1.5,-2.7) {$w$};
	\draw [-,thick,decorate,decoration={snake,amplitude=.4mm,segment length=2mm}] (v5) -- (2.5,-2.7);	
	\node[txt,label={0:\hspace{-.15cm}: ~there are \underline{dense} and \underline{nondense} intervals behind $w$}] at (3,-2.7) {$\cdots$};
 
\end{tikzpicture}
    \end{center}
      \caption{An example with $0,2 \notin P$ and $1 \in P$.}
      \label{fig:appthm2}  
  \end{figure} 
  In our fragment, it is possible to distinguish between dense and finite intervals.
  We now show how to achieve this.
  In order to encode the alternating sequence of finite and dense intervals that represents a subset $P\subseteq\Nat$,
  we use the free state variable $a$ to mark a state in a dense interval that is directly followed by the first marker. 
  We furthermore use
  the following macros,
  where $x$ and $y$ are state variables that are already bound before the use of the macro, and $r,s,t,u$ are fresh state variables.
  \begin{itemize}
    \item
      \emph{The state named $y$ is a \emph{direct} successor of the state named $x$.}
      It suffices to say that all successors of $x$ are equal to, or occur after, $y$.
      {
      
      \centering
      
        $\directSuc(x,y) := \at_x \Box \dna z.(\at_y z \vee \at_y \Diamond z)$
      
      }
    \item
      \emph{The state named $x$ has no direct predecessor.}
      It suffices to say that, for all states $r$ equal to, or after, the left bound $a$:
      if $r$ is before $x$, then there is a state between $r$ and $x$.
      We work around the implication by saying that one of the following three cases
      occurs: $r$ is after $x$,
      or $r$ equals $x$, or $r$ is before $x$ with a state in between.
      {
      
      \centering
        
        $\noDirectPred(x) := \at_a\Box\dna r.(\at_x \Diamond r \vee \at_x r \vee \at_r\Diamond\Diamond x)$
      
      }
    \item
      \emph{The state named $x$ has a direct predecessor.}
      It suffices to say that there is a state $r$ after $a$ of which $x$ is a direct successor.
      {
      
      \centering
      
        $\directPred(x) := \at_a\Diamond\dna r.\directSuc(r,x)$
      
      }
    \item
      \emph{The interval between states $x,y$ is dense.}
      We say that, for all $r$ with $x < r$ : $r$ is after $y$,
      or $r$ has no direct predecessor.
      {
      
      \centering
      
        $\macro{dense}(x,y) := \at_x\Box\dna r.(\at_y\Diamond r \vee \noDirectPred(r))$
      
      }
    \item
      \emph{The state $x$ is in a separator.}
      This macro says that, for some successor $r$ of $x$,
      the interval between $x$ and $r$ is dense.
      {
      
      \centering
      
        $\macro{sep}(x) := \at_x\Diamond\dna r.\macro{dense}(x,r)$
      
      }
    \item
      \emph{The state $x$ is the begin of a negative marker.}
      This macro says that $x$ has a direct successor that is the begin of a separator,
      and $x$ has no direct predecessor. 
      The latter is necessary to avoid that, in the above example,
      the middle state of a positive marker is mistaken for the begin of a negative marker.
      {
      
      \centering
      
        $\negative(x) := \at_x\Diamond\dna r.(\macro{dirSuc}(x,r) \wedge \macro{sep}(r)) \wedge \noDirectPred(x)$
      
      }
    \item
      \emph{The state $x$ is the begin of a positive marker.}
      Similarly to the above macro, we express that $x$ has
      a direct-successor sequence $r,s$ with $s$ being the begin of a separator,
      and $x$ has no direct predecessor.
      {
      
      \centering
      
        $\positive(x) := \at_x\Diamond\dna r.(\macro{dirSuc}(x,r) \wedge 
                     \Diamond\dna s.(\macro{dirSuc}(r,s) \wedge \macro{sep}(s))) \wedge \noDirectPred(x)$
                     
      }
    \item
      \emph{The state $x$ is in a separator whose end is a marker.}
      This macro says that, for some successor $r$ of $x$,
      the interval between $x$ and $r$ is dense and $r$ is the begin of a marker.
      {
      
      \centering
      
        $\macro{sepM}(x) := \at_x\Diamond\dna r.(\macro{dense}(x,r) \wedge (\negative(r) \vee \positive(r)))$
      
      }
  \end{itemize}
  We now need the following two conjuncts to express that the part of the model starting at $a$ represents a sequence of 
  infinitely many markers.
  \begin{itemize}
    \item
      \emph{$a$ is in a separator that ends with a marker.}~~~
        $\psi_1 := \macro{sepM}(a)$

    \item
      \emph{Every marker has a direct successor marker.}
      We say that every state $r$ after $a$ satisfies one of the following conditions.
      \begin{itemize}
        \item
          $r$ is in a separator---this also includes that $r$ is the end of a marker---that is followed by a marker.
        \item
          $r$ is the begin of a negative marker and its direct successor is the begin of a separator whose end is a marker.
        \item
          $r$ is the begin of a positive marker and its direct 2-step successor is the begin of a separator whose end is a marker.
        \item
          $r$ in the middle of a positive marker, i.e., $r$ has a direct predecessor which is the begin of a positive marker, 
          and $r$'s direct successor is in a separator whose end is a marker.
      \end{itemize}
      \begin{align*}
        \psi_2 :=\, & \at_a \Box\dna r.\Big(\macro{sepM}(r) \\[2px]
                 & \vee \Big(\negative(r) \wedge \Diamond \dna s.(\directSuc(r,s) \wedge \macro{sepM}(s))\Big) \\[2px]
                 & \vee \Big(\positive(r) \wedge \Diamond \dna s.(\directSuc(r,s) \wedge \Diamond\dna t.(\directSuc(s,t) \wedge \macro{sepM}(t)))\Big) \\[2px]
                 & \vee \Big( (\at_a\Diamond\dna s.\directSuc(s,r) \wedge \positive(s)) \wedge \Diamond \dna t.(\directSuc(r,t) \wedge \macro{sepM}(t))\Big)
      \end{align*}
  \end{itemize}
  Finally, we encode formulae $\varphi$ from $\FOL(<,P)$. 
  We assume w.l.o.g.\ that such formulae have the shape 
  $\varphi := Q_1x_1\dots Q_nx_n.\beta(x_1,\dots,x_n)$,
  where $Q_i \in \{\exists,\forall\}$ and $\beta$ is quantifier-free with atoms
  $P(x)$ and $x<y$ for variables $x,y$, such that negations appear only directly before atoms.
  The transformation of $\varphi$ reuses the $x_i$ as state variables
  and proceeds inductively as follows.
  \begin{align*}
    f(P(x_i))                   &~:=~ \positive(x_i)                        \\[4px]
    f(\neg P(x_i))              &~:=~ \negative(x_i)                       \\[4px]
    f(x_i < x_j)                &~:=~ \at_{x_i}\Diamond x_j                   \\[4px]
    f(\neg(x_i < x_j))          &~:=~ \at_{x_i}x_j \vee \at_{x_j}\Diamond x_i \\[4px]
    f(\alpha \wedge \beta)      &~:=~ f(\alpha) \wedge f(\beta)               \\[4px]
    f(\alpha \vee \beta)        &~:=~ f(\alpha) \vee f(\beta)                 \\[2px]
    f(\exists x_i.\alpha)       &~:=~ \at_a\Diamond\dna x_i.\Big((\negative(x_i) \vee \positive(x_i)) \wedge f(\alpha)\Big) \\
    f(\forall x_i.\alpha)       &~:=~ \at_a\Box\dna x_i.\Big(\macro{sep}(x_i) \vee \directPred(x_i) \vee f(\alpha)\Big) \\
  \end{align*}
  The transformation of $\varphi$ into $\MHL(\Diamond,\Box,\dna,\at)$ is now achieved by
  the function $g$ defined as follows.
  \[
    g(\varphi) := \psi_1 \wedge \psi_2 \wedge f(\varphi)
  \]
  It is clear that the reduction function $g$ can be computed in polynomial time.
  The correctness of the reduction is expressed by the following claim. 
  
  \begin{claim}
  For every formula $\varphi$ from $\FOL(<,P)$ holds:
  {

    \centering

      $\varphi\in\SaT[\Nat]_{\FOL}(<,P)$ if and only if $g(\varphi) \in \MSaT[\lin](\Diamond,\Box,\dna,\at)$.

  }
  \end{claim}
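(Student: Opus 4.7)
The plan is to prove the biconditional by addressing each direction separately, using induction on the structure of $\varphi$.

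For the forward direction, assume $(\Nat,<,P) \models \varphi$. I would construct a linear frame $F = (W,<)$ exactly matching the picture in Figure~\ref{fig:appthm2}: a dense initial interval whose final segment contains a distinguished state (interpreted as $a$), followed by an $\omega$-sequence of \emph{markers} $M_0, M_1, M_2, \ldots$ separated by dense intervals, where $M_n$ has length $3$ if $n \in P$ and length $2$ otherwise. (Such an $F$ exists, e.g., as a subset of $\mathbb{Q} \times \Nat$ with the lexicographic order.) I would then verify, directly from the geometry of $F$, that each macro does what its informal gloss asserts: $\directSuc,\directPred,\noDirectPred,\macro{dense},\macro{sep},\macro{sepM},\negative,\positive$. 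From this it follows that $F \models \psi_1 \wedge \psi_2$. An induction on $\alpha$ then establishes the key correspondence: for any $\sigma \colon \mathrm{free}(\alpha) \to \Nat$, we have $(\Nat,<,P), \sigma \models \alpha$ iff $F, g_\sigma, w \models f(\alpha)$, where $g_\sigma$ extends $g$ by sending each $x_i$ to the first state of the marker $M_{\sigma(x_i)}$. The atomic cases $P(x_i)$, $x_i < x_j$, and their negations are immediate from the marker/ordering definitions; the Boolean cases are trivial; and for the quantifier cases, the range of $\at_a \Diamond$ over marker beginnings matches $\exists x_i$ over $\Nat$, while in the $\forall$-case the disjunctive guard $\macro{sep}(x_i) \vee \directPred(x_i)$ vacuously discharges exactly those states that are not marker beginnings (they either lie strictly inside a separator or are a non-initial position of a marker).

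For the backward direction, assume $F,g,w \models g(\varphi)$ for some linear frame $F$. Using $\psi_1$ I identify the first marker $M_0$ just past $g(a)$; then $\psi_2$ applied at each marker beginning inductively produces the successor marker $M_{n+1}$ from $M_n$, so that $M_0, M_1, \ldots$ is well-defined. Define $P := \{ n \in \Nat : M_n \text{ is positive} \}$. I then mirror the induction of the forward direction to show that $f(\alpha)$ at the appropriate assignment forces $\alpha$ over $(\Nat,<,P)$. Here the bijection between $\{M_n\}_{n\in\Nat}$ and $\Nat$ is order-preserving, so ordering atoms translate correctly.

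The main obstacle is the backward direction of the quantifier cases, specifically ruling out that $\at_a \Diamond \dna x_i$ picks up a marker beginning that is \emph{not} one of the $M_n$ in the canonical $\omega$-chain (e.g., a marker placed after all $M_n$ in an ordering of type $\omega + \omega$). I would handle this by arguing that $\psi_2$ together with the macros force any state $x_i > g(a)$ satisfying $\negative(x_i) \vee \positive(x_i)$ to be reachable as some $M_n$: every such $x_i$ has no direct predecessor yet lies directly after a non-empty separator whose preceding state (by density and $\psi_2$) must itself terminate a marker; iterating this backward-chain argument, bounded by the well-foundedness of the natural-number index of markers, forces $x_i = M_n$ for some $n$. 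For the $\forall$ case one proceeds dually, observing that the guards $\macro{sep}(x_i)\vee\directPred(x_i)$ discharge every state in $F$ that is not a marker beginning, so the non-trivial instances of $\at_a \Box \dna x_i$ range exactly over $\{M_n\}_{n\in\Nat}$.
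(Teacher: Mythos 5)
Your forward direction and your diagnosis of the main obstacle are sound, and for what it is worth the paper offers no detailed argument to compare against (it only asserts that the claim "should be clear"). The genuine gap is in your proposed resolution of that obstacle. The backward-chain argument is circular: you appeal to "the well-foundedness of the natural-number index of markers", but whether the markers are indexed by the natural numbers is exactly what you are trying to establish. Worse, the property you want is simply not forced by $\psi_1\wedge\psi_2$. Consider a linear frame consisting of a dense interval containing the state named $a$, then markers $M_0,M_1,M_2,\dots$ each followed by a dense separator, then \emph{after all of these} a further dense interval, then a second $\omega$-sequence of markers, again interleaved with dense separators -- so the marker-beginnings are ordered like $\omega+\omega$. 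Every state strictly after $a$ satisfies one of the disjuncts of $\psi_2$: states inside any separator, including the limit separator, satisfy $\macro{sepM}$ because above each of them there is a next marker with a dense interval in between, and marker states are handled exactly as in the intended models; $\psi_1$ holds as well. The first marker $M_\omega$ of the second block has no direct predecessor and sits directly above a separator, yet it is not any $M_n$; and your descent has nowhere to go, because the separator below $M_\omega$ has no last "preceding state that terminates a marker" -- it is a limit above infinitely many markers, so the backward chain never reaches $M_0$.

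The consequence is that in the backward direction the structure you extract is $(L,<,P')$ where $L$ is a linear order with a least element in which every element has an immediate successor, but $L$ need not be $\omega$, and the translated quantifiers genuinely range over all of $L$. Satisfiability over such an $L$ does not imply satisfiability over $\Nat$: the $\FOL(<,P)$ sentence asserting the existence of a non-least element with no immediate predecessor (put into negation normal form so that $f$ applies) is mapped by $g$ to a formula satisfied in the $\omega+\omega$ model above, although the sentence is false in $(\Nat,<,P)$ for every $P$. So the backward direction cannot be completed by showing that every marker is some $M_n$; one needs a different idea -- for instance an argument that satisfiability of the source formula over the marker orders that can actually arise transfers back to $\Nat$, or a change of the source problem to satisfiability of $\FOL(<,P)$ over the corresponding class of linear orders together with a non-elementary lower bound for that problem. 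This difficulty is real, and it is one that the paper's one-line "proof" of the claim also leaves unaddressed.
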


The proof of the claim should be clear.
Since $\SaT[\Nat]_{\FOL}(<,P)$ is \nonelementary \cite{stmephdthesis},
it follows that $\MSaT[\lin](\Diamond,\Box,\dna,\at)$ is \nonelementary, too.

Finally, we note that our reduction uses a single free state variable $a$, which could as well be bound to the first state of evaluation.
\end{proof}

The high complexity of $\MSaT[\lin](\Diamond,\Box,\dna,\at)$ relies on the
  possibility that the linear frame alternatingly has dense and non-dense parts. 
If we have the natural numbers as frame for a hybrid language, we lose this possibility.
As a consequence,
the satisfiability problem for monotone hybrid logics over the natural numbers
has a lower complexity than that over linear frames.

\begin{theorem}\label{theo:N-MSAT-PSPACEc}
  $\MSaT[\N](\Diamond,\Box,\dna,\at)$ is \PSPACE-complete.
\end{theorem}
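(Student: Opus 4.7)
The plan is to sandwich $\MSaT[\Nat](\Diamond,\Box,\dna,\at)$ between two instances of $\SaT[\Nat]_{\FOL}(<)$, exploiting the rigidity of $\Nat$: since no dense intervals are available to encode a unary predicate $P$ (compare the marker construction in Theorem~\ref{thm:lin-DBda-nonel}), the monotone hybrid language turns out to match pure $\FOL(<)$ almost step-for-step, collapsing the complexity from \nonelementary to \PSPACE.

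\textbf{Upper bound.} I would use a polynomial standard translation from $\MHL(\Diamond,\Box,\dna,\at)$ (whose only atoms are state variables, by the simplifications of Section~\ref{sec:prelims}) to $\FOL(<)$, parameterised by a current-state FOL variable~$u$: $ST_u(y) := (u=y)$; Booleans pass through; $ST_u(\Diamond\varphi) := \exists v.(u<v \wedge ST_v(\varphi))$ and dually for $\Box$; $ST_u(\at_t\varphi) := ST_t(\varphi)$; and $ST_u(\dna y.\varphi) := \exists y.(y=u \wedge ST_u(\varphi))$. Assumption~(2) of Section~\ref{sec:prelims} keeps all state-variable names distinct, so no capture issues arise. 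Then $\varphi$ is in $\MSaT[\Nat](\Diamond,\Box,\dna,\at)$ iff $\exists u.\,ST_u(\varphi) \in \SaT[\Nat]_{\FOL}(<)$, and the Ferrante--Rackoff \PSPACE bound \cite{fera79} transfers.

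\textbf{Lower bound.} I would reduce from $\SaT[\Nat]_{\FOL}(<)$, which is known to be \PSPACE-hard (matching lower bound for the Ferrante--Rackoff algorithm). Given a sentence $\psi$, first put it in negation normal form; then use trichotomy of $<$ on $\Nat$ to eliminate negations at atoms monotonically via $\neg(x<y) \equiv (x{=}y)\vee(y{<}x)$ and $\neg(x{=}y) \equiv (x{<}y)\vee(y{<}x)$. With a fresh anchor variable $a$ bound to the state of evaluation, define a translation $f$ by $f(x<y) := \at_x\Diamond y$, $f(x{=}y) := \at_x y$, letting $f$ commute with $\wedge$ and $\vee$, and setting $f(\exists x.\alpha) := \at_a\dna x.f(\alpha) \vee \at_a\Diamond\dna x.f(\alpha)$ together with $f(\forall x.\alpha) := \at_a\dna x.f(\alpha) \wedge \at_a\Box\dna x.f(\alpha)$. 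The doubled clauses are needed because $\Diamond$ is strict on $\Nat$: the $\dna$-only part handles $x=a$, the $\Diamond\dna$-part handles $x>a$. The reduction outputs $\dna a.f(\psi)$; correctness is immediate because each tail $\{i,i{+}1,\ldots\}$ of $\Nat$ is order-isomorphic to $\Nat$, so satisfiability at some $i$ faithfully reflects truth of $\psi$ in $(\Nat,<)$.

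\textbf{Main obstacle.} The essential difficulty is remaining inside the monotone fragment during the lower-bound reduction: $\FOL$ uses $\neg$ freely while $\MHL$ forbids it. Trichotomy of $<$ on $\Nat$ resolves this entirely, letting us push negations to atoms and then dissolve them into disjunctions of positive atoms expressible by $\at$ and $\Diamond$. This is exactly the monotone trick that does not scale to $\lin$, where the dense-interval machinery of Theorem~\ref{thm:lin-DBda-nonel} lets one in addition encode an auxiliary unary predicate $P$ and hence push the complexity up to \nonelementary; over $\Nat$ that option is gone and the bounds meet at \PSPACE.
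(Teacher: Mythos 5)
Your upper bound is essentially the paper's: the same standard-translation-style map into $\FOL(<)$ followed by the Ferrante--Rackoff \PSPACE\ decision procedure for $\SaT[\Nat]_{\FOL}(<)$, with the same treatment of $\dna$ and $\at$; this part is fine. Your lower bound takes a different route: the paper reduces directly from $\QBFSAT$, encoding truth values by the distinguished successor state $s$ versus the states strictly after it, whereas you reduce from the first-order theory of $(\Nat,<)$ itself, using trichotomy to dissolve negated atoms monotonically. That route is viable in principle, but note that it rests on the \PSPACE-hardness of $\SaT[\Nat]_{\FOL}(<)$, which you assert without proof or citation; the paper never needs this fact (it only uses the Ferrante--Rackoff \emph{upper} bound), so in your write-up it must either be cited (it follows, e.g., from the classical lower bound for the theory of pure equality over an infinite domain, or by a direct QBF encoding into $\FOL(<)$) or proved -- at which point you are essentially re-doing the paper's QBF gadget one level down.

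The concrete gap is in your quantifier clauses. You set $f(\exists x.\alpha) := \at_a\dna x.f(\alpha) \vee \at_a\Diamond\dna x.f(\alpha)$ and $f(\forall x.\alpha) := \at_a\dna x.f(\alpha) \wedge \at_a\Box\dna x.f(\alpha)$, duplicating $f(\alpha)$ at every quantifier. Since the hardness of the first-order theory comes precisely from deeply nested (alternating) quantifiers, a sentence with $n$ nested quantifiers is mapped to a hybrid formula of size $\Theta(2^{n})$, so the map is not a polynomial-time (indeed not even polynomial-output) reduction, and \PSPACE-hardness does not transfer as stated. The duplication is also unnecessary: drop the $\dna$-only disjunct/conjunct and let every quantifier range \emph{strictly} above the anchor, i.e.\ $f(\exists x.\alpha) := \at_a\Diamond\dna x.f(\alpha)$ and $f(\forall x.\alpha) := \at_a\Box\dna x.f(\alpha)$; the quantification domain $\{j : j > i\}$ is still order-isomorphic to $\Nat$, so your tail argument goes through unchanged and the translation becomes linear. (This is exactly how the paper's QBF reduction handles strictness: the anchor $r$ sits strictly before the states used to encode values, so no case split on ``$x=a$'' is ever needed.) With that repair, and a citation for the hardness of the first-order theory, your lower bound is correct, though less self-contained than the paper's direct reduction from $\QBFSAT$.
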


\begin{proof*}{Proof\Reportout{ sketch}.}
Let $\QBFSAT$ be the problem to decide whether a given quantified Boolean formula is valid. 
We show \PSPACE-hardness by a polynomial-time reduction from the $\PSPACE$-complete $\QBFSAT$ to $\MSaT[\N](\Diamond,\Box,\dna,\at)$.
Let $\varphi$ be an instance of $\QBFSAT$ and 
assume w.l.o.g.\ that negations occur only directly in front of atomic propositions.
We define the transformation as $f\colon \varphi \mapsto \dna r. \Diamond \dna s. \Diamond h(\varphi)$
where $h$ is given as follows: let $\psi,\chi$ be quantified Boolean formulae and let $x_k$ be a variable in $\varphi$, then 
  \[
  \renewcommand{\arraystretch}{1.25}
  \begin{array}{@{}l@{\hspace*{1cm}}l@{}}
    h(\exists x_k \psi):=\at_r \Diamond \dna x_k. h(\psi), &
    h(\forall x_k \psi):=\at_r \Box \dna x_k. h(\psi), \\[2px]
    h(\psi \wedge \chi):=h(\psi) \wedge h(\chi), &
    h(\psi \vee \chi):=h(\psi) \vee h(\chi), \\[2px]
    h(\neg x_k):=\at_s \Diamond x_k, &
    h(x_k):=\at_s x_k. \\
  \end{array}
  \]
For example, the QBF $\psi = \forall x\exists y(x \wedge y) \vee (\neg x \wedge \neg y)$ is mapped to 
{

\centering

$f(\varphi) = \dna r. \Diamond \dna s. \Diamond \at_r\Box\dna x_0.\at_r\Diamond\dna x_1.
(\at_s x_0 \wedge \at_s x_1) \vee (\at_s \Diamond x_0 \wedge \at_s \Diamond x_1)$.

}
Intuitively, this construction requires the existence of an initial state named $r$,
a successor state $s$ that represents the truth value $\top$, 
and one or more successor states of $s$ which together represent $\bot$.
The quantifiers $\exists,\forall$ are replaced by the modal operators $\Diamond,\Box$ which range over
$s$ and its successor states. Finally, positive literals are enforced to be true at $s$,
negative literals strictly after $s$.

For every model of $f(\varphi)$, it holds that $r$ is situated at the first state of the model and that state has a successor labelled by $s$. 
By virtue of the function $h$, positive literals have to be mapped to $s$, 
whereas negative literals have to be mapped to some state other than $s$. 
An easy induction on the structure of formulae shows that 
$\varphi \in \QBFSAT$ iff $f(\varphi) \in \MSaT[\N](\Diamond,\Box,\dna,\at)$.


\newcommand{\KN}{K_{\Nat}}
\newcommand{\etaT}{\eta_{\top}}
\newcommand{\Nlt}{(\Nat,<)}
%
We obtain \PSPACE-membership via a polynomial-time reduction 
from $\MSaT[\N](\Diamond,\Box,\dna,\at)$
to the satisfiability problem $\SaT[\Nat]_{\FOL}(<)$
for the fragment of first-order logic with the relation ``$<$'' interpreted over the natural numbers.
Let the first order language contain all members of \SVAR\ as variables
and all members of \NOM\ as constants.
Based on the standard translation from hybrid to first-order logic \cite{cafr05},
we devise a reduction $H$ that maps
hybrid formulae $\varphi$ and variables or constants $z$
to first-order formulae.
  \[
  \renewcommand{\arraystretch}{1.25}
  \begin{array}{@{}l@{\hspace*{.5cm}}l@{}}
    H(p,z) := \top \text{ for $p\in\PROP$}                 
          & H(v,z) :=  v=z ~~\text{ for $v\in\SVAR\cup\NOM$} \\[2px]
    H(\alpha\wedge\beta,z) := H(\alpha,z) \wedge H(\beta,z) 
          &  H(\alpha\vee\beta,z) := H(\alpha,z) \vee H(\beta,z) \\[2px]
    H(\Diamond \alpha,z) := \exists t (z<t \wedge H(\alpha,t)) 
          &  H(\Box \alpha,z) := \forall t (z<t \rightarrow H(\alpha,t)) \\[2px]
    H(\dna x.\alpha,z) := \exists x (x=z \wedge H(\alpha,z))
					 & H(\at_x \alpha,z) := H(\alpha,x)
  \end{array}
  \]
In the $\Diamond$, $\Box$ and $\at$-cases we deviate from the usual definition of the standard translation
because we do not insist on using only two variables in addition to \SVAR---therefore it suffices to require that $t$ is a fresh variable---%
and we allow constants in the second argument.

For a first-order formula $\psi$ with variables in $\SVAR$ and an assignment $g:\SVAR\rightarrow\Nat$,
let $\psi[g]$ denote the first-order formula that is obtained from $\psi$
by substituting every free occurrence of $x\in\SVAR$ by the first-order term that describes $g(x)$.

\begin{claim}
For every instance $\varphi$ of $\MSaT[\N](\Diamond,\Box,\dna,\at)$,
every assignment $g:\SVAR\rightarrow\Nat$ and every $n\in\Nat$, it holds that: ~~
$g,n \models \varphi$ if and only if $\Nlt \models H(\varphi,z)[g^{z}_n]$,
where $z$ is a new variable that does not occur in $\varphi$.
\end{claim}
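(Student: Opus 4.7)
\medskip
\noindent
\textbf{Proof plan.} The claim is essentially the correctness of the standard translation adapted to this fragment, so the plan is a structural induction on $\varphi$, uniformly in $g$ and $n$. Strictly speaking, since the claim is universally quantified over $g$ and $n$, I would prove a stronger statement by induction: for every subformula $\psi$ of $\varphi$, every assignment $g'$ and every $m\in\Nat$,
\[
  g',m \models \psi \iff \Nlt \models H(\psi,z)\bigl[(g')^z_m\bigr],
\]
where $z$ is any variable not occurring (free or bound) in $\psi$. Strengthening to arbitrary $g'$ is what makes the induction through the binder $\dna$ go through.

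For the base cases, if $\psi=\top$ or $\psi=\bot$ the statement is immediate since $H$ preserves these constants (and $\bot$ does not actually occur in $\MHL$-formulae by the monotonicity convention). If $\psi=v\in\SVAR\cup\NOM$, then $g',m\models v$ unfolds to $m=[\eta,g'](v)$, while $H(v,z)[(g')^z_m]$ unfolds to the first-order atom stating exactly that equality. The cases $\psi=\alpha\wedge\beta$ and $\psi=\alpha\vee\beta$ are immediate from the inductive hypothesis applied to $\alpha$ and $\beta$ with the same $g'$ and $m$.

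For the modal cases $\Diamond\alpha$ and $\Box\alpha$, I would use that $t$ is chosen fresh in the definition of $H$. Then $g',m\models\Diamond\alpha$ iff there is $m'>m$ with $g',m'\models\alpha$, which by the inductive hypothesis applied with the fresh variable $t$ is equivalent to the existence of $m'>m$ with $\Nlt\models H(\alpha,t)[(g')^t_{m'}]$; because $t$ does not occur in $(g')^z_m$, this last condition is precisely $\Nlt\models\exists t(z<t\wedge H(\alpha,t))[(g')^z_m]$. The $\Box$-case is dual. For the $\at_x\alpha$ case, $g',m\models\at_x\alpha$ iff $g',g'(x)\models\alpha$; apply the induction hypothesis with the current point $m':=g'(x)$ and a fresh variable $z'$, then observe that $H(\alpha,x)[(g')^z_m]$ has the same truth value under $\Nlt$ as $H(\alpha,z')[(g')^{z'}_{g'(x)}]$, since $x$ plays the role of a constant standing for $g'(x)$ and $z$ does not occur in $H(\alpha,x)$.

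The main obstacle, as expected, is the binder case $\dna x.\alpha$. Unfolding: $g',m\models\dna x.\alpha$ iff $(g')^x_m,m\models\alpha$. On the first-order side, $H(\dna x.\alpha,z)[(g')^z_m]$ unfolds to $\exists x(x=z\wedge H(\alpha,z))[(g')^z_m]$, which by the semantics of $\exists$ in $\FOL(<)$ is equivalent to $\Nlt\models H(\alpha,z)[((g')^z_m)^x_m]$. Now $((g')^z_m)^x_m$ and $((g')^x_m)^z_m$ agree on all variables (because $z\neq x$, as $z$ was chosen fresh), so this is exactly the condition delivered by the inductive hypothesis applied to $\alpha$ with the updated assignment $(g')^x_m$ and the current point $m$. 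The reason the strengthened formulation is essential is precisely here: without varying the assignment in the induction hypothesis, one cannot peel off $\dna x$. Once this case is handled, collecting the cases yields the claim for the original $\varphi$ by specialising to $g'=g$ and $m=n$.
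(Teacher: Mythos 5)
Your proposal is correct and follows essentially the same route as the paper: a structural induction on $\varphi$, kept uniform in the assignment and the evaluation point, with fresh variables handling the modal cases, the commutation of variant assignments handling $\dna$, and a renaming/substitution observation handling $\at$. The paper's own proof of the claim proceeds case by case in exactly this way, so no further comparison is needed.
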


\Reportin{
\begin{proofofclaim}
We prove the claim inductively on the construction of $\varphi$.
\begin{description}
\item[$\varphi=v$ for $v\in\SVAR$:]

\begin{tabular}[t]{lcl}
$g,n \models v$ & iff$_{(1)}$ & $g(v)=n$ \\ 
  & iff$_{(2)}$ & $g^{z}_{n}(v)=g^{z}_{n}(z)$ \\
  & iff$_{(3)}$ & $\Nlt \models (v=z)[g^{z}_n]$.
\end{tabular}

\noindent
Justifications for the equivalences:
(1) is by the definition of $\models$ for hybrid logic,
(2) extends $g$ by the new variable $z$, and (3)  uses the definition of $\models$ for first-order logic over $\Nlt$.

\item[$\varphi=\alpha\wedge\beta$ resp. $\varphi=\alpha\vee\beta$:] 
     straightforward.

\item[$\varphi=\Diamond \alpha$:]

\begin{tabular}[t]{lcl}
$g,n\models \Diamond \alpha$ & iff$_{(1)}$ & 
  $\exists t'>n: g,t' \models \alpha$ \\
  & iff$_{(2)}$ & $\exists t'>n: \Nlt \models H(\alpha,t)[g^{t}_{t'}]$ \\
  & iff$_{(3)}$ & $\Nlt \models \exists t (z<t \wedge H(\alpha, t))[g^{z}_n]$.
\end{tabular}

\noindent
(1) and (2) are by definition resp. by induction hypothesis.
For (3), notice that the variable $t$ may appear free in $H(\alpha,t)$
but it does not appear free in  $\exists t (z<t \wedge H(\alpha, t))$.
The equivalence then follows by the semantics of the considered first-order logic.

\item[$\varphi=\Box \alpha$:]

\begin{tabular}[t]{lcl}
$g,n\models \Box \alpha$ & iff$_{(1)}$ & 
    $\forall t'>n: g,t' \models \alpha$ \\
    & iff$_{(2)}$ & $\forall t'>n: \Nlt \models H(\alpha,t)[g^{t}_{t'}]$ \\
    & iff$_{(3)}$ & $\Nlt \models \forall t (z<t \rightarrow H(\alpha, t))[g^{z}_n]$.
\end{tabular}

\noindent
(1) and (2) are by definition resp. by induction hypothesis.
The arguments for (3) are as in the case above.

\item[$\varphi=\dna x. \alpha$:]

\begin{tabular}[t]{lcl}
$g,n\models \dna x. \alpha$ & iff$_{(1)}$ &
    $g^x_n ,n\models \alpha$ \\
    & iff$_{(2)}$ & $\Nlt \models H(\alpha,z)[(g^x_n)^z_n]$ \\
    & iff$_{(3)}$ & $\Nlt \models \exists x (x=z \wedge H(\alpha,z))[g^z_n]$.
\end{tabular}

\noindent
(1) and (2) are from the definition of $\dna$ and from the induction hypothesis.
Eventually, (3) follows from the semantics of FOL over $\Nlt$.

\item[$\varphi=\at_x \, \alpha$:]

\begin{tabular}[t]{lcl}
$g,n\models \at_x \, \alpha$ & iff$_{(1)}$ &
        $g,g(x)\models \alpha$ \\
        & iff$_{(2)}$ & $\Nlt \models H(\alpha,z)[g^z_{g(x)}]$ \\
        & iff$_{(3)}$ & $\Nlt \models \exists z (x=z \wedge H(\alpha,z))[g]$ \\
        & iff$_{(4)}$ & $\Nlt \models \exists z (x=z \wedge H(\alpha,z))[g^z_n]$.
\end{tabular}

\noindent
(1) and (2) are from the definition of $\dna$ and from the induction hypothesis.
Now, (3) follows from the semantics of FOL over $\Nlt$.
Notice that $z$ does not appear free in $\exists z (x=z \wedge H(\alpha,z))$.
This proves Equivalence (4).
\end{description}
This concludes the proof of the claim.
\end{proofofclaim}
}

Now, $\varphi\in \MSaT[\N](\Diamond,\Box,\dna,\at)$ if and only if $g,0\models \varphi \vee \Diamond \varphi$
for some assignment $g$.
By the above claim,
this is equivalent to $\Nlt \models H(\varphi \vee \Diamond\varphi,z)[g^{z}_0]$ for some $g$ and a new variable $z$,
which can also be expressed as $\Nlt \models \forall x (\neg(x<z) \wedge H(\varphi \vee \Diamond\varphi,z))$.
This shows that $\MSaT[\N](\Diamond,\Box,\dna,\at)$ is polynomial-time reducible
to $\SaT[\Nat]_{\FOL}(<)$, which was shown to be in \PSPACE in \cite{fera79}.
Therefore, $\MSaT[\N](\Diamond,\Box,\dna,\at)$ is in \PSPACE.
\end{proof*}


\section{The easy cases: \texorpdfstring{$\NC1$ and $\L$ results}{NC1 and L results}}
\noindent
In this section, we show that the fragments without the $\Diamond$-operator
have an easy satisfiability problem.
Our results can be structured into four groups.
First, we consider fragments without modal operators.
For these fragments we obtain $\NC1$-completeness.
Simply said, without negation and $\Diamond$ we cannot express that
two nominals or state variables are not bound to the same state.
Therefore, the model that binds all variables to the first state satisfies 
every satisfiable formula in this fragment.

\begin{lemma}\label{lemma:oneworld}
Let $F_0=(\{0\},\emptyset)$ and $g_0(y)=0$ for every $y \in \SVAR$.
Then $\varphi \in \MSaT[\lin](\dna, \at)$ (resp. $\varphi \in \MSaT[\Nat](\dna, \at)$)
if and only if $F_0,g_0,0\models\varphi$.
\end{lemma}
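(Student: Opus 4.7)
The plan is to prove both directions by a direct structural induction, exploiting two observations that are specific to the fragment $\MHL(\dna,\at)$: (i) no modal operator is present, so the accessibility relation $R$ plays no role in the semantics and evaluation never moves away from the state currently in focus except via $\at_t$, which jumps to $g(t)$; (ii) no negation is present and, by the conventions of Section~\ref{sec:prelims}, no nominals or atomic propositions either, so the only atoms are state variables. Collapsing everything to the single point $0$ under the assignment $g_0$ therefore trivialises every atom and cannot break any connective.

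For the \emph{if} direction over $\lin$, I would simply observe that $F_0=(\{0\},\emptyset)$ is itself a linear frame: transitivity, irreflexivity and trichotomy all hold vacuously on a one-element domain. Hence $F_0,g_0,0\models\varphi$ directly witnesses satisfiability over $\lin$. For the \emph{if} direction over $\Nat$, I would fix the constant assignment $g\colon \SVAR\to\Nat$ with $g(y):=0$ and prove by structural induction on a subformula $\psi$ that $F_0,g_0,0\models\psi$ iff $(\Nat,<),g,0\models\psi$. The induction goes through immediately because every $\dna x.$ only rebinds $x$ to the current point (still $0$) and every $\at_t$ only jumps to $g(t)=0$, so evaluation never leaves state $0$ in either structure.

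For the \emph{only if} direction, I would prove the slightly stronger statement: for every subformula $\psi$ of $\varphi$, every frame $F=(W,R)$ in the class under consideration, every assignment $g'\colon\SVAR\to W$ and every $w'\in W$, the implication $F,g',w'\models\psi\Rightarrow F_0,g_0,0\models\psi$ holds. The atomic case $\psi=y$ is where monotonicity does the work: the conclusion unfolds to $g_0(y)=0$, i.e.\ $0=0$, which is trivially true; $\psi=\top$ is trivial and $\psi=\bot$ is vacuous on the hypothesis. The Boolean cases split in the obvious way. The $\at_t\alpha$ case reduces to $F_0,g_0,g_0(t)\models\alpha$, i.e.\ $F_0,g_0,0\models\alpha$, which follows from the induction hypothesis applied to $\alpha$ at $g'(t)$. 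The $\dna x.\alpha$ case reduces to $F_0,(g_0)^x_0,0\models\alpha$, and since $(g_0)^x_0=g_0$ this follows from the induction hypothesis applied to $\alpha$ with the assignment $(g')^x_{w'}$ at $w'$.

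I do not expect any substantive obstacle. The entire argument rests on the informal observation already made just before the statement of the lemma, namely that without negation and without $\Diamond$ the fragment cannot separate two state variables bound to the same point, so the canonical one-point model $F_0$ with the collapsed assignment $g_0$ is universal for satisfiability.
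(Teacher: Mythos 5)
Your proof is correct and follows essentially the same route as the paper: the right-to-left direction by observing $F_0\in\lin$ and that, in the absence of modal operators, satisfaction at $0$ transfers to $(\Nat,<)$ with the constant-$0$ assignment; the left-to-right direction by the monotonicity of the fragment, which you make explicit via a structural induction (over arbitrary frames, assignments and states) that the paper merely summarises in one line. Your write-up is just a more detailed rendering of the paper's argument, with no gaps.
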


\begin{proof}
The implication direction from left to right follows from the monotonicity of the considered formulas.
For the other direction, notice that $F_0\in\lin$.
For frame class \Nat,
note that if $F_0,g_0,0\models\varphi$ and $\varphi$ has no modal operators, then
$g_0,0\models\varphi$.
\end{proof}

\begin{theorem} \label{thm:all-M-NC1-compl}
Let $O \subseteq \{\dna, \at \}$. 
Then $\MSaT[\lin](O)$ and $\MSaT[\Nat](O)$ are $\NC1$-complete.
\end{theorem}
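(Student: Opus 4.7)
The plan is to observe that, thanks to Lemma~\ref{lemma:oneworld}, both satisfiability problems coincide with model checking on the fixed one-state frame $F_0 = (\{0\},\emptyset)$ under the fixed assignment $g_0$. On this trivial model every state variable $y$ is true at $0$ (since $g_0(y) = 0$), every subformula $\at_t\psi$ has the same truth value as $\psi$ (since $[\eta,g_0](t) = 0$), and every subformula $\dna x.\psi$ has the same truth value as $\psi$ (since the only possible $x$-variant of $g_0$ at state $0$ is $g_0$ itself). Hence $F_0,g_0,0 \models \varphi$ can be decided by evaluating the purely propositional monotone formula $\varphi'$ obtained from $\varphi$ by replacing every state variable by $\top$ and erasing every $\at_t$-prefix and $\dna x.$-prefix.

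For the upper bound I would note that the translation $\varphi \mapsto \varphi'$ is computable in logtime-uniform $\AC{0}$, since the map is purely local and structural, and that the value problem for monotone propositional formulas over $\{\top,\bot,\wedge,\vee\}$ is well known to lie in $\NC1$ (being a restriction of the Boolean formula value problem shown to be in $\NC1$ by Buss). Composing the two yields the $\NC1$ upper bound uniformly for both frame classes $\lin$ and $\Nat$.

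For the lower bound I would reduce from the monotone Boolean formula value problem, which is $\NC1$-hard: from any Boolean formula over $\{\wedge,\vee,\neg,0,1\}$ one can push negations all the way down to the constants using De Morgan's laws, producing an equivalent formula in $\{\wedge,\vee,0,1\}$, so the monotone version inherits the $\NC1$-hardness of the general Boolean formula value problem. Given such a monotone input $\alpha$, the map $0 \mapsto \bot$, $1 \mapsto \top$ yields a formula in $\MHL(\emptyset) \subseteq \MHL(O)$ that, by Lemma~\ref{lemma:oneworld}, lies in $\MSaT[\mathfrak{F}](O)$ if and only if $\alpha$ evaluates to $1$. This reduction is trivially a $\leqcd$-reduction.

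There is no real obstacle here: the lemma does all the semantic work and everything else is routine complexity bookkeeping. The one point deserving explicit care is verifying that both the translation used for the upper bound and the reduction used for the lower bound are constant-depth computable, so that the bounds match under $\leqcd$-reductions; this is immediate since both transformations are purely syntactic and strictly local.
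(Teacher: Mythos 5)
Your upper bound is essentially the paper's argument: Lemma~\ref{lemma:oneworld} reduces both satisfiability problems to evaluating, in $\NC1$ \cite{bus87}, the monotone propositional formula obtained by deleting the $\dna x.$- and $\at_x$-prefixes and setting all atoms to $\top$, and that syntactic translation is indeed constant-depth. The genuine gap is in your lower bound, at exactly the step you call immediate: pushing negations down to the constants by De Morgan is \emph{not} a constant-depth transformation, so you have not exhibited a $\leqcd$-reduction from the general formula value problem to its monotone restriction. To rewrite a formula over $\{\land,\lor,\neg,0,1\}$ in negation-free form you must know, for every constant occurrence, the parity of the number of negations whose scope contains it; already on inputs of the form $\neg\neg\cdots\neg\, 0$ the output is determined by the parity of the number of negation symbols, and parity is not computable by logtime-uniform $\AC0$ circuits (nor is the parenthesis matching needed to determine the scopes). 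So the transformation is not ``purely syntactic and strictly local'' in the sense required for $\leqcd$; it is computable in $\NC1$, but $\NC1$-hardness established via $\NC1$-computable reductions is vacuous, and appealing to oracle gates to do the rewriting would presuppose the very hardness you are trying to prove.

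The conclusion you want (the monotone formula value problem, with constants at the leaves, is $\NC1$-hard under $\leqcd$) is nevertheless true, and this is how the paper argues: it simply cites Buss \cite{bus87}, whose generic hardness construction can be taken to produce formulas in negation normal form with negations applied only to input bits, which the reduction itself evaluates and replaces by $\top$ or $\bot$. With that fact cited (or reproved along these lines) instead of the De Morgan argument, the rest of your lower bound is fine: mapping $0\mapsto\bot$, $1\mapsto\top$ yields a formula of $\MHL(\emptyset)\subseteq\MHL(O)$, and Lemma~\ref{lemma:oneworld} gives membership in $\MSaT[\lin](O)$ and $\MSaT[\Nat](O)$ exactly when the original monotone formula evaluates to $1$, matching the paper's proof.
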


\begin{proof}
$\NC1$-hardness of $\MSaT[\Fclass{F}](\emptyset)$ follows immediately
from the $\NC1$-com\-plete\-ness of the Formula Value Problem for propositional formulae \cite{bus87}.
It remains to show that $\MSaT[\lin](\dna, \at)$ and $\MSaT[\Nat](\dna, \at)$ are in $\NC1$.
In order to decide whether $\varphi$ is in $\MSaT[\lin](\dna, \at)$,
according to Lemma~\ref{lemma:oneworld} it suffices to check
whether the propositional formula obtained from $\varphi$
deleting all occurrences of $\dna x.$ and $\at_x$, 
is satisfied by the assignment that sets all atoms to \emph{true}.
According to \cite{bus87} this can be done in $\NC1$.
Since $\MSaT[\lin](\dna, \at)=\MSaT[\Nat](\dna, \at)$ by Lemma~\ref{lemma:oneworld},
we obtain the same for $\MSaT[\Nat](\dna, \at)$.
\end{proof}

\par\medskip\noindent
Second, we consider fragments with the $\Box$-operator over linear frames.
We can show $\NC1$-completeness here, too.
The main reason is that (sub-)formulas that begin with a $\Box$
are satisfied in a state that has no successor.
Therefore similar as above, every formula of this fragment that is
satisfiable over linear frames is satisfied by a model with only one state.

\begin{theorem} \label{lem:lin-M(B!@)-NC1-ub}
$\MSaT[\lin](\Box, \dna, \at)$ is $\NC1$-complete.
\end{theorem}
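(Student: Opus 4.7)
The plan is to mimic the strategy of Lemma~\ref{lemma:oneworld} by showing that the single-state frame $F_0 = (\{0\}, \emptyset)$, together with the assignment $g_0$ mapping every state variable to $0$, already witnesses satisfiability for the whole fragment. Concretely, I would prove the equivalence $\varphi \in \MSaT[\lin](\Box, \dna, \at)$ iff $F_0, g_0, 0 \models \varphi$. The $\NC1$-hardness then comes for free from Theorem~\ref{thm:all-M-NC1-compl}, since $\MHL(\emptyset)$ is a syntactic sub-fragment of $\MHL(\Box,\dna,\at)$ and satisfiability of a Boolean formula is unaffected by this extension.

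The right-to-left direction of the equivalence is immediate because $F_0 \in \lin$. For the left-to-right direction I would induct on the structure of $\varphi$. The base cases are straightforward: $F_0, g_0, 0 \models \top$ holds, $\bot$ is never satisfiable so the implication is vacuous, and $F_0, g_0, 0 \models x$ for every $x \in \SVAR$ since $g_0(x) = 0$. The cases $\varphi = \alpha \land \beta$ and $\varphi = \alpha \lor \beta$ go through by monotonicity exactly as in Lemma~\ref{lemma:oneworld}: satisfiability of a conjunction implies satisfiability of every conjunct and satisfiability of a disjunction implies satisfiability of some disjunct, so the induction hypothesis applies to the relevant subformulas. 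For $\varphi = \dna x.\alpha$ and $\varphi = \at_t \alpha$ the evaluation in $F_0$ reduces to the evaluation of $\alpha$ at the same triple $(F_0, g_0, 0)$, since $g_0^x_0 = g_0$ and $g_0(t) = 0$, so the induction hypothesis closes these cases. The crucial new case is $\varphi = \Box \alpha$: because $0$ has no successor in $F_0$, we have $F_0, g_0, 0 \models \Box \alpha$ vacuously, for every $\alpha$, with no appeal to the induction hypothesis needed.

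Once the equivalence is in hand, deciding whether $F_0, g_0, 0 \models \varphi$ reduces in constant depth to the Formula Value Problem: strip every $\dna x.$ and $\at_t$ prefix, replace every maximal $\Box$-subformula by $\top$, and replace each remaining state variable by $\top$; the resulting propositional formula over $\{\top, \bot, \land, \lor\}$ has the same truth value as $\varphi$ at $F_0, g_0, 0$, and can be evaluated in $\NC1$ by~\cite{bus87}. I expect no genuine obstacle here; the one-state model collapses both $\Box$ and the hybrid operators in a completely uniform way, and the only point that deserves care is checking that each inductive case of the equivalence truly only uses information visible to $F_0$.
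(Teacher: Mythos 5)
Your proposal is correct and rests on exactly the same idea as the paper's proof: $\Box$-subformulas hold vacuously at the successor-free state of the one-world model $F_0$, so Lemma~\ref{lemma:oneworld} extends to the full fragment, hardness comes from Theorem~\ref{thm:all-M-NC1-compl}, and the membership test reduces to monotone formula evaluation in $\NC1$ \cite{bus87}. The only difference is presentational: the paper packages the observation as a constant-depth reduction mapping each $\Box\psi$ to $\top$, thereby reducing $\MSaT[\lin](\Box,\dna,\at)$ to $\MSaT[\lin](\dna,\at)$, whereas you prove the one-world characterization directly by structural induction.
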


\begin{proof}
$\NC1$-hardness follows from Theorem~\ref{thm:all-M-NC1-compl}.
It remains to show that $\MSaT[\lin](\Box, \dna, \at)\in\NC1$.
We show that essentially the $\Box$-operators can be ignored.
\begin{claim}
  \label{claim:correctness_reduction_linMSAT(B!@)}
  $\MSaT[\lin](\Box, \dna, \at) \leqcd \MSaT[\lin](\dna, \at)$.
\end{claim}

\begin{proofofclaim}
For an instance $\varphi$ of $\MSaT[\lin](\Box, \dna, \at)$,
let $\varphi''$ be the formula obtained from $\varphi$  
by replacing every subformula $\Box\psi$ of $\varphi$ with the constant $\top$.
Then $\varphi''$ is an instance of  $\MSaT[\lin](\dna, \at)$.
If $\varphi \in \MSaT[\lin](\Box, \dna, \at)$, then $\varphi''\in \MSaT[\lin](\dna, \at)$ due to the monotonicity of $\varphi$.
On the other hand, 
if $\varphi''\in \MSaT[\lin](\dna, \at)$, then $K_0,g,0\models\varphi''$ (Lemma~\ref{lemma:oneworld}).
Since $K_0,g,0\models\Box\alpha$ for every $\alpha$,
we obtain $K_0,g,0\models\varphi$, hence $\varphi \in \MSaT[\lin](\Box, \dna, \at)$.
As such simple substitutions can be realized using an $\AC0$-circuit,
the stated reduction is indeed a valid $\leqcd$-reduction from $\MSaT[\lin](\Box,\dna,\at)$ to $\MSaT[\lin](\dna,\at)$.
\end{proofofclaim}

Since $\MSaT[\lin](\dna, \at)\in\NC1$ (Theorem~\ref{thm:all-M-NC1-compl}) and $\NC1$ is closed downwards under $\leqcd$,
it follows from the Claim that  $\MSaT[\lin](\Box, \dna, \at)\in\NC1$.
\end{proof}

It is clear that this argument does not apply to the natural numbers.

\par\medskip\noindent
Third, we show $\NC1$-completeness for the fragments with $\Box$
and one of $\dna$ and $\at$ over $\Nat$.
They receive separate treatment
because, in \Natless, every state has a successor, and
therefore $\Box$-subformulas cannot be satisfied as easily as above.
It turns out that the complexity of the satisfiability problem increases only
if both hybrid operators can be used.

\begin{theorem}\label{thm:NC1-compl-BA}
$\MSaT[\N](\Box,\at)$ is $\NC1$-complete.
\end{theorem}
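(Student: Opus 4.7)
The hardness is immediate from Theorem~\ref{thm:all-M-NC1-compl} applied with $O=\emptyset$, so all the work lies in the upper bound $\MSaT[\N](\Box,\at) \in \NC1$.

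My plan is to show that a single \emph{canonical} Kripke model over $\Nat$ decides satisfiability, and then to reduce model checking in this model to the monotone Formula Value Problem, which is in $\NC1$ by~\cite{bus87}. Let $K_0 = (\Nat,<,\eta_0)$ with $\eta_0(i)=0$ for every nominal $i$, so that state~$0$ carries all nominals while no state $n \geq 1$ does. A short induction on subformula depth shows that in $K_0$ the truth value of every subformula $\psi$ is the same at every state $n \geq 1$; write $c(\psi)$ for this common value and $T_0(\psi)$ for the value at state~$0$. Unfolding the semantics gives the mutual recursion $T_0(i)=\top$, $c(i)=\bot$, componentwise commutation of $T_0$ and $c$ with $\wedge$ and $\vee$, $T_0(\at_j\psi_1)=c(\at_j\psi_1)=T_0(\psi_1)$, and $T_0(\Box\psi_1)=c(\Box\psi_1)=c(\psi_1)$.

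The key lemma to prove is that $\varphi \in \MSaT[\N](\Box,\at)$ iff $T_0(\varphi)=1$. The direction $(\Leftarrow)$ is immediate since $K_0$ at state~$0$ then witnesses satisfiability. For $(\Rightarrow)$ I proceed by simultaneous structural induction on subformulas $\psi$ of $\varphi$, proving:
\begin{itemize}
\item[(a)] if $K,w \models \psi$ for some model $K$ over $\Nat$ and some state $w$, then $T_0(\psi) = 1$; and
\item[(b)] if some model $K$ and some $N$ satisfy $K,v \models \psi$ for all $v > N$, then $c(\psi) = 1$.
\end{itemize}
The atom, $\wedge$ and $\at_j$ cases are routine. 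For $\Box\psi_1$ in~(a), the assumption forces $\psi_1$ to hold at every $v > w$, so (b)~applied to $\psi_1$ yields $c(\psi_1) = 1 = T_0(\Box\psi_1)$. The delicate case is $\vee$ in~(b): when $\psi_1 \vee \psi_2$ is cofinally true in $K$, one of $\psi_1, \psi_2$ must itself be cofinally true, since every $\MHL(\Box,\at)$-formula stabilises to a constant truth value on the large states of any $\Nat$-model---an auxiliary fact which I prove first by a short structural induction, using that from some point on no nominal is true and that $\at_j$-subformulas are state-invariant.

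Given the characterisation, the upper bound follows by unfolding the recursion. A straightforward induction shows $|\tau_0(\psi)|, |\tau_c(\psi)| \leq |\psi|$, so $\tau_0(\varphi)$ is a monotone propositional formula of linear size whose value equals $T_0(\varphi)$. Syntactically $\tau_0(\varphi)$ is obtained from $\varphi$ by deleting every $\at_j$ and $\Box$ operator and by replacing each nominal by $\top$ or $\bot$ according to whether its innermost enclosing $\at$/$\Box$-ancestor is an $\at$ (or absent) or a $\Box$. This is a $\leqcd$-reduction to monotone Formula Value, placing $\MSaT[\N](\Box,\at)$ in $\NC1$. The main obstacle is to calibrate the two inductive claims in the key lemma so that they reinforce each other across the $\at$/$\Box$ alternation; in particular the $\vee$ case of~(b) genuinely needs the stabilisation lemma for arbitrary $\Nat$-models, not only for $K_0$.
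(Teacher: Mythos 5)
Your proposal is correct and follows essentially the same route as the paper: the heart in both cases is the canonical-model claim that $\varphi\in\MSaT[\N](\Box,\at)$ iff $\varphi$ holds at state $0$ under the valuation placing every nominal at $0$, followed by the observation that this check amounts to evaluating the monotone Boolean skeleton of $\varphi$ (nominal occurrences whose innermost enclosing operator is a $\Box$ become $\bot$, all others $\top$), which is in $\NC1$ by \cite{bus87}. The only difference is how that claim is established---the paper via a substitution lemma (replacing $\Box$-bound nominals by $\bot$) plus monotonicity, you via a simultaneous induction with an ``eventually always true'' invariant and a stabilisation lemma---but these are the same idea in different bookkeeping.
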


\begin{proof*}{Proof sketch.}
$\NC1$-hardness follows from Theorem~\ref{thm:all-M-NC1-compl}.

For the upper bound,
we distinguish occurrences of nominals that are either \emph{free}, 
or that are \emph{bound} by a $\Box$, or that are bound by an $\at$.
Simply said, a free occurrence of $i$ in $\alpha$
is bound by $\Box$ in $\Box\alpha$ and bound by $\at$ in $\at_x \alpha$ (even if $x\not=i$).
Since the assignment $g$ is not relevant for the considered fragment,
we write $K,w\models\alpha$ for short instead of $K,g,w\models\alpha$.

\begin{claim}
Let $\alpha'$ be the formula obtained from $\alpha$
by replacing every occurrence of a nominal that is bound by $\Box$ with $\bot$,
and let $\eta$ be a valuation.
If $\eta,k\models\alpha$, then $\eta,k\models\alpha'$.
\end{claim}

Moreover, it turns out that binding every nominal to the initial state
suffices to obtain a satisfying model.

\begin{claim}
$\varphi\in \MSaT[\N](\Box,\at)$ if and only if $\eta_0,0\models\varphi$
with $\eta_0(x)=\{0\}$ for every $x \in \NOM$.
\end{claim}

Both claims together yield that, 
in order to decide $\varphi\in \MSaT[\N](\Box,\at)$,
it suffices to check whether $\eta_0,0\models\varphi'$.
No nominal in $\varphi'$ occurs bound by a $\Box$-operator.
Therefore for every subformula $\Box\alpha$ of $\varphi'$  and for every $k$ holds:
$\eta_0,k\models\alpha$ if and only if $\eta_0,0\models\alpha$.
All nominals that occur free or bound by an $\at$ 
evaluate to \emph{true} in state $0$ via $\eta_0$.
Therefore, in order to decide $\eta_0,0\models\varphi'$,
it suffices to ignore all $\Box$ and $\at$-operators of $\varphi'$
and evaluate it as a propositional formula under assignment $\eta_0$
that sets all atoms of $\varphi'$ to \emph{true}.
This can be done in $\NC1$ \cite{bus87}.
The complete proof can be found in \Reportout{the Technical Report version of this paper~\cite{GMMSTW11}}\Reportin{Appendix \ref{app:NC1_proof_BA}}.
\end{proof*}

Next, we consider $\MSaT[\N](\Box,\dna)$.
According to our remarks in Section \ref{sec:prelims} about notational convenience,
we assume that there are no nominals in $\MHL(\Box,\dna)$.

\begin{theorem}\label{thm:NC1-compl-BD}
$\MSaT[\N](\Box,\dna)$ is $\NC1$-complete.
\end{theorem}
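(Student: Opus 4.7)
The $\NC1$-hardness is immediate from Theorem~\ref{thm:all-M-NC1-compl}, since $\emptyset \subseteq \{\Box,\dna\}$. For the upper bound, I plan to reduce $\MSaT[\N](\Box,\dna)$ to the Formula Value Problem via a canonical-model argument in the spirit of Theorem~\ref{thm:NC1-compl-BA}.

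The canonical model is $(g_0,0)$ with $g_0(x)=0$ for every state variable $x$. I would first establish the canonical-model property: $\varphi \in \MSaT[\N](\Box,\dna)$ iff $g_0,0 \models \varphi$. The ($\Leftarrow$)-direction is trivial. For ($\Rightarrow$), the point is that the only state-dependent atoms in $\MHL(\Box,\dna)$ are the state variables themselves, and each such variable is true at at most one state in $\Nat$; since $\Box$ is a universal quantifier over successors, making a variable true at some state other than the root cannot help any $\Box$-subformula. By monotonicity, setting all free variables to the initial state $0$ therefore dominates any other choice of $(g,n)$.

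Next, I would give a purely syntactic characterisation of $g_0,0 \models \varphi$. Call an occurrence of a state variable $x$ in $\varphi$ \emph{dead} if the path from its binder (the unique $\dna x$-ancestor, or the root if $x$ is free in $\varphi$) down to this occurrence contains at least one $\Box$, and \emph{live} otherwise. Over $\Nat$ the evaluation state strictly increases across every $\Box$, while $\dna$ leaves it unchanged, and there is no operator in the fragment that can return to an earlier state; hence a dead occurrence is evaluated at a state strictly greater than $g_0(x)$ and is false, while a live occurrence is evaluated at exactly $g_0(x)$ and is true. Let $\varphi^{*}$ be the propositional formula obtained from $\varphi$ by replacing every dead occurrence with $\bot$, every live occurrence with $\top$, and then erasing all $\dna$-binders and $\Box$-operators; this erasure is semantically sound because after the substitution the subformulas under $\dna$ and $\Box$ are variable-free and hence state-independent, so $\dna x.\alpha \equiv \alpha$ and $\Box\alpha \equiv \alpha$ (using that every state in $\Nat$ has a successor). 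A routine induction yields $g_0,0 \models \varphi$ iff $\varphi^{*}$ evaluates to $\top$.

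The map $\varphi \mapsto \varphi^{*}$ is computable in $\AC0$ (for each leaf position, the ``is there a $\Box$ strictly between me and my binder'' predicate is a bounded-depth tree query in a logtime-uniform setting), and the Formula Value Problem is in $\NC1$ by~\cite{bus87}; since $\NC1$ is closed under $\leqcd$, this places $\MSaT[\N](\Box,\dna)$ in $\NC1$. The main obstacle I expect is justifying that the universal quantifier of $\Box\alpha$ can be collapsed to a single canonical successor, which is the essential content of both the canonical-model claim and the syntactic characterisation. I would handle it by strengthening the induction hypothesis to a shift-invariance statement: once the evaluation state has moved past every externally bound variable, the truth of any subformula of $\MHL(\Box,\dna)$ depends only on its structure and not on the specific state, so checking $\alpha$ at one sufficiently-large successor suffices to decide $\Box\alpha$. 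Given the absence of $\Diamond$, $\at$, and nominals, this shift-invariance is clean and drives both halves of the argument simultaneously.
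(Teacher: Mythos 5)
Your proposal takes essentially the same route as the paper's proof: $\NC1$-hardness from Theorem~\ref{thm:all-M-NC1-compl}, the canonical assignment $g_0$ binding every state variable to $0$, a classification of variable occurrences by whether a $\Box$ intervenes between the occurrence and its $\dna$-binder (or the root) --- your dead/live split coincides with the paper's trichotomy of occurrences bound by $\Box$ versus bound by $\dna$ or free --- followed by substituting $\bot$/$\top$, erasing the $\Box$- and $\dna$-operators, and invoking Buss's $\NC1$ algorithm for the formula value problem. The shift-invariance you defer (with a suitably strengthened induction hypothesis to handle the $\dna$-case, where the freshly bound variable sits exactly at the evaluation state) is precisely what the paper's two claims establish, so the argument matches in substance.
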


\begin{proof*}{Proof sketch.}
Now, we distinguish occurrences of state variables as the occurrences in the proof sketch above.
They are either \emph{free}, or they are \emph{bound} by a $\Box$, or they are \emph{bound} by $\dna$.
Note that this phrasing differs from the standard usage of the terms `free' and `bound' in the context of state variables.
A free occurrence of $i$ in $\alpha$ is bound by $\Box$ in $\Box\alpha$, as above.
It is bound by $\dna$ in $\dna i . \alpha$ only.
Notice that $y$ occurs free in $\dna x . y$ (for $x\not=y$).

\begin{claim}
Let $\alpha'$ be the formula obtained from $\alpha$
by replacing every occurrence of a state variable that is bound by $\Box$ with $\bot$,
and let $g$ be an assignment.
If $g,k\models\alpha$, then $g,k\models\alpha'$.
\end{claim}

\begin{claim}
$\varphi\in \MSaT[\N](\Box,\dna)$ if and only if $g_0,0\models\varphi$, 
for $g_0(x)=0$ for every $x \in \SVAR$.
\end{claim}

Both claims together yield that,
in order to decide $\varphi\in \MSaT[\N](\Box,\dna)$,
it suffices to check whether $g_0,0\models\varphi'$.
No state variable in $\varphi'$ occurs bound by a $\Box$-operator.
Therefore for every subformula $\Box\alpha$ of $\varphi'$  and for every $k$ holds:
$g_0,k\models\alpha$ if and only if $g_0,0\models\alpha$.
All occurrences of state variables in $\varphi'$ that are bound by $\dna$
evaluate to \emph{true}, because no $\Box$ occurs ``between'' the 
binding $\dna i$ and the occurrence of $i$,
which means that the state where the variable is bound is the same as where the variable is used.
All free occurrences of state variables  
evaluate to \emph{true} in state $0$ due to $g_0$.
Therefore, in order to decide $g_0,0\models\varphi'$,
it suffices to ignore all $\Box$ and $\dna$-operators of $\varphi'$
and evaluate it as a propositional formula under an assignment
that sets all atoms to \emph{true}.
This can be done in $\NC1$ \cite{bus87}.
The complete proof can be found in \Reportout{the Technical Report version of this paper~\cite{GMMSTW11}}\Reportin{Appendix \ref{app:NC1_proof_BD}}.
\end{proof*}

The fourth part deals with the fragment with $\Box$ and both $\dna$ and $\at$ over the natural numbers.

\begin{lemma} \label{lem:N-M(B!@)-L-lb}
  $\MSaT[\Nat](\Box, \dna, \at)$ is $\L$-hard.
\end{lemma}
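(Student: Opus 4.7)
The plan is to reduce a suitable $\L$-complete problem to $\MSaT[\Nat](\Box, \dna, \at)$ via a constant-depth reduction. A natural candidate is the order problem \textsf{ORD}: given a directed graph $G=(V,E)$ whose edges form a disjoint union of simple directed chains, together with two designated vertices $s,t \in V$, decide whether $s$ precedes $t$ in the chain that contains them. It is well known that \textsf{ORD} is $\L$-complete under constant-depth reductions.

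Given an \textsf{ORD} instance $(G,s,t)$, the goal is to construct in constant depth a formula $\varphi_{G,s,t} \in \MHL(\Box,\dna,\at)$ such that every satisfying assignment over $\Nat$ is forced to embed the chain structure of $G$ into $\Nat$ in an order-preserving way, and such that a final ``query gadget'' succeeds if and only if $g(x_s) \leq g(x_t)$. To this end, I introduce a state variable $x_v$ for each $v \in V$ and, for each edge $(u,v) \in E$, include an ``edge gadget'' of the shape $\at_{x_u} \Box \dna z.\,\bigl(\at_z x_v \vee \beta(z)\bigr)$, where $\beta(z)$ is a carefully crafted monotone subformula that is satisfied only at states strictly after $g(x_v)$. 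Since every state of $\Nat$ has infinitely many successors, universally binding $z$ over the successors of $g(x_u)$ forces, in any satisfying assignment, either the ``exception'' $m=g(x_v)$ or the ``escape clause'' $\beta(m)$; this asymmetry is what allows monotone subformulas to encode strict order. A query gadget of the same shape, starting at $\at_{x_s}$ and using $x_t$ in place of $x_v$, can then be designed so that it holds precisely when $g(x_s) \leq g(x_t)$.

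The principal obstacle is the design of $\beta(z)$. In the weaker fragments $\MSaT[\Nat](\Box,\at)$ and $\MSaT[\Nat](\Box,\dna)$ treated in Theorems~\ref{thm:NC1-compl-BA} and \ref{thm:NC1-compl-BD}, the ``all state variables mapped to $0$'' assignment always witnesses satisfiability whenever any model does; this is exactly what pins those fragments to $\NC1$. It is precisely the interaction of $\dna$ and $\at$ nested inside $\Box$ that breaks this trivialisation: $\dna z$ captures the universally quantified successor as a state variable, and $\at$ subsequently allows us to compare it with other named states. The construction of $\beta(z)$ therefore has to exploit both operators essentially. The remaining verification splits into showing (i) that every \emph{yes}-instance admits a strictly increasing embedding of the chain into $\Nat$ which satisfies $\varphi_{G,s,t}$, and (ii) that on a \emph{no}-instance every candidate assignment violates either some edge gadget or the query gadget. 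Since $\varphi_{G,s,t}$ is clearly built from the input by a logtime-uniform $\AC{0}$-circuit family, this yields a valid $\leqcd$-reduction.
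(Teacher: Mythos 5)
You pick the right source problem (\ORD, the same one the paper reduces from), but the heart of your reduction is missing and cannot be supplied. Everything hinges on the subformula $\beta(z)$, a monotone $\MHL(\Box,\dna,\at)$-formula that is supposed to hold exactly at the states strictly after $g(x_v)$, and you explicitly defer its construction. No such $\beta$ exists. By Lemma~\ref{lem:box_lemma_Bda_aux} (the collapse lemma underlying the matching upper bound, Theorem~\ref{thm:N-Bda-in-L}), if $g,i\models\beta$ and $h,j$ is any pair such that $h$ binds to $j$ every free variable that $g$ binds to $i$ and preserves all identifications made by $g$, then $h,j\models\beta$. Starting from an assignment with $g(z)=i$ and $g(x_v)<i$ and passing to $h$ with $h(z)=j$ and $h(x_v)>j$ (other free variables rearranged so that the two conditions hold, which is always possible when $g(x_v)\neq i$) shows that $\beta$ cannot separate ``$x_v$ lies strictly before the current state'' from ``$x_v$ lies strictly after it''; the fragment can test equality with a named state (as in $\at_z x_v$) but not relative order. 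For the same reason your global strategy of forcing every satisfying assignment to embed the chain order-preservingly into $\Nat$ is unattainable: by Theorem~\ref{thm:canonical_assignment_Bda}, every satisfiable $\MHL(\Box,\dna,\at)$-formula is already satisfied by the canonical assignment mapping all free variables to $0$, so no satisfiable formula of this fragment can enforce strict inequalities between named states, and your edge gadgets therefore cannot do the work you assign to them on no-instances.

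The paper's reduction sidesteps order in $\Nat$ entirely. It introduces a state variable per vertex, binds all of them to the initial state, binds $s$ via $\Box\dna s$ to a successor state, and then propagates ``being collected at the state of $s$'' along the chain by the $n$-fold repetition of the rebinding prefix composed of the pieces $\at_{v_k}\dna v_l$ for $(v_k,v_l)\in S$ with $v_l\neq s$; the final conjunct $\at_s t$ succeeds iff $t$ has been dragged to the state of $s$, i.e.\ iff $s\leqslant_S t$. The computation thus takes place in the assignment, by iterated rebinding over essentially two states, rather than in the frame; this is the mechanism to adopt, since embedding-based gadgetry is ruled out by the collapse results above.
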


\begin{proof}
This proof is very similar to the proof of Theorem~3.3.\ in \cite{MMSTWW09}. 
We give a reduction from the problem \emph{Order between Vertices} (\ORD)
which is known to be $\L$-complete \cite{etessami:1997} and defined as follows.

 \problemdef%
 	{$\ORD$}
 	{A finite set of vertices $V$, a successor-relation $S$ on $V$, and two vertices $s,t\in V$.}
 	{Is $s\leqslant_S t$, where $\leqslant_S$ denotes the unique total order induced by $S$ on $V$?}
	
Notice that $(V,S)$ is a directed line-graph. Let $(V,S,s,t)$ be an instance of $\ORD$. 
We construct an $\MHL(\Box,\dna,\at)$-formula $\varphi$ that is satisfiable if and only if $s\leqslant_S t$. We use $V=\{v_0,v_1,\ldots,v_n\}$ as state variables. 
The formula $\varphi$ consists of three parts. 
The first part binds all variables except $s$ to one state and the variable $s$ to a successor of this state. 
The second part of $\varphi$ binds a state variable $v_l$ to the state labeled by $s$ iff $s\leqslant_S v_l$. 
Let $\alpha$ denote the concatenation of all $\at_{v_{k}}\dna v_{l}$ with $(v_k,v_l)\in S$ and $v_l\not=s$, and $\alpha^n$ denotes the $n$-fold concatenation of $\alpha$. 
Essentially, $\alpha^n$ uses the assignment to collect eventually all $v_i$ with $s\leqslant_S v_i$ in the state labeled $s$. 
The last part of $\varphi$ checks whether $s$ and $t$ are bound to the same state after this procedure. 
That is,
$
	\varphi = \dna v_0.\dna v_1.\dna v_2. \cdots \dna v_n. \Box \dna s. ~ \alpha^{n} ~ \at_s t.
$
To prove the correctness of our reduction, we show that $\varphi$ is satisfiable if and only if $s\leqslant_S t$.

Assume $s\leqslant_S t$.
For an arbitrary assignment $g$, one can show inductively that $g,0\models \dna v_0.\dna v_1. \cdots \dna v_n. \Box \dna s. ~ \alpha^i ~ \at_s r$
for $i=0,1,\ldots,n$ and for all $r$ that have distance $i$ from $s$.
Therefore it eventually holds that $g,0\models \varphi$.
For $s\not\leqslant_S t$ we show that $g,n\not\models\varphi$ for any assignment $g$ and natural number $n$.
Let $g_0$ be the assignment obtained from $g$ after the bindings in the prefix $\dna v_0.\dna v_1. \cdots \dna v_n. \Box \dna s$ of $\varphi$, 
and let $g_i$ be the assignment obtained from $g_0$ after evaluating the prefix of $\varphi$ up to and including $\alpha^i$. 
It holds that $g_i(s)\not=g_i(t)=0$ for all $i=0,1,\ldots,n$. 
This leads to $g_{n},0\not\models \at_s t$ and therefore $g,0\not\models\varphi$.
\end{proof}

For the upper bound, we establish a characterisation of the satisfaction relation
that assigns a \emph{unique} assignment and state of evaluation to every subformula
of a given formula $\varphi$. Using this new characterisation, we devise a
decision procedure that runs in logarithmic space and consists of two steps:
it replaces every occurrence of any state variable $x$ in $\varphi$ with
1 if its state of evaluation agrees with that of its $\dna x$-superformula,
and with 0 otherwise; it then removes all $\Box$-, $\dna$- and $\at$-operators
from the formula and tests whether the resulting Boolean formula is valid.


\begin{theorem}
  \label{thm:N-Bda-in-L}
  $\MSaT[\Nat](\Box,\dna,\at)$ is in \LOGSPACE.
\end{theorem}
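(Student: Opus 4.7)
The plan is to follow the sketch given just before the theorem: characterise satisfiability via a canonical evaluation that assigns a unique state of evaluation to every subformula occurrence of $\varphi$, and then show that this characterisation can be decided in logarithmic space.

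For a given formula $\varphi$ and every position $p$ of a subformula occurrence in $\varphi$, I define a canonical evaluation state $\sigma(p) \in \Nat$ together with a canonical partial assignment by induction on the path from the root: $\sigma(\epsilon) = 0$; $\sigma$ is propagated unchanged through $\land$, $\lor$ and $\dna x$, with the current assignment extended by $x \mapsto \sigma(p)$ at a $\dna x$ at position $p$; at $\at_x\alpha$ the child gets state $\sigma(p.1) := \sigma(p_x)$, where $p_x$ denotes the position of the $\dna x$-binder of $x$; and at $\Box\alpha$ the child is evaluated at a canonical \emph{fresh} state, larger than every other state appearing in the evaluation. The canonical model associated with $\varphi$ uses these states and the induced assignment.

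The key lemma I would prove is that $\varphi \in \MSaT[\Nat](\Box, \dna, \at)$ iff the monotone Boolean formula $\hat\varphi$, obtained from $\varphi$ by (i) replacing every variable occurrence $x$ at position $p$ by $\top$ if $\sigma(p) = \sigma(p_x)$ and by $\bot$ otherwise, and (ii) erasing all $\Box$, $\dna$ and $\at$ operators, evaluates to true. A straightforward induction shows that the canonical model satisfies $\varphi$ iff $\hat\varphi$ is true; the remaining direction, that any satisfiable $\varphi$ is already satisfied by the canonical model, follows from monotonicity, since the fresh-state interpretation of $\Box$ is the hardest case (successors coinciding with already-bound variables only make more atoms true). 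With the characterisation in hand, the logspace procedure proceeds in two phases. First, for each variable occurrence at position $p$ with binder at $p_x$, decide whether $\sigma(p) = \sigma(p_x)$ by tracing back from $p$ and from $p_x$ upwards in the formula tree, at every $\at_y$-step jumping to the binder of $y$, until reaching either the root (origin $0$) or the child of a $\Box$ (origin the unique position of that $\Box$). Two positions agree in $\sigma$ iff their traces terminate at the same origin; each trace needs only a constant number of position pointers into $\varphi$, hence logarithmic space. Second, evaluate the resulting monotone Boolean formula, which is in $\NC1 \subseteq \L$ by \cite{bus87}.

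The main obstacle will be establishing the correctness of the canonical characterisation, especially the treatment of $\Box$. One has to argue carefully that, for a monotone $\alpha$ over $\Nat$, satisfaction of $\Box\alpha$ is equivalent to satisfaction of $\alpha$ at one canonical fresh successor; this uses the monotonicity of $\alpha$ together with the fact that $\Nat$ always admits arbitrarily large successors. A further subtlety, which also explains the $\L$-hardness, is that chains of $\at$-bindings can force the trace-back to follow arbitrarily many $\dna$-$\at$ jumps; this is precisely what the logspace trace procedure handles, since each step strictly moves within the finite formula tree and thus terminates in at most $|\varphi|$ steps.
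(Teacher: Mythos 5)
Your proposal is correct and follows essentially the same route as the paper's proof: a unique canonical state of evaluation for every subformula occurrence with $\Box$ interpreted as a jump to a fresh successor (justified by a monotone transfer lemma, which you rightly flag as the main technical burden and which is where the paper invests most of its work), reduction of satisfiability to validity of the monotone Boolean formula obtained by replacing each variable occurrence according to whether its evaluation state coincides with that of its binder, and finally evaluation in $\NC1 \subseteq \L$. Your trace-to-a-common-origin test is just an equivalent repackaging of the paper's tail-recursive ``last jumping operator'' subroutine (it recurses through $\at$-jumps to binders exactly as your trace does, terminating within $|\varphi|$ steps using constantly many position pointers), so the logspace analysis matches as well.
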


\Reportout{The proof is technically involved and can be found in the Technical Report version of this paper~\cite{GMMSTW11}.}
\Reportin{The proof can be found in Appendix \ref{app:N-Bda-in-L}.}

\section{The intermediate cases: \texorpdfstring{$\NP$ results}{NP results}}
\noindent
After we have seen that all fragments without $\Diamond$ have an easy satisfiability problem,
we show that $\Diamond$ together with the use of nominals 
makes the satisfiability problem $\NP$-hard.
Recall that, owing to the presence of nominals, $\MHL(\Diamond)$ is not just modal logic with the $\Diamond$-operator.
The absence of $\dna$ makes assignments superfluous: we write $K,w \models \varphi$ instead of $K,g,w \models \varphi$.

\begin{lemma} \label{lem:M-NP-lb}
$\MSaT[\lin](\Diamond)$ and $\MSaT[\Nat](\Diamond)$ both are $\NP$-hard.
\end{lemma}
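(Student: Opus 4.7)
The plan is to give a single polynomial-time reduction from 3SAT to $\MSaT[\mathfrak{F}](\Diamond)$ that works uniformly for $\mathfrak{F} \in \{\lin,\Nat\}$, which suffices for $\NP$-hardness in both cases. Given a 3SAT instance $\psi = C_1 \wedge \cdots \wedge C_m$ over variables $x_1,\ldots,x_n$, I will introduce fresh nominals $T$ and $F$ playing the role of the two Boolean truth values, together with one nominal $y_i$ per propositional variable. The intended correspondence is that $\eta(y_i) = \eta(T)$ encodes $x_i$ being true and $\eta(y_i) = \eta(F)$ encodes $x_i$ being false; the 3-valued ``unset'' case is ruled out by forcing $y_i$ to coincide with one of $T,F$.

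The output formula $\varphi$ consists of three parts. First, to guarantee $T$ and $F$ denote distinct states I use
\[
\varphi_{\neq} \;:=\; \Diamond(T \wedge \Diamond F) \;\vee\; \Diamond(F \wedge \Diamond T),
\]
which says that $T$ strictly precedes $F$ in the linear order or vice versa. Second, for each $i$ a choice formula $\varphi_i := \Diamond(T \wedge y_i) \vee \Diamond(F \wedge y_i)$ forces $y_i$ to be placed at the same state as $T$ or as $F$; since nominals have singleton interpretations, once $\varphi_{\neq}$ forces $\eta(T) \neq \eta(F)$ these two possibilities become mutually exclusive. Third, for each clause $C_j = \ell_{j,1} \vee \ell_{j,2} \vee \ell_{j,3}$ I translate a positive literal $x_i$ as $\Diamond(T \wedge y_i)$ and a negative literal $\neg x_i$ as $\Diamond(F \wedge y_i)$, and take $\varphi_{C_j}$ to be the disjunction of the three translations. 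The reduction is $\varphi := \varphi_{\neq} \wedge \bigwedge_{i=1}^n \varphi_i \wedge \bigwedge_{j=1}^m \varphi_{C_j}$, which is clearly in $\MHL(\Diamond)$ (no $\neg$ is used) and is computable in polynomial time.

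For correctness, in the forward direction a satisfying assignment $\sigma$ of $\psi$ will be witnessed by the three-state linear frame with states $0 < 1 < 2$, $\eta(T) := 1$, $\eta(F) := 2$, and $\eta(y_i) := 1$ or $2$ according to $\sigma(x_i)$; evaluation at state $0$ then verifies every conjunct. This model lies in $\lin$ and, padded with extra states beyond $2$, embeds in $\Nat$. For the backward direction, from any witness $K,w \models \varphi$ the conjunct $\varphi_{\neq}$ supplies $\eta(T) \neq \eta(F)$, so $\sigma(x_i) :=$ true iff $\eta(y_i) = \eta(T)$ is a well-defined total assignment; each $\varphi_{C_j}$ then has a true disjunct, which by the translation corresponds to a satisfied literal of $C_j$ under $\sigma$.

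The main conceptual obstacle is that the absence of negation seems to forbid the crucial exclusivity ``$y_i$ is not at $F$ whenever it is at $T$''; this is sidestepped by exploiting that nominals are functional by definition, so $\eta(T) \neq \eta(F)$ together with the unique assignment of $y_i$ automatically rules out the unwanted case without using $\neg$. A minor concern is that $\varphi_{\neq}$ might fail to force inequality in the monotone setting, but the strictness of $<$ in both $\lin$ and $\Nat$ turns $\Diamond(T \wedge \Diamond F)$ into a genuine ``$T$-strictly-before-$F$'' statement, so $\varphi_{\neq}$ really does enforce $\eta(T) \neq \eta(F)$.
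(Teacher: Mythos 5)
Your reduction is correct and is essentially the paper's own proof: the paper also reduces from 3SAT using two nominals $i_0,i_1$ as truth values (your $F,T$), a conjunct $\Diamond(i_0\wedge\Diamond i_1)$ forcing them onto distinct states, per-variable choice conjuncts $\Diamond(i_0\wedge x_\ell)\vee\Diamond(i_1\wedge x_\ell)$, and clause translations placing positive literals at the $i_1$-state and negative ones at the $i_0$-state, with the same observation that the functionality of nominals replaces negation. Your only deviations (a symmetric distinctness disjunction and distributing $\Diamond$ over the clause disjunction) are semantically immaterial.
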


\begin{proof*}{Proof\Reportout{ sketch}.}
We reduce from $\dSAT$. 
Let $\varphi=c_1 \wedge \ldots \wedge c_n$ be an instance of $\dSAT$ 
with clauses $c_1, \dots, c_n$ (where $c_i=(l_1^i\vee l_2^i\vee l_3^i)$ for literals $l^i_j$) 
and variables $x_1, \dots, x_m$.
We define the transformation as 
  \[
    f\colon \varphi \mapsto 
     \Diamond(i_0 \wedge \Diamond i_1) ~\wedge~
     \Bigg(\bigwedge_{\ell=1}^m \Diamond (i_0 \wedge x_{\ell}) \vee \Diamond (i_1 \wedge x_{\ell})\Bigg) \wedge 
     h(\varphi),
  \] 
  where $i_0,i_1$ and all $x_{\ell}$ are nominals, and
  the function $h$ is defined as follows: let $l_k^j$ be a literal in clause $c_j$, then 
  \begin{align*}
    h(l_k^j)& :=  \begin{cases}
                      (i_1 \wedge x), \text{ if } l_k^j=x \\
                      (i_0 \wedge x), \text{ if } l_k^j=\neg x
                    \end{cases} \\[2px]
    h(c_j)  & := \Diamond( h(l_1^j) \vee h(l_2^j) \vee h(l_3^j)),\quad \text{where } c_j = (l_1^j \vee l_2^j \vee l_3^j); \\[4px]
    h(c_1 \wedge \dots \wedge c_n) 
            & := h(c_1) \wedge \dots \wedge h(c_n).
  \end{align*}

Notice that $f$ turns variables in the $\dSAT$ instance into \emph{nominals} in the $\MSaT[\lin](\Diamond)$ instance.
The part  $\Diamond(i_0 \wedge \Diamond i_1)$ enforces the existence of two successors $w_1$ and $w_2$ of the state satisfying $f(\varphi)$. 
The part $\bigwedge_{\ell=1}^m \Diamond (i_0 \wedge x_{\ell}) \vee \Diamond (i_1 \wedge x_{\ell})$  simulates the assignment of the variables in $\varphi$,
enforcing that each $x_{\ell}$ is true in either $w_1$ or $w_2$. 
The part $h(\varphi)$ then simulates the evaluation of $\varphi$ on
the assignment determined by the previous parts.
With the following claim
 $\NP$-hardness of $\MSaT[\lin](\Diamond)$ follows.

\begin{claim}
  \label{claim:correctness_red_3SAT_lin-MSAT(D)}
  $\varphi \in \dSAT$ if and only if $h(\varphi) \in \MSaT[\lin](\Diamond)$.
\end{claim}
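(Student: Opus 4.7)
The plan is to prove both directions of the equivalence by making the three conjuncts of the translated formula do exactly the work the sketch attributes to them: $\Diamond(i_0 \wedge \Diamond i_1)$ sets up two distinct successor states, the middle conjunct forces each variable-nominal $x_\ell$ to coincide with one of them, and $h(\varphi)$ transcribes the clauses against the resulting assignment. (I will treat the claim as concerning the full reduction formula $f(\varphi)$, since on just $h(\varphi)$ alone one could collapse all nominals into a single state and satisfy the formula trivially.) Throughout I work in a linear frame and exploit the fact that a nominal denotes exactly one state.

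For the forward implication, I assume a satisfying assignment $\beta\colon \{x_1,\dots,x_m\}\to\{0,1\}$ of $\varphi$ and construct a three-element linear model $w < u_0 < u_1$ with $\eta(i_b) = \{u_b\}$ for $b\in\{0,1\}$ and $\eta(x_\ell) = \{u_{\beta(x_\ell)}\}$. I then verify each conjunct of $f(\varphi)$ at $w$: the frame-setup conjunct holds via $u_0$ and its successor $u_1$; each disjunct $\Diamond(i_0 \wedge x_\ell) \vee \Diamond(i_1 \wedge x_\ell)$ holds by the placement of $x_\ell$; and for each clause $c_j$ a literal $l_k^j$ made true by $\beta$ translates to $i_b \wedge x$ for a $b$ that witnesses $h(c_j)$ at the corresponding $u_b$.

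For the backward implication, I start from a satisfying linear model and read off an assignment. From $\Diamond(i_0 \wedge \Diamond i_1)$ I obtain successors $u_0, u_1$ of $w$ with $w < u_0 < u_1$, distinct by irreflexivity of $<$; these are the unique states denoted by $i_0$ and $i_1$. From each disjunct $\Diamond(i_0 \wedge x_\ell) \vee \Diamond(i_1 \wedge x_\ell)$, since $x_\ell$ is a nominal, its denotation must coincide with $u_0$ or $u_1$, giving a well-defined $\beta(x_\ell)\in\{0,1\}$. From each $h(c_j) = \Diamond(h(l_1^j) \vee h(l_2^j) \vee h(l_3^j))$ I extract a successor of $w$ satisfying some $h(l_k^j)$; a case split on whether $l_k^j$ is positive (forcing $x$ to sit at $u_1$) or negated (forcing $x$ to sit at $u_0$) shows that the literal is true under $\beta$, so $\beta$ satisfies $\varphi$.

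I do not expect major obstacles: the translation is tight enough that every conjunct of the formula corresponds directly to one ingredient of the propositional semantics. The only delicate spot is in the backward direction, where one must argue that nominal uniqueness together with irreflexivity genuinely separates $u_0$ from $u_1$, so that the induced $\beta$ is unambiguous; once that is pinned down the verification is a straightforward clause-by-clause case analysis. Because the construction uses neither density nor any property of the linear order beyond those shared with $(\Nat,<)$, the same reduction yields $\NP$-hardness of $\MSaT[\Nat](\Diamond)$ simultaneously.
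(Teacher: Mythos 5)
Your proof is correct and takes essentially the same route as the paper's: the same three-state linear model $w < u_0 < u_1$ with each variable-nominal placed at $u_0$ or $u_1$ according to $\beta$ in the forward direction, and the same extraction of a well-defined assignment from $\Diamond(i_0 \wedge \Diamond i_1)$ and the middle conjunct (using uniqueness of nominal denotations and irreflexivity) followed by a clause-by-clause literal check in the backward direction. Your reading of the claim as being about the full reduction formula $f(\varphi)$ rather than $h(\varphi)$ alone is also exactly what the paper's own proof implicitly uses.
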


\Reportin{\begin{proofofclaim} 
We first show that $h(\varphi) \in \MSaT[\lin](\Diamond)$ implies $\varphi \in \dSAT$.
If $K,w_0 \models h(\varphi)$ with $K=(W,<,\eta)$, then the following holds.
Let $w_1=\eta(i_0)$, $w_2=\eta(i_1)$, and
\begin{itemize}
        \itemsep0pt
        \parsep0pt
        \item $\{w_0,w_1,w_2\} \subseteq W$ with $w_0,w_1,w_2$ pairwise different;
        \item $w_0<w_1<w_2$;
        \item for all $x_j$ with $1 \leqslant j \leqslant m$ : $\eta(x_j) \subseteq \{w_1,w_2\}$.
\end{itemize}
We build a propositional logic assignment $\beta = (\beta_1 \dots \beta_m)$ that satisfies $\varphi$,
where $\beta_i \in \{\bot,\top\}$ is the truth value for $x_i$, as follows.
$\beta_j=\bot$ if $g(i_0) = g(x_j)$, and $\beta_j=\top$ if $g(i_1) = g(x_j)$.
From the construction of $h(\varphi)$, it clearly follows that $\beta$ satisfies $\varphi$.
\par
For the converse direction, suppose that $\varphi$ is satisfied by the propositional logic assignment $\beta = (\beta_1 \dots \beta_m)$. 
We construct a linear model $K:=(W,<,\eta)$ containing a state $w$ such that $K,w \models h(\varphi)$.

{\centering
  \parbox{.5\textwidth}{%
    \begin{align*}
      W &:= \{w,w_0,w_1\}                 \\
      < &:~~~ w<w_0<w_1 \\
    \end{align*}%
  }
  \parbox{.4\textwidth}{%
    \begin{align*}
      \eta(i_j) &:= w_j \text{ for } j \in \{0,1\} \\
      \eta(x_j) &:= \begin{cases}
                      w_0, \text{ if } \beta_j=\bot \\
                      w_1, \text{ if } \beta_j=\top
                    \end{cases}
    \end{align*}%
  }
}

It follows from the construction of $K$ that $K,w \models h(\varphi)$. 
The conjunct $h(\varphi)$ is of the form
$
  (h(l_1^1) \vee h(l_1^2) \vee h(l_1^3)) \wedge \dots \wedge (h(l_n^1) \vee h(l_n^2) \vee h(l_n^3)).
$
Hence, under $\beta$, at least one literal in every clause evaluates to true. 
The variable in this literal satisfies the same clause in $h(\varphi)$. 
Hence every clause in $h(\varphi)$ is satisfied in $w$ in $K$. Therefore, $K,w \models h(\varphi)$.
\end{proofofclaim}%
} 

Using this claim,
 $\NP$-hardness of $\MSaT[\lin](\Diamond)$ follows.
It is straightforward to show that $\dSAT$ reduces to $\MSaT[\Nat](\Diamond)$ using the same reduction. 
\end{proof*}

\noindent
We will now establish \NP-membership of the problems 
$\MSaT[\Fclass{F}](\Diamond,\Box,\dna)$,
$\MSaT[\Fclass{F}](\Diamond,\Box,\at)$, and
$\MSaT[\Fclass{F}](\Diamond,\dna,\at)$ for $\Fclass{F} \in \{\lin,\N\}$.
For the first two, this follows from the literature, see Theorem \ref{theo:known_for_and_or_neg} \ref{it:known_for_and_or_neg_over_lin_N}.
For the third, we observe that all modal and hybrid operators in a formula $\varphi$ from the fragment $\MHL(\Diamond,\dna,\at)$ 
are translatable into FOL by the standard translation using no universal quantifiers. 
The existential quantifiers introduced by the binder can be skolemised away,
which corresponds to removing all binding from $\varphi$ and replacing each state variable with a fresh nominal.
The correctness of this translation is proven in \cite{cafr05}.
Hence, $\MSaT[\Fclass{F}](\Diamond,\dna,\at)$ polynomial-time reduces to $\MSaT[\Fclass{F}](\Diamond,\at)$.

\begin{lemma} \label{lem:M(D!@)-NP-ub}
$\MSaT[\lin](\Diamond,\dna,\at)$ and $\MSaT[\N](\Diamond,\dna,\at)$ are in $\NP$.
\end{lemma}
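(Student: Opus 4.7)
The plan is to reduce $\MSaT[\Fclass{F}](\Diamond,\dna,\at)$ in polynomial time to $\MSaT[\Fclass{F}](\Diamond,\at)$ for each $\Fclass{F}\in\{\lin,\Nat\}$, and then invoke Theorem~\ref{theo:known_for_and_or_neg}\,(\ref{it:known_for_and_or_neg_over_lin_N}), which gives \NP-membership of $\SaT[\Fclass{F}](\Diamond,\at)$ (and hence of its monotone fragment). The crucial point is that in the fragment $\MHL(\Diamond,\dna,\at)$ there is no universal operator at all: neither $\Box$ nor negation is available, so the standard translation into first-order logic produces a purely existential formula. Both the modal diamond and the downarrow binder correspond to existential quantifiers, which permits Skolemisation without the usual proliferation into function symbols, since every existential quantifier is in the scope only of other existentials.

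Concretely, given $\varphi\in\MHL(\Diamond,\dna,\at)$, apply the following syntactic transformation. Assuming (as allowed by the notational conventions) that every state variable is bound at most once, for every bound variable $x$ introduce a fresh nominal $i_x$, delete every occurrence of $\dna x.$, and replace every (previously bound) occurrence of $x$ and of $\at_x$ by $i_x$ and $\at_{i_x}$ respectively; free state variables can either be left untouched or treated the same way. Call the resulting formula $\varphi'\in\MHL(\Diamond,\at)$. The reduction is clearly polynomial-time (indeed \AC0) computable.

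To prove correctness, I would show $\varphi\in\MSaT[\Fclass{F}](\Diamond,\dna,\at)$ iff $\varphi'\in\MSaT[\Fclass{F}](\Diamond,\at)$. For the forward direction, take a satisfying frame $F\in\Fclass{F}$, state $w$ and assignment $g$ with $F,g,w\models\varphi$; walk through the evaluation and, whenever $\dna x.$ is encountered at some state $u$, extend the labelling by setting $\eta(i_x):=\{u\}$. A straightforward induction on the structure of $\varphi$ (using monotonicity: no subformula ever needs $i_x$ to fail somewhere) yields $F,\eta,w\models\varphi'$. Conversely, given a model of $\varphi'$, use $\eta(i_x)$ to \emph{define} the value that the $\dna$-binder would have assigned to $x$; because the fragment lacks $\Box$ and $\neg$, there is no subformula whose truth forces $x$ to be bound at any state other than the one where the $\dna x.$ is actually evaluated in the original semantics, and an analogous induction gives $F,g,w\models\varphi$.

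The only mild obstacle is making the inductive argument precise in the presence of nested binders and $\at$ operators crossing $\dna$-scopes; this is where the assumption that each variable is bound at most once, together with the absence of $\Box$ and $\neg$, is used to ensure that the skolemised nominals play the same role as the $\dna$-bindings. The argument is essentially the standard translation correctness result of~\cite{cafr05} restricted to the existential fragment, as already noted in the excerpt, so no new technology is required.
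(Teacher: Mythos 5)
Your overall strategy---Skolemise the binder into fresh nominals, reduce to $\MSaT[\Fclass{F}](\Diamond,\at)$, and apply Theorem~\ref{theo:known_for_and_or_neg}\,\ref{it:known_for_and_or_neg_over_lin_N}---is exactly the route the paper takes (the paper simply cites \cite{cafr05} for the correctness of this elimination). However, the concrete transformation you spell out is not a correct reduction: deleting $\dna x.$ and renaming $x$ to $i_x$ loses the fact that the binder anchors $x$ to the \emph{current} state. In first-order terms, the standard translation of $\dna x.\alpha$ is $\exists x\,(x=z \wedge \cdots)$; Skolemising that existential keeps the guard ``$i_x = z$'', which in hybrid syntax is the conjunct $i_x$ at the binding position. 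Your $\varphi'$ drops this conjunct, and equisatisfiability then fails: take $\varphi = \dna x.\Diamond x$, which is unsatisfiable over $\lin$ and over $\Nat$ because these frames are irreflexive, while your $\varphi' = \Diamond i_x$ is satisfiable. So the map sends a negative instance to a positive one, and the sentence in your converse direction claiming that ``no subformula forces $x$ to be bound at any state other than the one where $\dna x.$ is evaluated'' is precisely where the argument breaks: the issue is not that something forces $x$ elsewhere, but that the original semantics forces $x$ to \emph{equal} the evaluation state, whereas nothing in $\varphi'$ forces $\eta(i_x)$ to be that state.

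The repair is small: translate $\dna x.\alpha$ to $i_x \wedge \alpha'$ (with $x$ and $\at_x$ renamed to $i_x$ and $\at_{i_x}$ inside $\alpha'$), keeping the rest of your construction. With the guard in place your forward induction is unchanged, and in the converse direction the state witnessing the subformula $i_x \wedge \alpha'$ must be exactly $\eta(i_x)$, so setting $g(x) := \eta(i_x)$ makes the $\dna$-semantics match; the absence of $\Box$ and $\neg$ guarantees that each binder occurrence is evaluated at a single witnessing state, and the ``bound at most once'' convention keeps the fresh nominals unambiguous. Alternatively, you can do as the paper does and invoke the correctness of this guarded Skolemisation directly from \cite{cafr05}; either way, the rest of your argument (polynomial-time computability and the appeal to the known \NP\ upper bound for $\SaT[\Fclass{F}](\Diamond,\at)$) is fine.
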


From the lower bounds in Lemma \ref{lem:M-NP-lb}
and the upper bounds  in
Theorem \ref{theo:known_for_and_or_neg} \ref{it:known_for_and_or_neg_over_lin_N}
and Lemma \ref{lem:M(D!@)-NP-ub}, we obtain the following theorem.

\begin{theorem} \label{thm:lin,N-M-NP-compl}
Let 
$\{ \Diamond \} \subseteq O$, and 
$O \subsetneq \{\Diamond, \Box, \dna, \at \}$.
Then $\MSaT[\lin](O)$ and $\MSaT[\N](O)$ are $\NP \emph{-complete}$.
\end{theorem}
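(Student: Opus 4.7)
The plan is to assemble this theorem from ingredients already in place: the lower bound from Lemma \ref{lem:M-NP-lb} and the upper bounds from Theorem \ref{theo:known_for_and_or_neg}(\ref{it:known_for_and_or_neg_over_lin_N}) together with Lemma \ref{lem:M(D!@)-NP-ub}.

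For the lower bound I would simply observe that Lemma \ref{lem:M-NP-lb} already establishes NP-hardness of $\MSaT[\mathfrak{F}](\Diamond)$ for $\mathfrak{F}\in\{\lin,\Nat\}$. Since $\MHL(\Diamond)\subseteq\MHL(O)$ whenever $\Diamond\in O$, the identity map is a trivial reduction witnessing that NP-hardness propagates to every larger operator set $O$.

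For the upper bound I would do a small case split on which operator is missing from $O$. By hypothesis at least one of $\Box,\dna,\at$ is absent from $O$. If $\dna\notin O$, then $O\subseteq\{\Diamond,\Box,\at\}$; since every monotone formula is also a classical formula and the classical $\Box$ can be eliminated via $\neg\Diamond\neg$, we obtain $\MSaT[\mathfrak{F}](O)\leq\SaT[\mathfrak{F}](\Diamond,\at)$, and the latter is in \NP by Theorem \ref{theo:known_for_and_or_neg}(\ref{it:known_for_and_or_neg_over_lin_N}). Symmetrically, if $\at\notin O$, then $\MSaT[\mathfrak{F}](O)\leq\SaT[\mathfrak{F}](\Diamond,\dna)\in\NP$. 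The only remaining configuration is $\Box\notin O$ while $\dna,\at\in O$, which forces $O=\{\Diamond,\dna,\at\}$. In this final case I would invoke Lemma \ref{lem:M(D!@)-NP-ub} directly.

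The only subtlety worth highlighting is that this last case is genuinely different: the reduction to the classical satisfiability problem is not helpful here, because $\SaT[\mathfrak{F}](\Diamond,\dna,\at)$ is \emph{non-elementary} by Theorem \ref{theo:known_for_and_or_neg}(\ref{nonelem}), so monotonicity actually does work to collapse the complexity, and Lemma \ref{lem:M(D!@)-NP-ub} is what carries this case. No further work beyond stitching these citations together is needed.
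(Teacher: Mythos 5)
Your proposal is correct and matches the paper's own argument: the paper likewise combines the lower bound of Lemma~\ref{lem:M-NP-lb} with the upper bounds from Theorem~\ref{theo:known_for_and_or_neg}~\ref{it:known_for_and_or_neg_over_lin_N} (which cover the fragments missing $\dna$ or $\at$, via the implicit presence of $\Box$ as $\neg\Diamond\neg$ in the classical problems) and Lemma~\ref{lem:M(D!@)-NP-ub} for the remaining case $O=\{\Diamond,\dna,\at\}$. Your explicit case split and the observation about why $\{\Diamond,\dna,\at\}$ needs separate treatment simply spell out what the paper leaves implicit.
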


In addition to the NP-membership of the fragments captured by Theorem \ref{thm:lin,N-M-NP-compl}, we are interested in their model-theoretic properties.
We show that these logics enjoy a kind of linear-size model property, precisely a quasi-quadratic size model property: over the natural numbers, every satisfiable formula
has a model where two successive nominal states have at most linearly many intermediary states, and the states behind the last
such state are indistinguishable. 
This property allows for an alternative worst-case decision procedure for satisfiability
that consists of guessing a linear representation of a model of the described form and symbolically model-checking the input formula
on that model. Over general linear frames, which may have dense intervals, we formulate the model property in a more general way
and prove it using additional technical machinery to deal with density. However, the result then carries over to the rationals,
where we are not aware of any upper complexity bound in the literature.

In \cite{sicl85}, Sistla and Clarke showed a variation of the linear-size model property for LTL(F), which corresponds to $\HL(\Diamond,\Box)$ over \Nat:
whenever $\varphi \in \HL(\Diamond,\Box)$ is satisfiable over \Nat, then it is satisfiable in the initial state of a model over \Nat
which has a linear-sized prefix \textsf{init} and a remainder \textsf{final} such that \textsf{final} is maximal with respect to the property
that every type (set of all atomic propositions true in a state) occurs infinitely often, and \textsf{final} contains only linearly many types.
Such a structure can be guessed in polynomial time, represented in polynomial space and model-checked in polynomial time.
While it is straightforward to extend Sistla and Clarke's proof to cover nominals and the \at\ operator,
it will not go through if density is allowed (frame class $\lin$).

We establish that $\MHL(\Diamond,\Box,\at)$ over $\lin$ has a quadratic size model property, and we subsequently show how to extend the result to the other fragments
from Theorem \ref{thm:lin,N-M-NP-compl} and how to restrict them to \Nat.

\begin{theorem}
  \label{thm:QLMP_specific}
  $\MHL(\Diamond,\Box,\at)$ has the quasi-quadratic size model property
  with respect to $\lin$ and $\Nat$.
\end{theorem}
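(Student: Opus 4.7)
My plan is a selection/filtration argument: starting from any satisfying model $K=(W,<,\eta)$ of $\varphi$ with $K,w_0\models\varphi$, I would carve out a small submodel that still satisfies $\varphi$ at $w_0$. The small submodel will consist of an initial part of quadratic size together with an indistinguishable tail of states all satisfying the same set of subformulas of $\varphi$. Two monotonicity-friendly facts underlie the argument. First, since $\varphi$ has no negation, removing successors can only make $\Box$-subformulas easier, so we only need to preserve witnesses for the $\Diamond$- and $\at$-requirements. Second, in a linear order, a single state satisfying $\beta$ witnesses $\Diamond\beta$ simultaneously for every earlier state, so one witness per $\Diamond$-subformula suffices to cover a whole down-closed set of states.

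I would first form the backbone $A_0$ containing $w_0$ and $\eta(x)$ for every nominal $x$ occurring in $\varphi$, giving $|A_0|=O(|\varphi|)$ and handling all $\at$-subformulas. Then I identify the \emph{stable type} $T$, namely the set of subformulas of $\varphi$ that hold at all states of $K$ sufficiently far to the right. Such a $T$ exists because each $\Box\beta$-set is up-closed in $<$, each $\Diamond\beta$-set is down-closed, and there are only polynomially many subformulas, so all truth values eventually stabilise. Then I saturate: in each round, for every $\Diamond\beta$ and every $w$ in the current set with $K,w\models\Diamond\beta$, I add a single witness $v>w$ satisfying $\beta$ in $K$, chosen large enough to witness all such $w$ at once. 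Saturation terminates in $O(|\varphi|)$ rounds because each new witness either lies in the stable region (already covered by the tail) or introduces one of the polynomially many ``witness types'' still missing, yielding $|A|=O(|\varphi|^2)$. Finally, I append an infinite tail (for $\Nat$) or a short stable interval (for $\lin$) of states all realising $T$, placed after every element of $A$.

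Correctness reduces to a structural induction on subformulas: for every $w$ in the restricted model $K'$ (built from $A$ together with the tail, inheriting $<$ and $\eta$), $K,w\models\alpha$ implies $K',w\models\alpha$. The $\at$-case uses $\eta(x)\in A$; the $\Diamond$-case uses the witness chosen during saturation; the $\Box$-case uses that every $v\in A$ above $w$ is also a successor of $w$ in $K$ and therefore already satisfies the relevant subformula, while states in the tail realise $T$ by construction. The main obstacle I foresee is handling density in $\lin$: between two backbone states a dense suborder may be present, and one must argue that finitely many representatives suffice for every relevant $\Diamond\beta$-need without breaking some $\Box\beta$. Since $\dna$ is absent, two states in a dense interval are indistinguishable except through the subformulas they satisfy, so a type-based counting argument should close this gap and also yield the quadratic size bound uniformly over $\lin$ and $\Nat$.
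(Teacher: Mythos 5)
There is a genuine gap, and it sits exactly at the point you defer to ``a type-based counting argument'': over $\lin$, no selection of finitely many representatives between two nominal states can work, because density between nominal states can be \emph{forced}. Consider $\varphi_d = i \wedge \Diamond\Diamond j \wedge \Box(j \vee \Diamond\Diamond j)$: any linear model of $\varphi_d$ must end at the state named $j$, and every state strictly between $i$ and $j$ must have another state strictly between it and $j$, so the interval $(\eta(i),\eta(j))$ has no immediate predecessor of $\eta(j)$ and is necessarily infinite. A finite set $A$ of representatives between $i$ and $j$ creates such an immediate predecessor and falsifies $\Box(j \vee \Diamond\Diamond j)$ at $\eta(i)$; appending a tail after $A$ does not help and in fact makes things worse, since tail states placed after $\eta(j)$ are successors of $\eta(i)$ that satisfy neither $j$ nor $\Diamond\Diamond j$. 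This is why the property in the theorem is only \emph{quasi}-quadratic: the small model is allowed to contain a copy of $(0,1)_{\mathbb{Q}}$ between consecutive nominal states, and the paper's proof is built around keeping such dense blocks. Concretely, the paper defines $w \equiv_m w'$ via the distance to the next nominal state truncated at the modal depth $m$, proves that $\equiv_m$-equivalent states agree on all formulas of modal depth at most $m$, collapses finite equivalence classes to points, replaces infinite classes by copies of $(0,1)_{\mathbb{Q}}$ via a surjective partial morphism, and proves a two-way transfer lemma for the resulting reduced model. Your plan of ``finite $A$ plus a tail'' proves a stronger statement that is false over $\lin$.

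A second, related problem is the tail itself. Your monotonicity observation is about \emph{removing} successors; the tail \emph{adds} successors to every element of $A$, and adding successors can only falsify $\Box$-subformulas. You cannot simply declare that the tail states ``realise $T$'': their types are determined by the structure of $K'$, and for types containing nominals (as in $\varphi_d$, or already for $x \wedge \Box y$ satisfied at a maximal state) it is impossible for fresh tail states to realise them, so the $\Box$-case of your induction at states of $A$ breaks down whenever the original model has a maximum or a nominal-dependent stable type. Over $\Nat$ this can be repaired along Sistla--Clarke lines (the paper notes that extending their argument with nominals and $\at$ is straightforward there), but the theorem's content for $\lin$ requires the density machinery described above, which your proposal does not supply.
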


The proof can be found in \Reportout{the Technical Report version of this paper~\cite{GMMSTW11}.}\Reportin{Appendix \ref{app:QLMP_specific}.}

As an immediate consequence, the model property in Theorem \ref{thm:QLMP_specific} carries over to the subfragments
$\MHL(\Diamond,\Box)$, $\MHL(\Diamond,\at)$, $\MHL(\Box,\at)$, $\MHL(\Diamond)$, $\MHL(\Box)$, $\MHL(\at)$, and $\MHL(\emptyset)$.
Moreover,
our arguments in the proofs of Theorems~\ref{lem:lin-M(B!@)-NC1-ub} and \ref{thm:N-Bda-in-L}
can be used to transfer it to $\MHL(\Box,\dna,\at)$. 
Together with the observations that 
\begin{itemize}
  \item
    $\MHL(\Diamond,\dna,\at)$ is no more expressive than $\MHL(\Diamond,\at)$ (see the explanation before Lemma \ref{lem:M(D!@)-NP-ub}), and
  \item
    $\MHL(\Diamond,\Box,\dna)$ is no more expressive than $\MHL(\Diamond,\Box)$ (because, without $\at$, one cannot jump to named states),
\end{itemize}
we obtain the following generalisation of Theorem \ref{thm:QLMP_specific}.
%

\begin{corollary}
  \label{cor:QLMP_general}
  Let $O \subsetneq \{\Diamond,\Box,\dna,\at\}$.
  Then $\MHL(O)$ has the quasi-quadratic size model property
  with respect to $\lin$ and $\Nat$.
\end{corollary}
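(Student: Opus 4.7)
The plan is to exhaust the fifteen proper subsets $O \subsetneq \{\Diamond,\Box,\dna,\at\}$ by reducing each one to a fragment for which the quasi-quadratic size model property has already been established, namely $\MHL(\Diamond,\Box,\at)$ via Theorem~\ref{thm:QLMP_specific}, and $\MHL(\Box,\dna,\at)$ via the one-state model arguments used in Theorems~\ref{lem:lin-M(B!@)-NC1-ub} and \ref{thm:N-Bda-in-L}. The three ingredients collected just above the corollary statement match exactly what is needed.

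First I would handle the case $\dna \notin O$, so $O \subseteq \{\Diamond,\Box,\at\}$. Here $\MHL(O)$ is a syntactic subfragment of $\MHL(\Diamond,\Box,\at)$, so every satisfiable $\MHL(O)$-formula is a satisfiable $\MHL(\Diamond,\Box,\at)$-formula and Theorem~\ref{thm:QLMP_specific} directly yields a quasi-quadratic size model, over both $\lin$ and $\Nat$. This covers seven of the fifteen cases in one stroke.

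Next, for the fragments containing $\dna$ I would single out the three maximal such proper subsets and invoke the expressiveness observations listed before the corollary. For $O = \{\Diamond,\dna,\at\}$, the skolemisation translation recalled before Lemma~\ref{lem:M(D!@)-NP-ub} rewrites any $\varphi \in \MHL(\Diamond,\dna,\at)$ into an equisatisfiable $\varphi' \in \MHL(\Diamond,\at)$, obtained by deleting every $\dna x$ and treating the freed state variable $x$ as a fresh nominal; every model of $\varphi'$ is, via the witnesses chosen for the nominals, a model of $\varphi$ over the same frame, so a quasi-quadratic size model for $\varphi'$ obtained from the first step transfers verbatim to $\varphi$. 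For $O = \{\Diamond,\Box,\dna\}$, the absence of $\at$ prevents $\dna$ from ever being used to return to a bound state, so the binders and their variables can be deleted to produce an equivalent formula in $\MHL(\Diamond,\Box)$, reducing to the first case. For $O = \{\Box,\dna,\at\}$, the arguments in the proofs of Theorems~\ref{lem:lin-M(B!@)-NC1-ub} and \ref{thm:N-Bda-in-L} already show that every satisfiable formula has a one-state model, which is trivially quasi-quadratic. Every remaining proper subset containing $\dna$ is contained in one of these three maximal cases, and $\MHL$-satisfiability is preserved when passing to a syntactic superfragment, so the property is inherited.

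The only real obstacle is making sure that the equisatisfiability translations used in the second step actually transport models, not merely satisfiability bits, so that a quasi-quadratic model for the translation yields a quasi-quadratic model for the original. This is immediate for skolemisation, since the translated formula is true exactly in those pointed models in which the original admits some witness assignment, and the frame is untouched; and it is immediate for the $\at$-free translation of $\MHL(\Diamond,\Box,\dna)$ for the same reason. No further model-theoretic construction is required beyond those already carried out for Theorem~\ref{thm:QLMP_specific} and the binder-free arguments over $\Nat$ and $\lin$.
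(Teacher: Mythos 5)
Your proposal is correct and follows essentially the same route as the paper: direct inheritance from Theorem~\ref{thm:QLMP_specific} for all $\dna$-free fragments, the skolemisation translation before Lemma~\ref{lem:M(D!@)-NP-ub} for $\MHL(\Diamond,\dna,\at)$, the observation that without $\at$ the binder cannot refer back to named states for $\MHL(\Diamond,\Box,\dna)$, and the arguments of Theorems~\ref{lem:lin-M(B!@)-NC1-ub} and \ref{thm:N-Bda-in-L} for $\MHL(\Box,\dna,\at)$ and its subfragments. The only slight imprecision is that over $\Nat$ the frame is fixed to $(\Nat,<)$, so there is no literal one-state model; the relevant witness is the canonical model binding all state variables to $0$, which is trivially of quasi-quadratic size, exactly as the paper intends.
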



\section{Conclusion}

We have completely classified the complexity of all fragments of hybrid logic with monotone Boolean operators
obtained from arbitrary combinations of four modal and hybrid operators, over linear frames and the natural numbers.
Except for the largest such fragment over linear frames, all fragments are of elementary complexity.
We have classified their complexity into \PSPACE-complete, \NP-complete and tractable
and shown that the tractable cases are complete for either \NC1 or \LOGSPACE.
Surprisingly, while the largest fragment is harder over linear frames than over \Natless,
the largest $\Diamond$-free fragment is easier over linear frames than over \Natless.

The question remains whether the \PSPACE-complete largest fragment over \Natless admits some quasi-polynomial size model property.
Furthermore, this study can be extended in several possible ways: by allowing negation on atomic propositions,
by considering frame classes that consist only of dense frames, such as \Ratless,
or by considering arbitrary sets of Boolean operators in the same spirit as in \cite{MMSTWW09}.
For atomic negation, it follows quite easily that the largest fragment is of \nonelementary complexity over \Natless, too,
and that all fragments except $O = (\Box,\dna,\at)$ are \NP-complete. However, our proof of the quasi-quadratic model property
does not immediately go through in the presence of atomic propositions. 
Over \Ratless, we conjecture that all fragments, except possibly for the largest one, have the same complexity and model properties as over \Natless.


\newpage
\appendix
\section*{Appendix}

\Reportin{
\section{Proof of Theorem \ref{thm:NC1-compl-BA}}
\label{app:NC1_proof_BA}

\textbf{Theorem \ref{thm:NC1-compl-BA}}
\emph{$\MSaT[\N](\Box,\at)$ is $\NC1$-complete.}

\begin{proof}
$\NC1$-hardness follows from Theorem~\ref{thm:all-M-NC1-compl}.

For the upper bound,
we distinguish occurrences of nominals that are either \emph{free}, 
or that are \emph{bound} by a $\Box$, or that are bound by an $\at$.
Simply said, a free occurrence of $i$ in $\alpha$
is bound by $\Box$ in $\Box\alpha$ and bound by $\at$ in $\at_x \alpha$ (even if $x\not=i$).
Since the assignment $g$ is not relevant for the considered fragment,
we write $K,w\models\alpha$ for short instead of $K,g,w\models\alpha$.

\begin{claim}
Let $\alpha'$ be the formula obtained from $\alpha$
by replacing every occurrence of a nominal that is bound by $\Box$ with $\bot$,
and let $\eta$ be a valuation.
If $\eta,k\models\alpha$, then $\eta,k\models\alpha'$.
\end{claim}

\begin{proofofclaim}
We use induction on the construction of $\varphi$.
The base case for $\varphi\in\PROP\cup\NOM$ is straightforward,
as is the inductive step for $\varphi=\alpha\vee\beta$ and $\varphi=\alpha\wedge\beta$,
and even for $\varphi=\at_x\alpha$.
It remains to consider the case $\varphi=\Box\alpha$.
If $\eta,k\models\Box\alpha$, then for all $k'>k$: $\eta,k'\models \alpha$ (by semantics of $\Box$)
and by inductive hypothesis follows for all $k'>k$: $\eta,k'\models \alpha'$.
Assume that in $\Box(\alpha')$ there occurs a nominal $i$ that is bound by the initial $\Box$-operator.
Since for all $k>k'$ holds $\eta,k'\models \alpha'$,
there is some $\ell>\max\bigcup_{j\in\NOM} \eta(j)$ with $\eta,\ell\models \alpha'$.
Therefore $\eta,\ell\models \alpha'[i/\bot]$, and by the monotonicity of $\alpha'$ and the properties of $\eta$
it follows that for all $k'>k$ holds $\eta,k'\models \alpha'[i/\bot]$.
In this way, all nominals bound by the initial $\Box$-operator can be replaced by $\bot$,
and it follows that $\eta,k\models (\Box(\alpha'))'$.
Since $(\Box(\alpha'))'=(\Box\alpha)'$, the claim follows.
\end{proofofclaim}

\begin{claim}
$\varphi\in \MSaT[\N](\Box,\at)$ if and only if $\eta_0,0\models\varphi$
with $\eta_0(x)=\{0\}$ for every $x \in \NOM$.
\end{claim}

\begin{proofofclaim}
We use induction on the construction of $\varphi$.
The base case for $\varphi\in\PROP\cup\NOM$ is straightforward,
as is the inductive step for $\varphi=\alpha\vee\beta$ and $\varphi=\alpha\wedge\beta$,
and even for $\varphi=\at_x\alpha$.
It remains to consider the case $\varphi=\Box\alpha$.
If $\eta_0,0\models\varphi$, then $\varphi\in \MSaT[\N](\Box,\at)$.
If $\Box\alpha\in \MSaT[\N](\Box,\at)$, then 
there exists $k$ such that $\eta,k\models (\Box\alpha)'$ 
(for some $\eta$, by the claim above).
Let $\alpha^{\ast}$ be the formula with $(\Box\alpha)'=\Box(\alpha^{\ast})$.
By the semantics of $\Box$ we obtain that 
there exists $k$ such that for all $k'>k$ holds $\eta,k'\models \alpha^{\ast}$.
By inductive hypothesis follows $\exists k \forall k'>k: \eta_0,0\models \alpha^{\ast}$,
what is equivalent to $\eta_0,0\models \alpha^{\ast}$.
Notice that $\alpha^{\ast}$ contains no nominal.
By the monotonicity of $\alpha$,
it follows that for all $k\in\Nat$ holds $\eta_0,k\models \alpha^{\ast}$.
When we re-replace the $\bot$'s by the replaced nominals,
the satisfaction is kept because of the monotonicity of $\alpha$, 
and therefore for all $k\in\Nat$ holds $\eta_0,k\models \alpha$.
This implies $\eta_0,0\models \Box\alpha$, which eventually yields $\varphi\in \MSaT[\N](\Box,\at)$.
\end{proofofclaim}

Both claims together yield that, 
in order to decide $\varphi\in \MSaT[\N](\Box,\at)$,
it suffices to check whether $\eta_0,0\models\varphi'$.
No nominal in $\varphi'$ occurs bound by a $\Box$-operator.
Therefore for every subformula $\Box\alpha$ of $\varphi'$  and for every $k$ holds:
$\eta_0,k\models\alpha$ if and only if $\eta_0,0\models\alpha$.
All nominals that occur free or bound by an $\at$ 
evaluate to \emph{true} in state $0$ via $\eta_0$.
Therefore, in order to decide $\eta_0,0\models\varphi'$,
it suffices to ignore all $\Box$ and $\at$-operators of $\varphi'$
and evaluate it as a propositional formula under assignment $\eta_0$
that sets all atoms of $\varphi'$ to \emph{true}.
This can be done in $\NC1$ \cite{bus87}.
\end{proof}

\section{Proof of Theorem \ref{thm:NC1-compl-BD}}
\label{app:NC1_proof_BD}

\textbf{Theorem \ref{thm:NC1-compl-BD}}
\emph{$\MSaT[\N](\Box,\dna)$ is $\NC1$-complete.}

\begin{proof}
$\NC1$-hardness follows from Theorem~\ref{thm:all-M-NC1-compl}.

For the upper bound, we distinguish occurrences of state variables as the occurrences in the proof sketch above.
They are either \emph{free}, or they are \emph{bound} by a $\Box$, or they are \emph{bound} by $\dna$.
Note that this phrasing differs from the standard usage of the terms `free' and `bound' in the context of state variables.
A free occurrence of $i$ in $\alpha$ is bound by $\Box$ in $\Box\alpha$, as above.
It is bound by $\dna$ in $\dna i . \alpha$ only.
Notice that $y$ occurs free in $\dna x . y$ (for $x\not=y$).

\begin{claim}
Let $\alpha'$ be the formula obtained from $\alpha$
by replacing every occurrence of a state variable that is bound by $\Box$ with $\bot$,
and let $g$ be an assignment.
If $g,k\models\alpha$, then $g,k\models\alpha'$.
\end{claim}

\begin{proofofclaim}
We use induction on the construction of $\varphi$.
The base case for $\varphi\in\SVAR$ is straightforward,
as is the inductive step for $\varphi=\alpha\vee\beta$, $\varphi=\alpha\wedge\beta$,
and for $\varphi=\dna x.\alpha$.
It remains to consider the case $\varphi=\Box\alpha$.
Let $g,k\models\Box\alpha$ for $k \in \Nat$.
Then for all $k'>k$: $g,k'\models \alpha$ (by semantics of $\Box$)
and by inductive hypothesis follows for all $k'>k$: $g,k'\models \alpha'$.
Assume that in $\Box(\alpha')$ there occurs a state variable $i$ that is bound by the initial $\Box$-operator.
Since for all $k'>k$ holds $g,k'\models \alpha'$,
there is some $\ell>\max \bigcup_{x\in\SVAR}g(x)$ such that $g,\ell\models \alpha'$.
Therefore $g,\ell\models \alpha'[i/\bot]$, and by the monotonicity of $\alpha'$
it follows that for all $k'>k$ holds $g,k'\models \alpha'[i/\bot]$.
In this way, all state variables bound by the initial $\Box$-operator can be replaced by $\bot$,
and it follows that $g,k\models (\Box(\alpha'))'$, where $(\Box\alpha')'=(\Box\alpha)'$.
\end{proofofclaim}

\begin{claim}
$\varphi\in \MSaT[\N](\Box,\dna)$ if and only if $g_0,0\models\varphi$, 
for $g_0(x)=0$ for every $x \in \SVAR$.
\end{claim}

\begin{proofofclaim}
We use induction on the construction of $\varphi$.
The base case for $\varphi\in\SVAR$ is straightforward,
as is the inductive step for $\varphi=\alpha\vee\beta$, $\varphi=\alpha\wedge\beta$,
and for $\varphi=\dna x.\alpha$.
It remains to consider the case $\varphi=\Box\alpha$.

If $\Box\alpha\in \MSaT[\N](\Box,\at)$,
then there exists $k$ such that $g,k\models (\Box\alpha)'$ (for some $\eta$ and $g$).
Let $\alpha^{\ast}$ be the formula with $(\Box\alpha)'=\Box\alpha^{\ast}$.
By the semantics of $\Box$ we obtain that 
there exists $k$ such that for all $k'>k$ holds $g_0,k'\models \alpha^{\ast}$,
and therefore $\alpha^{\ast}\in \MSaT[\N](\Box,\dna)$.
By inductive hypothesis follows $g_0,0\models \alpha^{\ast}$.
Notice that $\alpha^{\ast}$ contains no free state variable.
Therefore for all $k\in\Nat$ holds $g_0,k\models \alpha^{\ast}$.
When we re-replace the $\bot$'s by the replaced state variables,
the satisfaction is kept, and therefore for all $k\in\Nat$ holds $g_0,k\models \alpha$,
which eventually implies $g_0,0\models \Box\alpha$, i.e. $g_0,0\models\varphi$.
\end{proofofclaim}

Both claims together yield that
in order to decide $\varphi\in \MSaT[\N](\Box,\dna)$,
it suffices to check whether $g_0,0\models\varphi'$.
No state variable in $\varphi'$ occurs bound by a $\Box$-operator.
Therefore for every subformula $\Box\alpha$ of $\varphi'$  and for every $k$ holds:
$g_0,k\models\alpha$ if and only if $g_0,0\models\alpha$.
All occurrences of state variables in $\varphi'$ that are bound by $\dna$
evaluate to \emph{true}, because no $\Box$ occurs ``between'' the 
binding $\dna i$ and the occurrence of $i$,
which means that the state where the variable is bound is the same as where the variable is used.
All free occurrences of state variables  
evaluate to \emph{true} in state $0$ due to $g_0$.
Therefore, in order to decide $g_0,0\models\varphi'$,
it suffices to ignore all $\Box$ and $\dna$-operators of $\varphi'$
and evaluate it as a propositional formula under an assignment
that sets all atoms to \emph{true}.
This can be done in $\NC1$ \cite{bus87}.
\end{proof}

\section{Proof of Theorem \ref{thm:N-Bda-in-L}}
\label{app:N-Bda-in-L}

\textbf{Theorem \ref{thm:N-Bda-in-L}}
\emph{$\MSaT[\Nat](\Box,\dna,\at)$ is in \LOGSPACE.}

\vspace{2ex}
\noindent
For this upper bound, we will establish a characterisation of the satisfaction relation
that assigns a \emph{unique} assignment and state of evaluation to every subformula
of a given formula $\varphi$. Using this new characterisation, we will devise a
decision procedure that runs in logarithmic space and consists of two steps:
it replaces every occurrence of any state variable $x$ in $\varphi$ with
1 if its state of evaluation agrees with that of its $\dna x$-superformula,
and with 0 otherwise; it then removes all $\Box$-, $\dna$- and $\at$-operators
from the formula and tests whether the resulting Boolean formula is valid.

In what follows, we want to restrict assignments to the finitely many free state variables occurring free in a given formula $\varphi$.
For this purpose, we define the notion of a \emph{partial} assignment $g : V \to \Nat$ \emph{for} $\varphi$
where $V$ is a finite set of state variables with $\FREE_\varphi \subseteq V$,
i.e., $g$ is defined for all state variables free in $\varphi$.
Here we include subscripts of the $\at$-operator in the notion
of a free state variable: for example, $\dna x.\at_x\at_yz$ has free state variables $y,z$.
The satisfaction relation $\models$ for partial assignments is analogously defined to the definition in Section \ref{sec:prelims}.
For a partial assignment $g$ for $\dna x.\alpha$ and $i \in \Nat$, it holds that $g,i \models \dna x.\alpha$ iff $g^x_i,i \models \alpha$.
Clearly, if $g$ is a partial assignment for $\dna x.\alpha$, then $g^x_i$ is one for $\alpha$.

The definition of the satisfaction relation implies that
the satisfaction of $\Box\alpha$ at $g,i$ depends on the satisfaction of $\Box\alpha$
at infinitely many states (natural numbers) in $g$.
However, we will now show that the latter can be reduced
to satisfaction in the smallest natural number to which $g$ does not bind any state variable.
This will later imply that satisfiability of a given formula $\varphi$
can be tested by evaluating its subformulas in their \emph{uniquely determined} states $g,i$ of evaluation.

Given a partial assignment $g : V \to \mathbb{N}$, define
$ n_g = \max\{g(x) \mid x \in V\} + 1. $

\begin{lemma}
  \label{lem:box_lemma_Bda}
  For every $\varphi \in \MHL(\Box,\dna,\at)$,
  every partial assignment $g$ for $\varphi$
  and every $i \in \mathbb{N}$, it holds that
  $g,i \models \Box\varphi$ iff $g,n_g \models \varphi$.
\end{lemma}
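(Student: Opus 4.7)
The plan is to prove the lemma by structural induction on $\varphi$, strengthened to carry the delicate $\dna$-case where both the assignment and $n_g$ change. I would prove simultaneously two auxiliary statements: (Up) for every partial assignment $g$ for $\varphi$ and every $j \geq n_g$, $g,j \models \varphi$ iff $g,n_g \models \varphi$; and (Spr) for every $g$ for $\varphi$, if $g,n_g \models \varphi$ then $g,j \models \varphi$ for every $j \in \Nat$. Granting (Up) and (Spr), the lemma follows at once: $g,i \models \Box\varphi$ unfolds to $\forall j > i.\ g,j \models \varphi$, and (Up) applied to the particular $j = \max(i+1, n_g)$, which lies in the future of $i$ and above $n_g$, yields $g,n_g \models \varphi$; conversely, $g,n_g \models \varphi$ together with (Spr) produces $g,j \models \varphi$ at every $j > i$, i.e.\ $g,i \models \Box\varphi$.

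The base case $\varphi = y \in \SVAR$ is trivial, since $g(y) < n_g \leq j$ makes both sides of (Up) false and the premise of (Spr) vacuously so. The Boolean cases $\wedge$ and $\vee$ use the IH directly, and the $\at_y\psi$-case decouples the state of evaluation via the semantics of $\at$. For $\Box\psi$ in (Up), the IH (Up) on $\psi$ collapses the universal over $k > j \geq n_g$ to the single condition $g,n_g \models \psi$, which is also equivalent to $g,n_g \models \Box\psi$ by the same IH; for $\Box\psi$ in (Spr), I first derive $g,n_g \models \psi$ from $g,n_g \models \Box\psi$ by instantiating $k = n_g+1$ and invoking IH (Up) on $\psi$, then apply IH (Spr) on $\psi$ to spread satisfaction to every $k$, yielding $g,j \models \Box\psi$ for every $j$.

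The main obstacle is the $\dna x.\psi$-case: $g,j \models \dna x.\psi$ unfolds to $g^x_j,j \models \psi$, but $n_{g^x_j} = j+1 > j$, so neither IH (Up) nor IH (Spr) applies to $(g^x_j,j)$ directly. I would handle this via a separately proved \emph{shift sub-lemma}: for every $\psi$, every partial assignment $h$ whose domain does not contain $x$, and every $j_1,j_2 \geq n_h$, $h^x_{j_1},j_1 \models \psi$ iff $h^x_{j_2},j_2 \models \psi$. This is proved by its own structural induction on $\psi$; the one delicate subcase is $\at_z\chi$ with $z \neq x$, where both sides evaluate $\chi$ at the same named state $h(z) < n_h$ but under different assignments, which I would handle by a companion sub-statement (folded into the same simultaneous induction) that $\chi$'s truth value at a state below $n_h$ is insensitive to which specific value $\geq n_h$ is chosen for $x$. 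The shift sub-lemma then closes the $\dna$-case of (Up) by taking $j_1 = j$ and $j_2 = n_g$; an analogous argument, bootstrapped via the shift lemma combined with IH (Spr) on $\psi$, closes the $\dna$-case of (Spr).
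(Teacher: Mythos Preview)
Your overall strategy—reduce the biconditional to the two auxiliary facts (Up) and (Spr) and close those by structural induction—is sound, and it differs from the paper, which instead proves a single transfer lemma: whenever two pairs $(g,i)$ and $(h,j)$ satisfy $g^{-1}(i)\subseteq h^{-1}(j)$ and $g(a)=g(b)\Rightarrow h(a)=h(b)$, then $g,i\models\varphi$ implies $h,j\models\varphi$. From this the paper gets the corollary that satisfaction at any state outside the range of $g$ spreads to every state, and the target lemma follows immediately.

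The gap is in your shift sub-lemma. As stated for a single distinguished variable $x$ and with the restriction $j_1,j_2\geq n_h$, its own induction does not close at $\dna y.\chi$: unfolding gives $(h^x_{j_1})^y_{j_1},j_1\models\chi$ versus $(h^x_{j_2})^y_{j_2},j_2\models\chi$, and rewriting as $(h^y_{j_1})^x_{j_1}$ does not let you invoke the IH because $n_{h^y_{j_1}}=j_1+1>j_1$. You flag only $\at_z\chi$ as delicate, but $\dna y.\chi$ is worse—it forces a simultaneous shift of a growing set of bound variables, not a single $x$. Moreover, even granting the shift sub-lemma, it does not close the $\dna$-case of (Spr): from $g^x_{n_g},n_g\models\psi$ you must derive $g^x_j,j\models\psi$ also for $j<n_g$, and neither your shift statement nor your companion (both constrained to values $\geq n_h$) covers that. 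The paper's transfer lemma sidesteps both problems precisely because it is formulated for arbitrary pairs $(g,i),(h,j)$ rather than for the specific shape $h^x_{j_1}$ versus $h^x_{j_2}$ with $j_1,j_2\geq n_h$; the needed instances have $g^{-1}(i)=\{x\}$ (or $\emptyset$) and are immediate. The cleanest repair of your argument is to replace the shift sub-lemma by that more general statement; once you do, (Up) and (Spr) become special cases and the separate main induction is no longer needed.
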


We will prove this lemma later, using the following lemma.

\begin{lemma}
  \label{lem:box_lemma_Bda_aux}
  Let $\varphi \in \MHL(\Box,\dna,\at)$,
  let $i,j \in \mathbb{N}$,
  and let $g,h$ be partial assignments for $\varphi$
  that satisfy the following two conditions:
  \begin{enumerate}
    \item
      \label{it:box_lemma_Bda_aux_1}
      $g^{-1}(i) \subseteq h^{-1}(j)$.\\
      \quad (All state variables free in $\varphi$ and bound to $i$ by $g$ are bound to $j$ by $h$.)
    \item
      \label{it:box_lemma_Bda_aux_2}
      For all $a,b \in \FREE_\varphi$: if $g(a) = g(b)$, then $h(a) = h(b)$.\\
      \quad (Whenever $g$ binds two state variables free in $\varphi$ to one and the same state,
      so does $h$.)
  \end{enumerate}
  Then $g,i \models \varphi$ implies $h,j \models \varphi$.
\end{lemma}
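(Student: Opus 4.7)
The plan is to prove the lemma by structural induction on $\varphi$, checking that the two conditions on $(g,i)$ and $(h,j)$ propagate to the subformulas. The atomic cases are immediate: if $\varphi = x \in \SVAR$, then $g,i \models x$ means $x \in g^{-1}(i)$, and since $x \in \FREE_\varphi$, condition (1) gives $x \in h^{-1}(j)$, i.e., $h,j \models x$; the cases $\top$ and $\bot$ are trivial. For $\varphi = \alpha \vee \beta$ and $\varphi = \alpha \wedge \beta$, the inductive hypothesis applies to each disjunct or conjunct with the same pair $(g,i),(h,j)$, since $\FREE_\alpha, \FREE_\beta \subseteq \FREE_\varphi$ preserves both conditions intact.

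The $\at$ and $\dna$ cases require careful bookkeeping but are routine. For $\varphi = \at_x \alpha$, from $g,i \models \at_x\alpha$ we have $g,g(x) \models \alpha$, and we apply the IH at the states $g(x),h(x)$: condition (1) for this application says that every $y \in \FREE_\alpha$ with $g(y) = g(x)$ must satisfy $h(y) = h(x)$, which is exactly the original condition (2) applied to $y$ and $x$ (both free in $\varphi$); condition (2) transfers directly because $\FREE_\alpha \subseteq \FREE_\varphi$. For $\varphi = \dna x.\alpha$, from $g^x_i,i \models \alpha$ we want $h^x_j,j \models \alpha$, and we apply the IH with the modified assignments. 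A variable $y \in \FREE_\alpha$ with $g^x_i(y) = i$ either equals $x$ (in which case $h^x_j(y) = j$ by construction) or lies in $g^{-1}(i) \cap \FREE_\varphi$, which by the original (1) maps to $j$ under $h$ and hence under $h^x_j$. The analogous case split for condition (2) reduces each subcase either to the original (2) (when neither variable is $x$) or to the original (1) (when exactly one is $x$).

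The critical step, and the main obstacle, is $\varphi = \Box\alpha$. Given $g,i \models \Box\alpha$ and an arbitrary $j' > j$, we must find some $i' > i$ at which the IH can be invoked for $(g,i')$ and $(h,j')$. The idea is to choose $i'$ \emph{above the entire range of $g$}: set $i' := \max(i+1,n_g)$. Then $i' > i$ ensures $g,i' \models \alpha$, while $i' \geq n_g$ forces $g^{-1}(i') = \emptyset$, so condition (1) of the IH holds vacuously; condition (2) transfers verbatim because $\FREE_\alpha = \FREE_\varphi$. The IH then yields $h,j' \models \alpha$, and since $j'$ was arbitrary, $h,j \models \Box\alpha$. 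This is precisely where the restriction to $\MHL(\Box,\dna,\at)$ matters: without $\Diamond$ we never need to preserve a \emph{particular} witness state, so we can always push it beyond every variable's image and trivialise condition (1). The whole proof will stand or fall on spotting this choice of $i'$ and, in the $\dna$ case, on carefully tracking which original condition each variable equality gets reduced to.
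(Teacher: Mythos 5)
Your proof is correct and follows essentially the same approach as the paper: structural induction with the identical bookkeeping in the $\dna$ and $\at$ cases (reducing condition (1) of the IH to the original condition (2) and vice versa). Your $\Box$ case is just a slightly more direct packaging of the paper's argument — you apply the induction hypothesis once per $j'>j$ with the witness $i'=\max(i+1,n_g)$ chosen above the range of $g$, whereas the paper chains three IH applications through $n_g$ and $n_h$ — but the key idea (trivialising condition (1) by moving the evaluation point beyond every variable's image) is the same.
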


\begin{proof}
  We proceed by induction on $\varphi$.
  In the base case $\varphi \in \SVAR$,
  we obtain the desired implication directly from \eqref{it:box_lemma_Bda_aux_1}.
  For the induction step, we distinguish between the possible cases for the outermost
  operator of $\varphi$.
  The Boolean cases are straightforward; the other cases are dealt with as follows.
  \begin{itemize}
    \item 
      In case $\varphi = \Box\psi$, the following chain of (bi-)implications holds.
      \begin{align*}
        g,i \models \Box\psi
        & ~\Leftrightarrow~ \forall i' > i : g, i' \models \psi \\
        & ~\Rightarrow~ g,n_g \models \psi \\
        & ~\Rightarrow~ h,n_h \models \psi \\
        & ~\Rightarrow~ \forall j' \in \mathbb{N} : h, j' \models \psi \\
        & ~\Rightarrow~ \forall j' > j : h, j' \models \psi \\
        & ~\Leftrightarrow~ h,j \models \Box\psi
      \end{align*}
      The first ``$\Rightarrow$'' is immediate in case $i < n_g$.
      Otherwise, if $i \geqslant n_g$\,, observe that $g^{-1}(\mbox{i+1}) = \emptyset = g^{-1}(n_g)$.
      Hence we can apply the induction hypothesis (IH) to $\psi,\mbox{i+1},n_g,g,h$
      because $g$ is also a partial assignment for $\psi$, the assumption \eqref{it:box_lemma_Bda_aux_1} of the IH is satisfied,
      and \eqref{it:box_lemma_Bda_aux_2} follows from the assumption \eqref{it:box_lemma_Bda_aux_2} for $\varphi,i,j,g,h$.

      The second ``$\Rightarrow$'' is due to the IH applied to
      $\psi,n_g,n_h,g,h$.
      Its assumption \eqref{it:box_lemma_Bda_aux_1} is satisfied
      because $g^{-1}(n_g) = \emptyset = h^{-1}(n_h)$,
      and \eqref{it:box_lemma_Bda_aux_2} follows from the assumption \eqref{it:box_lemma_Bda_aux_2} for $\varphi,i,j,g,h$.

      The third ``$\Rightarrow$'' is due to the IH applied to
      $\psi,n_h,j,h,h$.
      Its assumption \eqref{it:box_lemma_Bda_aux_1} is satisfied
      because $h^{-1}(n_h) = \emptyset = h^{-1}(j)$,
      and \eqref{it:box_lemma_Bda_aux_2} is obvious because $h=h$.
    \item
      In case $\varphi = \dna x.\psi$, the following chain of (bi-)implications holds.
      \begin{align*}
        g,i \models \dna x.\psi
        & ~\Leftrightarrow~ g^x_i,i \models \psi \\
        & ~\Rightarrow~ h^x_j,j \models \psi \\
        & ~\Leftrightarrow~ h,j \models \dna x.\psi
      \end{align*}
      The implication in the middle is obtained by observing that
      $g^x_i,h^x_i$ are partial assignments for $\psi$ because $g,h$ are partial assignments for $\varphi$, and applying the IH to
      $\psi,i,j,g^x_i,h^x_j$.
      Its assumption \eqref{it:box_lemma_Bda_aux_1} is satisfied
      because of the following chain of equalities and inclusions, whose middle step
      follows from the assumption \eqref{it:box_lemma_Bda_aux_2} for $\varphi,i,j,g,h$.
      \[
        {(g^x_i)}^{-1}(i) = g^{-1}(i) \cup \{x\}
                          \subseteq h^{-1}(i) \cup \{x\}
                          = {(h^x_j)}^{-1}(j)
      \]
      Assumption \eqref{it:box_lemma_Bda_aux_2} of the IH is satisfied
      for the following reason. Let $a,b \in \FREE_\psi$ with $g(a) = g(b)$.
      In case $a=b=x$, both
      $(g^x_i)(a) = (g^x_i)(b)$ and $(h^x_j)(a) = (h^x_j)(b)$ hold.
      In case $a=x$ and $b\neq x$,
      we have that $(g^x_i)(a) = (g^x_i)(b)$ implies $(g^x_i)(b) = i$,
      which implies $g(b) = i$ because $b\neq x$.
      This implies $h(b) = j$ due to the assumption \eqref{it:box_lemma_Bda_aux_1} for $\varphi,i,j,g,h$
      and because $b \in \FREE_\varphi$.
      Hence $(h^x_j)(a) = (h^x_j)(b)$.
      The case $a\neq x$ and $b=x$ is analogous to the previous one,
      and in case $a\neq x$ and $b\neq x$,
      we have that $(g^x_i)(a) = (g^x_i)(b)$ implies $g(a) = g(b)$,
      which implies $h(a) = h(b)$ due to the assumption \eqref{it:box_lemma_Bda_aux_2} for $\varphi,i,j,g,h$.
      Hence $(h^x_j)(a) = (h^x_j)(b)$.
    \item
      In case $\varphi = \at_x.\psi$, the following chain of (bi-)implications holds.
      \begin{align*}
        g,i \models \at_x\psi
        & ~\Leftrightarrow~ g,g(x) \models \psi \\
        & ~\Rightarrow~ h,h(x) \models \psi \\
        & ~\Leftrightarrow~ h,j \models \at_x\psi
      \end{align*}
      The implication in the middle is obtained by observing that $g,h$ are also partial assignments for $\psi$,
      and applying the IH to $\psi,g(x),h(x),g,h$.
      Its assumption \eqref{it:box_lemma_Bda_aux_1} is satisfied:
      consider $y \in g^{-1}(g(x))$.
      Then $g(x) = g(y)$, which implies $h(x) = h(y)$ due to
      the assumption \eqref{it:box_lemma_Bda_aux_2} for $\varphi,i,j,g,h$.
      Hence $y \in h^{-1}(h(x))$. This establishes $g^{-1}(g(x)) \subseteq h^{-1}(h(x))$.
      The assumption \eqref{it:box_lemma_Bda_aux_2} for the IH
      follows from the assumption \eqref{it:box_lemma_Bda_aux_2} for $\varphi,i,j,g,h$.
  \end{itemize}
\end{proof}

Before we can prove Lemma \ref{lem:box_lemma_Bda},
we observe the following consequence of Lemma \ref{lem:box_lemma_Bda_aux}.

\begin{corollary}
  \label{cor:box_lemma_Bda_cor}
  For every $\varphi \in \MHL(\Box,\dna,\at)$,
  every partial assignment $g$ for $\varphi$
  and every $i \in \mathbb{N}$ with $g^{-1}(i) = \emptyset$,
  it holds that $g,i \models \varphi$ implies $g,j \models \varphi$ for all $j \in \mathbb{N}$.
\end{corollary}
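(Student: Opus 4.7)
The corollary is an immediate instance of Lemma~\ref{lem:box_lemma_Bda_aux}. My plan is to apply that lemma with $h := g$ and with the same formula $\varphi$, taking $i$ as given and letting $j \in \mathbb{N}$ be arbitrary.

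I then need to verify the two hypotheses of the lemma. For hypothesis~\eqref{it:box_lemma_Bda_aux_1}, we need $g^{-1}(i) \subseteq h^{-1}(j) = g^{-1}(j)$. Since our assumption is $g^{-1}(i) = \emptyset$, the inclusion holds trivially regardless of what $j$ is. For hypothesis~\eqref{it:box_lemma_Bda_aux_2}, we need that for all $a,b \in \FREE_\varphi$, if $g(a) = g(b)$ then $h(a) = h(b)$; this is trivial since $h = g$.

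Having checked both hypotheses, Lemma~\ref{lem:box_lemma_Bda_aux} yields the implication $g,i \models \varphi \Rightarrow g,j \models \varphi$. Since $j$ was arbitrary, we conclude that $g,i \models \varphi$ implies $g,j \models \varphi$ for every $j \in \mathbb{N}$, which is exactly the statement of the corollary.

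There is no real obstacle here; the entire content lies in the auxiliary lemma, and the corollary just exploits the degenerate case in which the ``target'' assignment coincides with the ``source'' assignment while $i$ is not in the image of $g$. The point of isolating this corollary is presumably that it will be used, together with Lemma~\ref{lem:box_lemma_Bda_aux}, to prove Lemma~\ref{lem:box_lemma_Bda}: once one knows that $\varphi$-satisfaction at any ``unused'' state is independent of the particular state chosen, one can read off $g,i \models \Box\varphi$ as equivalent to $g,n_g \models \varphi$, since $n_g$ is the canonical representative of such unused states.
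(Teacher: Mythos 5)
Your proof is correct and coincides with the paper's own argument: both apply Lemma~\ref{lem:box_lemma_Bda_aux} to $\varphi,i,j,g,g$ with $j$ arbitrary, noting that hypothesis~\eqref{it:box_lemma_Bda_aux_1} follows from $g^{-1}(i)=\emptyset$ and hypothesis~\eqref{it:box_lemma_Bda_aux_2} is trivial since the two assignments are identical. Your closing remark about its role in proving Lemma~\ref{lem:box_lemma_Bda} also matches how the paper uses it.
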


\begin{proof}
  It suffices to observe that the assumptions of Lemma \ref{lem:box_lemma_Bda_aux}
  are satisfied by $\varphi,i,j,g,g$ with $j \in \mathbb{N}$ arbitrary.
  \eqref{it:box_lemma_Bda_aux_1} follows from $g^{-1}(i) = \emptyset$,
  and \eqref{it:box_lemma_Bda_aux_2} holds trivially because $g=g$.
\end{proof}

We can now proceed to prove Lemma \ref{lem:box_lemma_Bda} ~~($\forall \varphi,g,i ~:~ g,i \models \Box\varphi \Leftrightarrow g,n_g \models \varphi$).

\begin{proof}[Proof of Lemma \ref{lem:box_lemma_Bda}]
  For the direction ``$\Rightarrow$'', assume that $g,i \models \Box\varphi$,
  i.e., for all $j > i$, it holds that $g,j \models \varphi$.
  In case $i < n_g$, the consequence $g,n_g \models \varphi$ is immediate.
  Otherwise, in case $i \geqslant n_g$,
  we conclude $g,i+1 \models \varphi$ from $g,i \models \Box\varphi$.
  Since $g^{-1}(i+1) = \emptyset$ in this case, we can use Corollary \ref{cor:box_lemma_Bda_cor}
  to conclude that $g,j \models \varphi$ for all $j \in \mathbb{N}$,
  and in particular for $j = n_g$.

  For the direction ``$\Leftarrow$'', assume that $g,n_g \models \varphi$.
  Then Corollary \ref{cor:box_lemma_Bda_cor} implies that $g,j \models \varphi$ for all $j \in \mathbb{N}$,
  and in particular for all $j > i$. Hence $g,i \models \Box\varphi$.
\end{proof}

Using Lemma \ref{lem:box_lemma_Bda}, we are now in a position to show
that every satisfiable formula is satisfied by a canonical assignment $g_0^\varphi$
in the state 0. We will furthermore use the characterisation of satisfaction for $\Box$-formulas
in Lemma \ref{lem:box_lemma_Bda} to establish
that the question whether $g_0^\varphi,0 \models \varphi$
can be reduced to checking satisfaction of $\varphi$'s subformulas in \emph{uniquely determined}
states and assignments.

Let $\varphi \in \MHL(\Box,\dna,\at)$. The \emph{canonical assignment} $g_0^\varphi$
\emph{for $\varphi$} is the partial assignment for $\varphi$ that maps all $x \in \FREE_\varphi$
to 0 and is undefined for all other state variables.

\begin{theorem}
  \label{thm:canonical_assignment_Bda}
  Let $\varphi \in \MHL(\Box,\dna,\at)$.
  Then $\varphi \in \MSaT[\Nat](\Box,\dna,\at)$ iff $g_0^\varphi, 0 \models \varphi$.
\end{theorem}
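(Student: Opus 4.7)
The backward direction is immediate: if $g_0^\varphi,0 \models \varphi$, then the witnessing pair $(g_0^\varphi,0)$ shows $\varphi \in \MSaT[\Nat](\Box,\dna,\at)$.

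For the forward direction, the plan is to apply Lemma \ref{lem:box_lemma_Bda_aux} in exactly one shot, taking the canonical assignment as the ``target'' partial assignment. Suppose $\varphi$ is satisfiable, say $g,i \models \varphi$ for some partial assignment $g$ for $\varphi$ and some $i \in \Nat$. Without loss of generality I restrict the domain of $g$ to $\FREE_\varphi$ (the value on state variables not occurring free in $\varphi$ is irrelevant for the satisfaction relation, so this restriction preserves $g,i \models \varphi$). I then invoke Lemma \ref{lem:box_lemma_Bda_aux} with $h := g_0^\varphi$ and $j := 0$ to conclude $g_0^\varphi,0 \models \varphi$.

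The two hypotheses of the lemma are almost trivially satisfied under this choice. For hypothesis \eqref{it:box_lemma_Bda_aux_1}, since the domain of $g$ is $\FREE_\varphi$, we have $g^{-1}(i) \subseteq \FREE_\varphi$; and since $g_0^\varphi$ sends every free variable of $\varphi$ to $0$, we have $(g_0^\varphi)^{-1}(0) = \FREE_\varphi$, so the inclusion $g^{-1}(i) \subseteq (g_0^\varphi)^{-1}(0)$ holds. For hypothesis \eqref{it:box_lemma_Bda_aux_2}, given any $a,b \in \FREE_\varphi$, we have $g_0^\varphi(a) = 0 = g_0^\varphi(b)$ independently of $g$, so the implication ``$g(a) = g(b) \Rightarrow g_0^\varphi(a) = g_0^\varphi(b)$'' is vacuously satisfied in its conclusion.

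The only step that requires even a moment's thought is the domain restriction: one must be sure that restricting a partial assignment to $\FREE_\varphi$ does not destroy the property $g,i \models \varphi$, which is an easy structural induction (or, alternatively, can be read off from the way partial assignments were introduced just before Lemma \ref{lem:box_lemma_Bda}). Apart from that, all the real work---the fact that pushing every free variable to a single state $0$ preserves satisfaction, in particular the crucial case $\Box\psi$ whose semantics collapses to evaluation at $n_g$---has already been absorbed into Lemma \ref{lem:box_lemma_Bda_aux} (via Lemma \ref{lem:box_lemma_Bda}), so there is no further obstacle to overcome in this theorem itself.
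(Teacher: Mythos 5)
Your proposal is correct, but it reaches the forward direction by a shorter route than the paper. The paper does not instantiate Lemma \ref{lem:box_lemma_Bda_aux} once and for all; it proves the claim ``$g,i \models \varphi$ implies $g_0^\varphi,0 \models \varphi$'' by a fresh structural induction on $\varphi$, treating the $\Box$, $\dna$ and $\at$ cases separately and invoking Lemma \ref{lem:box_lemma_Bda} (the $n_g$-characterisation of $\Box$) together with Lemma \ref{lem:box_lemma_Bda_aux} only inside the $\Box$ case, always at states such as $n_g$ that lie outside the range of the assignments, so that hypothesis \eqref{it:box_lemma_Bda_aux_1} holds in the trivial form $g^{-1}(n_g)=\emptyset$. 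Your one-shot instantiation with $h = g_0^\varphi$, $j = 0$ is legitimate: after cutting $g$ down to $\FREE_\varphi$ you indeed get $g^{-1}(i) \subseteq \FREE_\varphi = (g_0^\varphi)^{-1}(0)$, and hypothesis \eqref{it:box_lemma_Bda_aux_2} holds because $g_0^\varphi$ is constant on $\FREE_\varphi$; this is exactly the style of instantiation the paper itself uses to derive Corollary \ref{cor:box_lemma_Bda_cor}, so your argument shows the paper's separate induction is, strictly speaking, redundant given the full strength of Lemma \ref{lem:box_lemma_Bda_aux}. What your shortcut costs is the small coincidence-style step you correctly flag (restricting a partial assignment to $\FREE_\varphi$ preserves satisfaction, where ``free'' must include $\at$-subscripts as the paper stipulates), and it leans on reading hypothesis \eqref{it:box_lemma_Bda_aux_1} over the restricted domain, an issue the paper's proof sidesteps entirely by never applying the lemma at a state in the range of $g$. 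What the paper's longer induction buys in return is that the case analysis makes explicit where Lemma \ref{lem:box_lemma_Bda} enters, which is then reused verbatim when defining the unique states of evaluation $\SE^\varphi$ for the \LOGSPACE{} procedure; your version is the cleaner proof of Theorem \ref{thm:canonical_assignment_Bda} itself.
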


\begin{proof}
  The ``if'' direction is obvious. The converse is a consequence of the following claim.
  \begin{claim}
    For every $\varphi \in \MHL(\Box,\dna,\at)$,
    every partial assignment $g$ for $\varphi$ and every $i \in \mathbb{N}$:
    if $g,i \models \varphi$, then $g_0^\varphi, 0 \models \varphi$.
  \end{claim}

  \begin{proofofclaim}
    We proceed by induction on $\varphi$.
    The base case $\varphi = x \in \SVAR$ is true because $g_0^x,0 \models x$ holds.
    For the induction step, the Boolean cases are straightforward.
    The other cases are treated as follows.
    \begin{itemize}
      \item
        In case $\varphi = \Box\psi$, the following chain of implications holds.
        \begin{align*}
          g,i \models \Box\psi
          & ~\Rightarrow~ g,n_g \models \psi \\
          & ~\Rightarrow~ g_0^\psi,1 \models \psi \\
          & ~\Rightarrow~ g_0^\varphi,1 \models \psi \\
          & ~\Rightarrow~ g_0^\varphi,0 \models \Box\psi
        \end{align*}
        The first implication is due to Lemma \ref{lem:box_lemma_Bda},
        and the second uses Lemma \ref{lem:box_lemma_Bda_aux} for $\psi,g,g_0^\psi,n_g,1$:
        remember that $g,g_0^\psi$ are for $\psi$, and observe that the assumptions
        of Lemma \ref{lem:box_lemma_Bda_aux} are satisfied because
        $g^{-1}(n_g) = \emptyset = {(g_0^\psi)}^{-1}(1)$
        and $g_0^\psi(a) = 0 = g_0^\psi(b)$ for all $a,b \in \FREE_\psi$.
        The third implication holds because $g^\psi_0 = g^\varphi_0$,
        and the fourth uses Lemma \ref{lem:box_lemma_Bda}.
      \item
        In case $\varphi = \dna x.\psi$, the following chain of implications holds.
        \begin{align*}
          g,i \models \dna x.\psi
          & ~\Leftrightarrow~ g^x_i,i \models \psi \\
          & ~\Rightarrow~ g^\psi_0,0 \models \psi \\
          & ~\Rightarrow~ {(g^\varphi_0)}^x_0,0 \models \psi \\
          & ~\Leftrightarrow~ g^\varphi_0,0 \models \dna x.\psi
        \end{align*}
        The first ``$\Rightarrow$'' is due to the induction hypothesis,
        and the second uses $g^\psi_0 = {(g^\varphi_0)}^x_0$.
      \item
        In case $\varphi = \at_x\psi$, the following chain of implications holds.
        \begin{align*}
          g,i \models \at_x\psi
          & ~\Leftrightarrow~ g,g(x) \models \psi \\
          & ~\Rightarrow~ g^\psi_0,0 \models \psi \\
          & ~\Rightarrow~ {(g^\psi_0)}^x_0,0 \models \at_x\psi \\
          & ~\Leftrightarrow~ g^\varphi_0,0 \models \at_x\psi
        \end{align*}
        The first ``$\Rightarrow$'' is due to the induction hypothesis,
        and the second uses $g^\psi_0 = {(g^\varphi_0)}^x_0$; 
        note that $g^\psi_0 = g^\varphi_0$ does not necessarily hold because $x$
        might not be free in $\psi$.
    \end{itemize}
  \end{proofofclaim}
\end{proof}

Using Theorem \ref{thm:canonical_assignment_Bda} and Lemma \ref{lem:box_lemma_Bda},
we can now assign a unique assignment and state of evaluation to every subformula
of a given formula $\varphi$. This will lead us to characterize satisfiability of a given formula $\varphi$
by validity of the Boolean formula obtained from $\varphi$ by
(a) replacing every free state variable $x$ with 0 or 1,
depending on the compatibility between unique assignment and state of evaluation for $x$,
and (b) removing all non-Boolean operators.
After establishing this criterion, we will show that the transformation
can be achieved deterministically in logarithmic space.

Fix a formula $\varphi \in \MHL(\Box,\dna,\at)$ whose satisfiability is to be tested.
We denote subformulas of $\varphi$ as pairs $(\psi,p)$,
where $p \in \mathbb{N}$ denotes the position of $\psi$ in (the string that represents) $\varphi$.
This is necessary to distinguish between different occurrences
of the same subformula in $\varphi$.
The position of a subformula is always the position of its first character in the string representing $\varphi$.
If the subformula is $(\alpha\land\beta)$ or $(\alpha\lor\beta)$, then the position of the opening parenthesis
is relevant. Consequently, $\varphi$ has always position 0.

For a position $p$ in $\varphi$, denote by $\Next_1(p)$ and $\Next_2(p)$
the position of the immediate subformulas of the subformula at position $p$:
if the subformula of $\varphi$ at $p$ is
\begin{itemize}
  \item
    $(\alpha \lor \beta)$ or $(\alpha \land \beta)$,
    then $\Next_1(p)$ and $\Next_2(p)$ are the positions of $\alpha$ and $\beta$, respectively;
  \item
    $\Box\alpha$, $\dna x.\alpha$ or $\at_x\alpha$,
    then $\Next_1(p)$ is the position of $\alpha$, and $\Next_2(p)$ is undefined;
  \item
    is any other formula, then both $\Next_1(p)$ and $\Next_2(p)$ are undefined.
\end{itemize}

We now define a unique \emph{state of evaluation} $\SE^\varphi(\psi,p)$
for a subformula $\psi$ of $\varphi$ at position $p$ recursively on $p$ as follows.
\begin{itemize}
  \item
    $\SE^\varphi(\varphi,0) = (g_0^\varphi,0)$.
  \item
    For $\circ \in \{\land,\lor\}$,
    if $\SE^\varphi((\alpha \circ \beta), p) = (g,i)$,
    then $\SE^\varphi(\alpha, \Next_1(p)) = \SE^\varphi(\beta, \Next_2(p)) = (g,i)$.
  \item
    If $\SE^\varphi(\Box\alpha,p) = (g,i)$, then $\SE^\varphi(\alpha,\Next_1(p)) = (g,n_g)$.
  \item
    If $\SE^\varphi(\dna x.\alpha,p) = (g,i)$, then $\SE^\varphi(\alpha,\Next_1(p)) = (g^x_i,i)$.
  \item
    If $\SE^\varphi(\at_x\alpha,p) = (g,i)$, then $\SE^\varphi(\alpha,\Next_1(p)) = (g,g(x))$.
\end{itemize}
Observe that the first component in $\SE^\varphi(\psi,p)$ is always a partial assignment for $\psi$.

Now consider a subformula $(x,p)$ of $\varphi$ with $x \in \SVAR$ and $\SE^\varphi(x,p) = (g,i)$.
We define a function $\rep^\varphi$ mapping $x$ to $\top$ if $g(x) = i$
(i.e., $x$ is satisfied at $\SE^\varphi(x,p)$),
and to $\bot$ otherwise.
Using $\rep^\varphi$,
we now recursively define a function $\bool^\varphi$ mapping subformulas of $\varphi$
to Boolean formulas with only monotone operators and without propositional variables:
\begin{align*}
  \bool^\varphi(x,p)                  & = \rep^\varphi(x,p),\quad x \in \SVAR   \\
  \bool^\varphi(c,p)              & = c, \quad c \in \{\top,\bot\}  \\
  \bool^\varphi(\alpha \circ \beta,p) & = \bool^\varphi(\alpha, \Next_1(p)) \circ \bool^\varphi(\beta, \Next_2(p)),\quad
                                          \circ \in \{\land,\lor\}              \\
  \bool^\varphi(\Delta\alpha,p)        & = \bool^\varphi(\alpha, \Next_1(p)),\quad
                                          \Delta \in \{\Box,\dna x,\at_x\}
\end{align*}
Furthermore, let $\bool(\varphi) = \bool^\varphi(\varphi,0)$.

\begin{lemma}
  \label{lem:sat_Bda_via_validity_bool}
  Let $\varphi \in \MHL(\Box,\dna,\at)$.
  For all subformulas $(\psi,p)$ of $\varphi$, it holds that
  $\SE^\varphi(\psi,p) \models \psi$ iff $\bool^\varphi(\psi,p)$ is valid.
\end{lemma}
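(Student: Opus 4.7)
The plan is to prove the statement by structural induction on the subformula $\psi$, following the recursive definitions of $\SE^\varphi$ and $\bool^\varphi$ in lockstep. The key observation is that these two functions are defined by parallel recursions, so each inductive case reduces to checking that one semantic equivalence matches the corresponding syntactic replacement, and the non-trivial semantic content is already packaged in Lemma~\ref{lem:box_lemma_Bda}.

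For the base cases I would handle $\psi \in \{\top,\bot\}$ trivially, and for $\psi = x \in \SVAR$ note that if $\SE^\varphi(x,p) = (g,i)$ then $g,i \models x$ iff $g(x) = i$, which by the definition of $\rep^\varphi$ is exactly the condition under which $\bool^\varphi(x,p) = \top$. For the inductive step, the Boolean cases $\psi = \alpha \circ \beta$ (with $\circ \in \{\land,\lor\}$) follow immediately since $\SE^\varphi$ passes the same pair $(g,i)$ to both immediate subformulas, so by the induction hypothesis the validity of $\bool^\varphi(\alpha,\Next_1(p)) \circ \bool^\varphi(\beta,\Next_2(p))$ mirrors the satisfaction of $\alpha \circ \beta$ at $(g,i)$.

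The three non-Boolean operators are all of the form $\bool^\varphi(\Delta\alpha,p) = \bool^\varphi(\alpha,\Next_1(p))$, and in each case I would just verify that the pair assigned by $\SE^\varphi$ to the child captures the correct semantic condition for the parent. For $\dna x.\alpha$ with $\SE^\varphi(\dna x.\alpha,p) = (g,i)$, the child's pair is $(g^x_i,i)$, and the standard semantics $g,i \models \dna x.\alpha \iff g^x_i,i \models \alpha$ combined with the induction hypothesis closes the case (noting that $g^x_i$ is indeed a partial assignment for $\alpha$). For $\at_x\alpha$ the child's pair is $(g,g(x))$, matching $g,i \models \at_x\alpha \iff g,g(x) \models \alpha$. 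For $\Box\alpha$ the child's pair is $(g,n_g)$, and here the equivalence $g,i \models \Box\alpha \iff g,n_g \models \alpha$ is exactly what Lemma~\ref{lem:box_lemma_Bda} delivers; combining this with the induction hypothesis gives $\SE^\varphi(\Box\alpha,p) \models \Box\alpha$ iff $\bool^\varphi(\alpha,\Next_1(p))$ is valid, which is $\bool^\varphi(\Box\alpha,p)$.

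There is no real obstacle, as all the heavy lifting has already been done: Lemma~\ref{lem:box_lemma_Bda} handles the only case where the semantics of $\psi$ at $(g,i)$ genuinely depends on more than the obvious one-step reduction, namely the $\Box$ case. The only point that requires care is keeping track of the fact that at each step the first component of $\SE^\varphi$ is a partial assignment for the corresponding subformula (which the definitions of $g^x_i$ and the preamble remarks ensure), so that the induction hypothesis applies cleanly.
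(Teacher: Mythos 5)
Your proof is correct and follows essentially the same route as the paper: structural induction on $\psi$ in lockstep with the parallel recursive definitions of $\SE^\varphi$ and $\bool^\varphi$, with the $\Box$ case discharged by Lemma~\ref{lem:box_lemma_Bda} and the remaining cases by the semantics and the induction hypothesis. No gaps.
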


\begin{proof}
  We proceed by induction on $\psi$.
  Let $\SE^\varphi(\psi,p) = (g,i)$.
  The base case $\psi=x$ follows from the definition of $\bool^\varphi(x,p)$ and $\rep^\varphi(x,p)$.
  For the inductive step, the cases $\psi=\top,\bot$ follow from the definition of $\bool^\varphi$.
  The other cases are as follows.
  \begin{itemize}
    \item
      In case $\psi = \alpha \lor \beta$,
      we observe the following chain of equivalent statements.
      \begin{align*}
        g,i\models \alpha\lor\beta
        & ~\Leftrightarrow~ g,i \models\alpha \text{~or~} g,i \models \beta          \\
        & ~\Leftrightarrow~ \SE^\varphi(\alpha,\Next_1(p)) \models\alpha \text{~or~}
                            \SE^\varphi(\beta,\Next_1(p)) \models\beta               \\
        & ~\Leftrightarrow~ \bool^\varphi(\alpha,\Next_1(p)) \text{~is valid or~}
                            \SE^\varphi(\beta,\Next_1(p)) \text{~is valid}           \\
        & ~\Leftrightarrow~ \bool^\varphi(\alpha,\Next_1(p)) \lor
                            \SE^\varphi(\beta,\Next_1(p)) \text{~is valid}           \\
        & ~\Leftrightarrow~ \bool^\varphi(\alpha\lor\beta,p) \text{~is valid}
      \end{align*}
      The second equivalence is due to the definition of $\SE^\varphi$,
      the third uses the induction hypothesis, and the fifth
      is due to the definition of $\bool^\varphi$.
    \item
      The case $\psi = \alpha \land \beta$ is analogous.
    \item
      In case $\psi = \Box\alpha$,
      we observe the following chain of equivalent statements.
      \begin{align*}
        g,i\models \Box\alpha
        & ~\Leftrightarrow~ g,n_g \models\alpha                               \\
        & ~\Leftrightarrow~ \SE^\varphi(\alpha,\Next_1(p)) \models\alpha      \\
        & ~\Leftrightarrow~ \bool^\varphi(\alpha,\Next_1(p)) \text{~is valid} \\
        & ~\Leftrightarrow~ \bool^\varphi(\Box\alpha,p) \text{~is valid}
      \end{align*}
      The first equivalence uses Lemma \ref{lem:box_lemma_Bda},
      the second is due to the definition of $\SE^\varphi$,
      the third uses the induction hypothesis, and the fourth
      is due to the definition of $\bool^\varphi$.
    \item
      The cases $\psi = \dna x.\alpha$ and $\psi = \at_x\alpha$ are analogous to the previous one,
      but with the first equivalence via the definition of satisfaction.
  \end{itemize}
\end{proof}

\begin{theorem}
  \label{thm:sat_Bda_via_validity_bool}
  Let $\varphi \in \MHL(\Box,\dna,\at)$.
  Then $\varphi \in \MSaT[\Nat](\Box,\dna,\at)$ iff $\bool(\varphi)$ is valid.
\end{theorem}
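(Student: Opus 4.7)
The statement is essentially a corollary chaining two results already established in the excerpt, so the plan is very short and the ``hard work'' has already been done upstream.

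First I would note that by the definition of the state-of-evaluation function, $\SE^\varphi(\varphi,0) = (g_0^\varphi, 0)$, which is the base case of the recursion defining $\SE^\varphi$. Then I would apply Lemma~\ref{lem:sat_Bda_via_validity_bool} to the subformula $(\varphi,0)$ of itself, obtaining the equivalence
\[
  g_0^\varphi,0 \models \varphi \quad\Longleftrightarrow\quad \bool^\varphi(\varphi,0) \text{ is valid}.
\]
Since $\bool(\varphi)$ was defined to be $\bool^\varphi(\varphi,0)$, the right-hand side is exactly the validity of $\bool(\varphi)$.

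Next I would invoke Theorem~\ref{thm:canonical_assignment_Bda}, which gives
\[
  \varphi \in \MSaT[\Nat](\Box,\dna,\at) \quad\Longleftrightarrow\quad g_0^\varphi,0 \models \varphi.
\]
Chaining these two equivalences yields the claim.

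There is no real obstacle here: all the substantive work lies in Lemma~\ref{lem:box_lemma_Bda} (reducing $\Box$-satisfaction to satisfaction at the canonical ``fresh'' state $n_g$), Theorem~\ref{thm:canonical_assignment_Bda} (reduction to the canonical assignment $g_0^\varphi$ and state $0$), and Lemma~\ref{lem:sat_Bda_via_validity_bool} (the inductive translation to $\bool^\varphi$). Given those, the theorem is just a composition, and I would keep the proof to one or two lines. The only thing worth double-checking before writing the one-line argument is that Lemma~\ref{lem:sat_Bda_via_validity_bool} is stated for \emph{every} subformula of $\varphi$ including $\varphi$ itself at position $0$, which it is, so the application at the root is legitimate.
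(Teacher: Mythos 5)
Your proposal is correct and matches the paper's own proof exactly: the paper also chains Theorem~\ref{thm:canonical_assignment_Bda}, the definition $\SE^\varphi(\varphi,0)=(g_0^\varphi,0)$, Lemma~\ref{lem:sat_Bda_via_validity_bool} applied at the root, and the definition $\bool(\varphi)=\bool^\varphi(\varphi,0)$. Nothing is missing.
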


\begin{proof}
  The following chain of equivalences holds.
  \begin{align*}
    \varphi \text{~is satisfiable}
    & ~\Leftrightarrow~ g^\varphi_0,0 \models \varphi             \\
    & ~\Leftrightarrow~ \SE^\varphi(\varphi,0) \models \varphi    \\
    & ~\Leftrightarrow~ \bool^\varphi(\varphi,0) \text{~is valid} \\
    & ~\Leftrightarrow~ \bool(\varphi,0) \text{~is valid}
  \end{align*}
  The first equivalence follows from Theorem \ref{thm:canonical_assignment_Bda},
  the second uses the definition of $\SE^\varphi$,
  the third is due to Lemma \ref{lem:sat_Bda_via_validity_bool},
  and the fourth uses the defintion of $\bool$.
\end{proof}

The function $\bool$ is a reduction of $\MSaT[\Nat](\Box,\dna,\at)$
to the formula value problem for Boolean formulas with only monotone operators,
which is in \NC1 \cite{bus87}.
The correctness of this reduction is shown in Theorem \ref{thm:sat_Bda_via_validity_bool}.
To establish that $\MSaT[\Nat](\Box,\dna,\at) \in \LOGSPACE$,
it remains to show that $\bool(\varphi)$ can be computed in logarithmic space.
The procedure \algname{BOOL}, which will accomplish this task, will traverse its input formula $\varphi$
from left to right, and send the character $c$ read at position $p$ to the output unchanged,
unless one of the following two cases occurs.
If $c$ belongs to a $\Box$-, $\dna x.$-, or $\at_x$-operator, then $c$ is ignored.
If $c$ is a free state variable $x$,
then $\rep^\varphi(x,p)$ is computed and sent to the output instead of $c$.
Given the definition of $\bool$, $\bool^\varphi$ and $\rep^\varphi$,
this is obviously a correct decision procedure provided that $\rep^\varphi(x,p)$
is computed by a correct subroutine \algname{REP}, which we still have to describe.
The procedure \algname{BOOL} is given in Algorithm \ref{alg:bool}.

\begin{algorithm}
  \caption{Procedure \algname{BOOL}}
  \label{alg:bool}
  \begin{algorithmic}
    \REQUIRE $\varphi \in \MHL(\Box,\dna,\at)$
    \ENSURE output $\bool(\varphi)$
    \STATE $p \leftarrow 0$
    \WHILE{$p < |\varphi|$}
      \IF{an operator $\Box$, $\dna x.$ or $\at_x$ starts at position $p$}
        \STATE{$p \leftarrow \text{position immediately following that operator}$}
      \ELSIF{a state variable $x$ starts at position $p$}
        \STATE{\textbf{output} $\algname{REP}(\varphi, x, p)$}
        \STATE{$p \leftarrow \text{position immediately following~} x$}
      \ELSE
        \STATE{\textbf{output} character at position $p$}
        \STATE{$p \leftarrow p+1$}
      \ENDIF
    \ENDWHILE
  \end{algorithmic}
\end{algorithm}

To compute $\rep^\varphi(x,p)$ using the procedure \algname{REP},
we make the following crucial observation about states of evaluation.
The operators $\Box$ and $\at_x$ are \emph{jumping operators}:
$\SE^\varphi(\Box\psi,\cdot)$ and $\SE^\varphi(\psi,\cdot)$ may differ in their second component;
the same holds for $\SE^\varphi(\at_x \psi,\cdot)$ and $\SE^\varphi(\psi,\cdot)$.
Such a difference does not occur between formulas starting with
one of the other operators $\dna x.$, $\land$, $\lor$,
and their direct subformulas.
This observation can be used to compute $\rep^\varphi(x,p)$
because that value depends on the question whether there is a jumping operator
between the position $q$ where $x$ is bound and the position $p$ of $x$.
Assume that this binder $\dna x.$ leads the subformula $\dna x.\psi$,
and that $\SE^\varphi(\dna x.\psi,q) = (g,i)$ and $\SE^\varphi(x,p) = (h,j)$.
We distinguish the following cases.
\begin{description}
  \item[Case 1.]
    If there is no jumping operator between $(x,p)$ and $(\dna x.\psi,q)$,
    then it follows from the definition of $\SE^\varphi$
    that $g(x) = i$, $g(x) = h(x)$, and $i=j$ -- all three statements
    can be shown inductively on the positions in $\varphi$.
    They imply that $h(x) = j$, hence $\rep^\varphi(x,p) = \top$.
  \item[Case 2.]
    Let $\circ$ be the \emph{last} jumping operator occurring
    between positions $q$ and $p$. More precisely,
    let $r$ be the position between $q$ and $p$ such that
    \begin{itemize}
      \item
        the operator $\circ$ at position $r$ is a jumping operator,
      \item
        that operator is in the scope of $(\dna x.,q)$ and has $(x,p)$ in its scope, and
      \item
        there is no jumping operator in the scope of $(\circ,r)$ that has $(x,p)$ in its scope.
    \end{itemize}
    Let $\circ\vartheta$ be the subformula at position $r$.
    \begin{description}
      \item[Case 2.1.]
        If $\circ = \Box$, then the definition of $\SE^\varphi$ implies that
        $\SE^\varphi(\Box\vartheta,r) = (g, n_g)$ for some partial assignment $g$.
        Since $x$ is not bound between $r$ and $p$,
        and since no jumping operator occurs between $r$ and $p$,
        we conclude from the definition of $\SE^\varphi$ that $h(x) \neq n_g$ and $j = n_g$.
        Hence $h(x) \neq j$, and $\rep^\varphi(x,p) = \bot$.
      \item[Case 2.2.]
        If $\circ = \at_y$, then let $(\dna y.\eta, s)$ be
        the subformula ``above'' $\at_y\vartheta$ that binds $y$,
        with $\SE^\varphi(\dna y.\eta) = (g',i')$
        and $\SE^\varphi(\at_y.\vartheta) = (h',j')$.

        Then it holds that (a) $g(x) = h(x)$, due to the definition of $\SE^\varphi$
        and because $x$ is not bound between $q$ and $p$,
        and (b) $j=h(y)=h'(y)=g'(y)$, which follows from the definition of $\SE^\varphi$
        for $\at_y$-formulas and the fact that $y$ is not bound between $s$ and $p$.
        Therefore we have that $\rep^\varphi(x,p) = \top$ iff $g(x) = g'(x)$.
        This new criterion compares states of evaluations of subformulas at smaller positions in $\varphi$,
        and it can be decided applying the same case distinction to those two subformulas.
    \end{description}
\end{description}

We therefore obtain a recursive procedure \algname{REP} for deciding whether $\rep^\varphi = \top$.
For every recursive call according to Case 2.2,
a pair of subformulas at smaller positions in $\varphi$ is compared.
Therefore, the recursion has to terminate after at most $|\varphi|$ steps.
Since the result of a recursive call does not need to be processed any further,
\algname{REP} can be implemented using end-recursion, i.e., without a stack.
Together with the fact that only a constant number of position counters are needed
(and, consequently, determining the \emph{last} jumping operator between two positions in $\varphi$
can be implemented in logarithmic space),
Algorithm \ref{alg:rep} runs in logarithmic space.
The previous considerations imply its correctness.

\begin{figure*}[ttt!]
  \begin{algorithm}[H]
    \begin{algorithmic}
      \REQUIRE $\varphi \in \MHL(\Box,\dna,\at)$, free state variable $x$ in $\varphi$ at position $p$
      \ENSURE output $\rep^\varphi(x,p)$
      \STATE let $(\dna x.\psi,q)$ be the $\dna x.$-superformula of $\psi$ at position $q$ in $\varphi$
      \STATE \textbf{call} subroutine $\algname{REP'}(\varphi,~ (\dna x.\psi,q),~ (x,p))$
    \end{algorithmic}
    \caption{Procedure \algname{REP}}
    \label{alg:rep}
  \end{algorithm}

  \begin{algorithm}[H]
    \begin{algorithmic}
      \REQUIRE $\varphi \in \MHL(\Box,\dna,\at)$, subformulas $(\alpha,p)$, $(\beta,q)$ of $\varphi$
      \ENSURE output $\top$ if second components of $\SE^\varphi(\alpha,p)$ and $\SE^\varphi(\beta,q)$ agree, $\bot$ otherwise
      \IF{there is no jumping operator between $(\alpha,p)$ and $(\beta,q)$}
        \STATE \textbf{return} $\top$
      \ELSIF{the \emph{last} jumping operator between $(\alpha,p)$ and $(\beta,q)$ is $\Box$}
        \STATE \textbf{return} $\bot$
      \ELSIF{the \emph{last} jumping operator between $(\alpha,p)$ and $(\beta,q)$ is $\at_y$}
        \STATE let $(\dna y.\gamma, s)$ be the subformula of $\varphi$ where $y$ is bound
        \IF{$q<s$}
          \STATE \textbf{call} subroutine $\algname{REP'}(\varphi,~ (\dna x.\psi,q),~ (\dna y.\gamma, s))$
        \ELSE
          \STATE \textbf{call} subroutine $\algname{REP'}(\varphi,~ (\dna y.\gamma, s),~ (\dna x.\psi,q))$
        \ENDIF
      \ENDIF
    \end{algorithmic}
    \caption{Procedure \algname{REP'}}
    \label{alg:rep'}
  \end{algorithm}
\end{figure*}

\vspace{2ex}
\noindent
\textbf{Theorem \ref{thm:N-Bda-in-L}}
\emph{$\MSaT[\Nat](\Box,\dna,\at)$ is in \LOGSPACE.}

\section{Proof of Theorem \ref{thm:QLMP_specific}}
\label{app:QLMP_specific}

\textbf{Theorem \ref{thm:QLMP_specific}}
\emph{$\MHL(\Diamond,\Box,\at)$ has the quasi-quadratic size model property with respect to $\lin$ and $\Nat$.}

\vspace{2ex}
\noindent
We will develop a ``quasi-quadratic size model property''
for the logic $\MHL(\Diamond,\Box,\at)$ over $\lin$, and we will subsequently show how to extend the result to the other fragments
from Theorem \ref{thm:lin,N-M-NP-compl} and how to restrict them to \Nat. In the appendix, we even sketch how to obtain
an NP decision procedure for these fragments over $\lin$, $\Nat$ and the frame class $\{(\Rat,<)\}$.

Consider an arbitrary model $K = (W,<,\eta)$, and call all states in the range of $g$ \emph{nominal states}.
For every non-nominal state $w \in W$, let $\delta(w)$ be the number of states between $w$ and the next nominal state $s$.
If the next nominal state is a direct successor, then $\delta(w) = 0$; if there are infinitely many intermediary states---i.e.,
at least a part of the interval between $w$ and $s$ is dense---, then $\delta(w) = \infty$.
For every $m \geqslant 0$, we now define an equivalence relation $\equiv_m$ on $W$ as follows. $w \equiv_m w'$ if either $w=w'$
or both $w,w'$ are non-nominal states and $\delta(w) > m$ and $\delta(w') > m$.
Figure \ref{fig:thm8} gives an example for $m=3$; equivalence classes are denoted by dashed rectangles. The $i_j$ are nominal states,
and of the 8 states between $i_2$ and $i_3$, the rightmost three form separate equivalence classes, and the others form a single
equivalence class.
\begin{figure}[ht]
\begin{center}
\begin{tikzpicture}[
 scale=0.8,
 -,
 auto,
 node distance=5cm,
 label distance=3mm,
 semithick,
 state/.style={style=circle, draw=black, minimum size=4.2mm, inner sep=0mm},
 txt/.style={style=rectangle},
 eqc/.style={densely dashed,rounded corners=4pt},]
 
	\node[state] (w1) at (0,0) {$i_0$};
	\draw[eqc] (-0.5,-0.4) rectangle (0.5,0.4);
	\node[state] (w2) at (1,0) {};
	\node[state] (w3) at (2,0) {};	
	\node[state] (w4) at (3,0) {};
	\node[state] (w5) at (4,0) {};
	\draw[eqc] (0.7,-0.4) rectangle (4.4,0.4);
	\node[state] (w6) at (5,0) {};
	\draw[eqc] (5.35,-0.4) rectangle (4.65,0.4);
	\node[state] (w7) at (6,0) {};
	\draw[eqc] (5.7,-0.4) rectangle (6.3,0.4);
	\node[state] (w8) at (7,0) {};
	\draw[eqc] (6.7,-0.4) rectangle (7.3,0.4);
	\node[state] (w9) at (8,0) {$i_1$};
	\draw[eqc] (7.7,-0.4) rectangle (8.3,0.4);
	\node[state] (w10) at (10.5,0) {};
	\draw[eqc] (8.3,-0.4) rectangle (10.2,0.4);
	\draw[eqc] (10.2,-0.4) rectangle (10.8,0.4);
	\node[state] (w11) at (11.5,0) {};
	\draw[eqc] (11.2,-0.4) rectangle (11.8,0.4);
	\node[state] (w12) at (12.5,0) {$i_2$};
	\draw[eqc] (12.2,-0.4) rectangle (12.8,0.4);
	
	\path (w1) edge[->,thick] (w2)
				(w2) edge[->,thick] (w3)
				(w3) edge[->,thick] (w4)
				(w4) edge[->,thick] (w5)
				(w5) edge[->,thick] (w6)
				(w6) edge[->,thick] (w7)
				(w7) edge[->,thick] (w8)
				(w8) edge[->,thick] (w9)
				(w10) edge[->,thick] (w11)
				(w11) edge[->,thick] (w12);
	\draw [-,thick,decorate,decoration={snake,amplitude=.4mm,segment length=2mm}] (w9) -- (w10);	
	\node[txt] at (13.2,0) {$\cdots$};
	
	\node[txt] at (0,-1) {Legend:};
	\node[state] (v1) at (1.5,-1) {$w$};
	\node[state] (v2) at (3,-1) {$v$};
	\path (v1) edge[->,thick] (v2);
	\node[txt,text width=80mm] at (8.5,-1) {: ~$v$ is a \underline{direct} successor of $w$};
	\node[state] (v3) at (1.5,-1.8) {$w$};
	\node[state] (v4) at (3,-1.8) {$v$};
	\draw [-,thick,decorate,decoration={snake,amplitude=.4mm,segment length=2mm}] (v3) -- (v4);	
	\node[txt,text width=80mm] at (8.5,-1.8) {: ~$w$  and $v$ are begin and end of a \underline{dense} interval};
	\node[state] (v6) at (1.5,-2.6) {$w$};
	\node[state] (v7) at (3,-2.6) {$v$};
	\path (v6) edge[->,thick] (v7);
	\draw[eqc] (1.1,-3) rectangle (3.4,-2.2);
	\node[txt,text width=80mm] at (8.5,-2.6) {: ~$w$  and $v$ are in the same \underline{equivalence class}};

\end{tikzpicture}
\end{center}
\caption%
{An example for $m=3$}
\label{fig:thm8}
\end{figure} 

The intuition behind this equivalence relation is that $w$ and $w'$ cannot be distinguished by formulas of modal depth $\leqslant m$.

If $w \equiv_m w'$, we call $w$ and $w'$ \emph{$m$-inseparable}, and we denote the equivalence class of $w$ w.r.t.\ $\equiv_m$
by $[w]_m$. The definition of $\equiv_m$ has the consequence $[w]_m \subseteq [w]_{m-1}$, for all $m > 0$.

It is possible to enforce dense parts in satisfying models, for instance via the following formula, which is
satisfiable in a linear structure only if that structure ends with a state satisfying the nominal $j$,
and that state needs to be the end point of a dense interval. This formula is therefore
not satisfiable over \Nat.
\[
  \varphi_d = i \wedge \Diamond\Diamond j \wedge \Box(j \vee \Diamond\Diamond j)
\]
For this reason, an equivalence class can also consist of infinitely many states. In the case of a model satisfying $\varphi_d$,
all points between $i$ and $j$ belong to the same equivalence class because all these points have an infinite distance to $j$.

The following lemma states that $m$-inseparable states cannot be distinguished by formulas of modal depth $\leqslant m$.

\begin{lemma}
  \label{lem:equiv_class_inseparable}
  For every $m \geqslant 0$,
  every formula $\varphi \in \MHL(\Diamond,\Box,\at)$ with $\md(\varphi) \leqslant m$,
  every linear model $K = (W,<,\eta)$,
  and all $w,w' \in W$ with $w \equiv_m w'$:
  \[
      K,w \models \varphi \quad\Leftrightarrow\quad K,w' \models \varphi.
  \]
\end{lemma}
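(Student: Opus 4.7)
The plan is to prove the lemma by induction on $m$, the bound on the modal depth of $\varphi$. For the base case $m=0$, any $\varphi$ with $\md(\varphi)=0$ is built from $\top$, $\bot$, nominals, Boolean connectives and the hybrid operator $\at_t$. The Boolean cases follow directly by the inductive hypothesis on subformulas; for $\at_t\psi$, the truth value does not depend on the state of evaluation at all and therefore agrees at $w$ and $w'$ trivially; and for a nominal $i$, if $w\equiv_0 w'$ with $w\neq w'$ then both $w$ and $w'$ are non-nominal states, hence neither equals $\eta(i)$, so $K,w\not\models i$ and $K,w'\not\models i$.

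For the inductive step, assume the statement holds at level $m-1$ and let $\md(\varphi)\leq m$. The Boolean cases and the $\at$ case are handled as above (the relevant sub-induction on the structure of $\varphi$ requires only the instance of the lemma at level $m$, which is available by structural induction inside the $m$-step). The non-trivial cases are $\varphi=\Diamond\psi$ and $\varphi=\Box\psi$; both have $\md(\psi)\leq m-1$. I will reduce them to the following \emph{witness-transfer claim}: for every $w\equiv_m w'$ and every $v>w$ there exists $v'>w'$ with $v\equiv_{m-1}v'$ (and symmetrically with the roles of $w,w'$ swapped). Granting this, if $K,w\models\Diamond\psi$ is witnessed by $v>w$, pick $v'>w'$ as in the claim; the induction hypothesis applied to $\psi$ and the pair $v\equiv_{m-1}v'$ gives $K,v'\models\psi$, hence $K,w'\models\Diamond\psi$. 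The case $\Box\psi$ is dual: from an arbitrary $v'>w'$ we use the claim (applied with $w,w'$ swapped) to produce $v>w$ with $v\equiv_{m-1}v'$, and then $K,w\models\Box\psi$ together with the IH yields $K,v'\models\psi$.

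The witness-transfer claim is established by case analysis on $v$. If $v$ is a nominal state $\eta(i)$, then since $w\equiv_m w'$ forces $w$ and $w'$ to have the same configuration of nominals in their forward cone up to distance $m\geq 1$, we still have $v=\eta(i)>w'$, and we may set $v'=v$ (note $v\equiv_{m-1}v$). If $v$ is non-nominal with $\delta(v)>m-1$, then any non-nominal $v'$ with $w'<v'$ and $\delta(v')>m-1$ will satisfy $v\equiv_{m-1}v'$; such a $v'$ exists because $\delta(w')>m$ guarantees more than $m$ non-nominal states strictly between $w'$ and the next nominal reachable from $w'$, so in particular non-nominal states $v'$ with $\delta(v')>m-1$ lie just after $w'$. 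Finally, if $v$ is non-nominal with $\delta(v)\leq m-1$, then $v$ lies at a specific finite offset from the next nominal $s$ ahead of $v$; the matching forward-nominal skeleton of $w'$ provides a corresponding nominal $s'$ at the same relative position, and a state $v'$ at the same offset before $s'$ does the job. The main obstacle is precisely this third sub-case, since dense intervals and the possibility that the nominal ahead of $w$ differs from the one ahead of $w'$ require the definition of $\equiv_m$ to be used carefully to align the two forward nominal skeletons; once the finite-distance prefixes of these skeletons are shown to coincide up to depth $m$, the required $v'$ can be located canonically.
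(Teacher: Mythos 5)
Your high-level architecture is close in spirit to the paper's argument: the paper also proceeds by induction (structurally on $\varphi$, with $\md(\psi)\leqslant m-1$ for the subformula under $\Diamond$ or $\Box$) and also transfers witnesses to $\equiv_{m-1}$-equivalent states; your packaging of both modal cases into one symmetric ``witness-transfer'' claim, and your treatment of the Boolean, nominal and $\at$ cases, are fine. The problem is that your proof of the witness-transfer claim is incomplete exactly at the point you flag, and the missing ingredient is the one the paper's proof actually uses: the case analysis must be on the \emph{position} of the witness $v$ relative to $w'$, not on the ``type'' of $v$, and it rests on the fact that an $\equiv_m$-class is a convex, nominal-free stretch of states all having more than $m$ states before the next nominal state (this is how the relation is used in the paper -- it is what licenses the step ``$w\equiv_m v\equiv_m w'$'' there, and it is visible in the figure). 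There is no ``forward-nominal skeleton alignment'' to be extracted from the definition of $\equiv_m$ beyond this; your case (iii) therefore cannot be completed along the lines you sketch, and your case (i) justification (agreement of nominal configurations ``up to distance $m$'') is too weak, since the witness nominal lies at distance greater than $m$ from $w$. What is true, and what makes these cases easy, is the convexity fact itself: since no nominal state lies between $w$ and $w'$, every nominal state above $w$ is also above $w'$, and since $\delta$ is non-increasing along the common nominal-free interval while $\delta(w')>m$, every non-nominal $v>w$ with $\delta(v)\leqslant m-1$ already satisfies $v>w'$. So in your cases (i) and (iii) one simply takes $v'=v$.

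The only genuinely nontrivial situation is $w<v\leqslant w'$, i.e., the witness lies inside the class $[w]_m\subseteq[w]_{m-1}$. There one must produce some $v'>w'$ that is $(m-1)$-inseparable from $v$: the least state strictly after the class (before the next nominal) when the class is finite, or a successor of $w'$ inside the dense part when the class is infinite -- exactly the $v'$ chosen in the paper's proof of the $\Diamond$ and $\Box$ cases. Your case (ii) recipe (``take $v'$ just after $w'$ with $\delta(v')>m-1$'') is the right move for this situation, but as stated it is applied uniformly, without first splitting on whether $v>w'$ (where $v'=v$ should be taken, since a $v'$ in $w'$'s interval need not be $(m-1)$-inseparable from a $v$ lying beyond the next nominal); also, in the dense case the existence of such a $v'$ needs a one-line argument since $w'$ may have no immediate successor. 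With the convexity observation made explicit and the case split redone by position ($v>w'$ versus $w<v\leqslant w'$), your claim becomes correct and the remainder of your proof goes through, essentially reproducing the paper's argument.
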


\begin{proof}
  We proceed by induction on the structure of $\varphi$.
  The case for nominals is obvious because nominal states are $m$-inseparable only from themselves.
  The Boolean cases are straightforward.
  
  \begin{description}
    \item[$\varphi = \Diamond\psi$.]
      For symmetry reasons, it suffices to show ``$\Rightarrow$''.
      Let $K,w \models \Diamond\psi$ and $w \equiv_m w'$.
      Then there is some $v>w$ with $K,v \models \psi$.
      We now distinguish several cases of how $w,w',v$ are located in relation to each other.
      \begin{description}
        \item[$w'<w$.]
          Then $w<v$ implies $w'<v$, and hence $K,w' \models \Diamond\psi$.
        \item[$w\leqslant w' < v$].
          Then, still, $w'<v$, and hence $K,w' \models \Diamond\psi$.
        \item[$w<v\leqslant w'$].
          Since $w \equiv_m w'$, we have $w \equiv_m v \equiv_m w'$.
          In case $|[w]_m| < \infty$, there are exactly $m$ states between $[w]_m$ and the next nominal state.
          Let $v'$ be the $<$-least of them; then $w \equiv_{m-1} v \equiv_{m-1} w'\equiv_{m-1} v'$.
          Since $\md(\psi) = m-1$, we get $K,v' \models \psi$ via the induction hypothesis. Hence, $K,w' \models \Diamond\psi$.
          
          In case $|[w]_m| = \infty$, we conclude that at least a subinterval of $[w]_m$ is dense, and therefore $w'$ has a successor $v'$ in
          $[w]_m \subseteq [w]_{m-1}$. We can continue the argument as in the previous case.
      \end{description}
    \item[$\varphi = \Box\psi$.]
      As above, it suffices to show ``$\Rightarrow$''.
      Let $K,w \models \Box\psi$ and $w \equiv_m w'$.
      Then, for all $v>w$, we have that $K,v \models \psi$.
      Again, we consider the two cases $|[w]_m| < \infty$ and $|[w]_m| = \infty$, and fix the same $v'$ as above.
      Since $v'$ is $(m-1)$-inseparable from $w$ and $w'$, $\psi$ is also satisfied by all states in
      $[w]_m$. Therefore, $K,v \models \psi$ for all $v > w'$, hence, $K,w' \models \Box\psi$.
    \item[$\varphi = \at_i\psi$.]
      Then $K,w \models \at_i\psi ~\Leftrightarrow~ K,v \models \psi \text{~for any~} v ~\Leftrightarrow~ K,w' \models \at_i\psi$.
  \end{description}
\end{proof}

We now use this inseparability result to reduce a satisfying model in size such that it can be represented
in polynomial space. Fix a formula $\varphi$ with $\md(\varphi) = m$ and a linear model $K$ with $K,w \models \varphi$ for some state $w$.
If it were not possible to enforce dense intervals, it would suffice to collapse every $m$-equivalence class of $K$ to a single point,
i.e., the quotient model of $K$ w.r.t.\ $\equiv_m$ would satisfy $\varphi$ at $[w]_m$. This would serve our purpose over \Nat.
In contrast, an infinite equivalence class (IEC)---which
has to contain a dense subinterval---needs to remain dense for the next lemma to work. For a uniform representation,
we replace any IEC with a copy of $(0,1)_\Rat$, the open interval of all rationals between 0 and 1.
Since a dense interval can be of higher cardinality than $(0,1)_\Rat$---just consider \R, for example---,
we cannot expect to map every point of an IEC M to a point in the associated copy of $(0,1)_\Rat$.
Instead, we use a surjective partial morphism $f: (M,<) \to (0,1)_\Rat$, i.e., a partial function
that satisfies the equivalence $x=y \Leftrightarrow f(x) = f(y)$ for all $x,y \in M$
and whose range is all of \Rat. These conditions ensure that every $x \in \dom(f)$ has a successor $y \in \dom(f)$
with $f(x) < f(y)$. Such a function always exists: since every IEC $[w]_m$ contains a dense subinterval, it also contains
an isomorphic copy of $(0,1)_\Rat$.

The refined ``quotient'' model $K_m = (W_m, <_m, \eta_m)$ is now constructed as follows.
For every infinite $[w]_m$, let $\infimage{[w]_m}$ be a fresh copy of $(0,1)_\Rat$. We set
\begin{itemize}
  \item
    $\displaystyle W_k = \biguplus_{|[w]_m| = \infty} \infimage{[w]_m} \quad\uplus \{[w]_m : |[w]_m| < \infty\}$
  \item
    $[w]_m <_m [v]_m$ if $[w]_m$ and $[v]_m$ are finite and $w' < v'$ for some $w' \in [w]_m$ and $v' \in [v]_m$
  \item
    $q <_m q'$ if $q,q' \in \infimage{[w]_m}$ for some $w$ with $|[w]_m| = \infty$, and $q<q'$ on $(0,1)_\Rat$
  \item
    $q <_m [v]_m$ if $q \in \infimage{[w]_m}$ for some $w$ with $|[w]_m| = \infty$, $[v]_m$ is finite, and 
    $w < v'$ for some $v' \in [v]_m$
  \item
    $[w]_m <_m q'$ if $q' \in \infimage{[v]_m}$ for some $v$ with $|[v]_m| = \infty$, $[w]_m$ is finite, and 
    $w' < v$ for some $w' \in [w]_m$
  \item
    $\eta_m(i) = [\eta(i)]_m$
\end{itemize}
We also define a \emph{model reduction function} for $K$ to be a surjective partial function $f : K \to K_m$ with the following conditions.
\begin{itemize}
  \item
    If $|[w]_m| < \infty$, then $f(w') = [w]_m$ for all $w' \in [w]_m$.
  \item
    If $|[w]_m| = \infty$, then $f(w') = g(w')$ for all $w' \in [w]_m$,
    for some surjective partial morphism $g : [w]_m \to \infimage{[w]_m}$.
\end{itemize}

\begin{lemma}
  \label{lem:model_reduction}
  For every $m \geqslant 0$,
  every formula $\varphi \in \MHL(\Diamond,\Box,\at)$ with $\md(\varphi) \leqslant m$,
  every linear model $K = (W,<,\eta)$,
  every model reduction function $f$ for $K$
  and all $w \in \dom(f)$:
  \[
      K,w \models \varphi \quad\Leftrightarrow\quad K_m,f(w) \models \varphi.
  \]
\end{lemma}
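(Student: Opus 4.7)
The plan is to prove Lemma \ref{lem:model_reduction} by induction on the structure of $\varphi$, using Lemma \ref{lem:equiv_class_inseparable} whenever a modal operator forces us to move to a state outside $\dom(f)$ or inside the same $\equiv_m$-class as $w$. As a preliminary step I would record two compatibility properties linking $<$ on $K$ and $<_m$ on $K_m$ through $f$: (i) for $u, v \in \dom(f)$ with $f(u) \neq f(v)$, one has $u < v$ if and only if $f(u) <_m f(v)$ -- the non-trivial direction uses that distinct equivalence classes are separated by at least one nominal state, so they are linearly ordered as intervals of $K$; and (ii) for each infinite class $[u]_m$, the associated surjective partial morphism $g : [u]_m \to \infimage{[u]_m}$ is order-preserving, and for every rational $q \in (0,1)_\Rat$ with $q > g(u)$ the element $g^{-1}(q) \in \dom(f)$ satisfies $g^{-1}(q) > u$ in $K$ (and symmetrically for $q < g(u)$).

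The base case $\varphi = i$ reduces to the observation that nominal states form singleton equivalence classes, so $f(\eta(i)) = \eta_m(i)$ and $f(w) = \eta_m(i)$ forces $w = \eta(i)$. The Boolean cases are immediate from the induction hypothesis. For $\varphi = \at_i\psi$ modal depth is unchanged, $\eta(i) \in \dom(f)$, and $f(\eta(i)) = \eta_m(i)$, so the induction hypothesis applied to $\psi$ at $\eta(i)$ gives both directions of the equivalence directly.

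The interesting cases are $\varphi = \Diamond\psi$ and $\varphi = \Box\psi$, where $\md(\psi) \leqslant m - 1$ so Lemma \ref{lem:equiv_class_inseparable} is available with index $m-1$. For $\Diamond\psi$, direction $\Rightarrow$, given $v > w$ with $K, v \models \psi$, I will produce $v^* \in \dom(f)$ with $v^* > w$, $v^* \equiv_{m-1} v$, and $f(v^*) \neq f(w)$; Lemma \ref{lem:equiv_class_inseparable} then yields $K, v^* \models \psi$, the induction hypothesis yields $K_m, f(v^*) \models \psi$, and property (i) yields $f(w) <_m f(v^*)$. For the construction of $v^*$: if $[v]_m \neq [w]_m$, then $[v]_m$ lies entirely above $[w]_m$ in $K$, and any $v^* \in \dom(f) \cap [v]_m$ works -- take $v^* = v$ if $[v]_m$ is finite, or any $v^* \in \dom(g)$ if $[v]_m$ is infinite; if $[v]_m = [w]_m$, then this class must be infinite (otherwise it would be the singleton $\{w\}$, contradicting $v > w$), and property (ii) lets us choose $v^* = g^{-1}(q)$ for any rational $q \in (g(w),1)_\Rat$. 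Direction $\Leftarrow$ is shorter: given $q >_m f(w)$ with $K_m, q \models \psi$, surjectivity of $f$ supplies a unique $v \in \dom(f)$ with $f(v) = q$, property (i) gives $v > w$, and the induction hypothesis gives $K, v \models \psi$.

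The $\Box\psi$ case is handled dually, and it contains the main obstacle of the whole proof: in direction $\Leftarrow$ a generic successor $v > w$ in $K$ need neither lie in $\dom(f)$ nor satisfy $f(v) \neq f(w)$, so one cannot directly appeal to the hypothesis $K_m, f(w) \models \Box\psi$. The resolution is exactly the substitute construction from the $\Diamond$-case: from $v$, produce $v^* \in \dom(f)$ with $v^* > w$, $v^* \equiv_{m-1} v$, and $f(v^*) \neq f(w)$; then property (i) gives $f(w) <_m f(v^*)$, the hypothesis gives $K_m, f(v^*) \models \psi$, the induction hypothesis transports this to $K, v^* \models \psi$, and Lemma \ref{lem:equiv_class_inseparable} transfers satisfaction from $v^*$ back to $v$. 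Direction $\Rightarrow$ of $\Box$ is dual and easier: any $q >_m f(w)$ lifts via property (i) to a $v > w$ in $\dom(f)$ with $f(v) = q$, where $K, v \models \psi$ follows from the hypothesis and the induction hypothesis transports this to $K_m, q \models \psi$.
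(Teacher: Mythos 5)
Your overall strategy---structural induction, with Lemma \ref{lem:equiv_class_inseparable} invoked at depth $m-1$ to relocate witnesses into $\dom(f)$---is essentially the paper's, and you are in fact more explicit than the paper in treating both directions of the equivalence (the paper only writes out ``$\Rightarrow$'' in the modal cases). However, there is a genuine gap in your construction of the substitute state $v^*$, on which both the $\Diamond$-forward and the $\Box$-backward argument rest. You dismiss the case $[v]_m = [w]_m$ with $[w]_m$ finite by asserting that a finite $\equiv_m$-class must be the singleton $\{w\}$. That is false: all non-nominal states whose distance to the next nominal state exceeds $m$ are lumped into one class, so if the interval between two consecutive nominal states contains finitely many but sufficiently many states, this class is finite and has several elements---the paper's own Figure \ref{fig:thm8} (with $m=3$) shows exactly such a finite class of four states between $i_0$ and $i_1$. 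In that situation $v > w$ with $v \equiv_m w$ and $|[w]_m| < \infty$ is perfectly possible, $f(v) = f(w)$, and your recipe produces no $v^*$ at all: you cannot take $v^* = v$, and there is no morphism $g$ available for a finite class. The $\Box$-backward case, which you route through ``exactly the substitute construction from the $\Diamond$-case'', inherits the same hole.

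The gap is repairable, and the paper's proof shows one way: in the finite-class case it first moves the evaluation point to the $<$-greatest element $w'$ of $[w]_m$ and applies Lemma \ref{lem:equiv_class_inseparable} at level $m$ to $\Diamond\psi$ itself (not at level $m-1$ to $\psi$), so that the witness obtained for $w'$ automatically lies outside $[w]_m$, after which your argument resumes. Alternatively, staying closer to your scheme, keep the witness $v$ and choose $v^*$ to be the state at distance exactly $m$ from the next nominal state: it lies above all of $[w]_m$, forms a singleton $\equiv_m$-class (hence lies in $\dom(f)$ with $f(v^*) \neq f(w)$), and satisfies $v^* \equiv_{m-1} v$, so Lemma \ref{lem:equiv_class_inseparable} at depth $m-1$ applies as you intend. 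Either patch must be inserted in both affected cases. A smaller remark: your justification of property (i), namely that distinct classes are separated by a nominal state, is also inaccurate---the singleton classes at distance at most $m$ sit directly next to the large class inside the same interval---but the property itself is true because classes are convex and pairwise disjoint, so nothing breaks there.
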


\begin{proof}
  We proceed by induction on $\varphi$.
  The atomic and Boolean cases are straightforward again.
  \begin{description}
    \item[$\varphi = \Diamond\psi$.]
      Let $K,w \models \varphi$.
      \begin{description}
        \item[{Case 1: $|[w]_m| < \infty$.}]
          Let $w'$ be the $<$-greatest member of $[w]_k$.
          Due to Lemma \ref{lem:equiv_class_inseparable}, $K,w' \models \Diamond\psi$.
          Therefore there is some $v > w'$ with $K,v \models \psi$ and $v \not\equiv_m w$.
          If $|[v]_m| < \infty$, then $v \in \dom(f)$, and the induction hypothesis yields
          $K_m,f(v) \models \psi$.
          Since $w<v$ with $v \not\equiv_m w$, we obtain $f(w)<_m f(v)$, hence $K_m,f(w) \models \psi$.
          If $|[v]_m| = \infty$, we use $[v]_m \subseteq [v]_{m-1}$ and conclude from Lemma \ref{lem:equiv_class_inseparable}
          that $K,v' \models \psi$ for all $v' \in [v]_m$. Take such a $v'$ with $v' \in \dom(f)$ and apply the induction hypothesis as in the case $|[v]_m| < \infty$.
        \item[{Case 2: $|[w]_m| = \infty$.}]
          Since $K,w \models \varphi$, there is some $v > w$ with $K,v \models \psi$.
          If $v \not\equiv_m w$, then we argue as in Case 1.
          Otherwise, we use Lemma \ref{lem:equiv_class_inseparable} to conclude that
          $K,v' \models \psi$ for all $v' \in [w]_m$.
          Since the restriction of $f$ to $[w]_m$ is a surjective morphism and $(0,1)_\Rat$ is dense,
          there is some $v' > w$ with $v' \in [w]_m$, $v' \in \dom(f)$ and $f(w) <_m f(v')$.
          From $K,v' \models \psi$ we conclude via the induction hypothesis that $K_m,f(v') \models \psi$,
          hence $K_m,f(w) \models \Diamond\psi$.
      \end{description}
    \item[$\varphi = \Box\psi$.]
      Let $K,w \models \Box\psi$, i.e., $K,v \models \psi$ for all $v > w$.
      Then $K,v \models \psi$ for all $v$ with $v \in \dom(f)$ and $f(v) >_m f(w)$.
      Due to the induction hypothesis, $K_m,f(v) \models \psi$ for all $v$ with $v \in \dom(f)$ and $f(v) >_m f(w)$.
      Since $f$ is surjective, we have $K_m,v' \models \psi$ for all $v' \in W_m$ with $v' >_m f(w)$.
      Hence $K_m,f(w) \models \Box\psi$.
    \item[$\varphi = \at_i\psi$.]
      Let $K,w \models \at_i\psi$, i.e., $K,\eta(i) \models \psi$.
      Then $K_m,\eta_m(i) \models \psi$ due to the induction hypothesis and the definition of $K_m$.
      Hence $K_m,f(w) \models \at_i\psi$.
  \end{description}
\end{proof}

At this point, it is important to notice that, if $K$ is a model over \Nat,
then so is $K_m$. Therefore, Lemma \ref{lem:model_reduction} gives us a quasi-quadratic size model property
for $\MHL(\Diamond,\Box,\at)$ over $\lin$ as well as \Nat\ -- and also over $\{(\Rat,<)\}$, see appendix.
We say that a model $K$ is of \emph{size quasi-quadratic} in an integer $m$
if every interval between two consecutive nominal states in $K$
consists of at most $m$ states, possibly with one preceding isomorphic copy of $(0,1)_\Rat$.
We furthermore say that a fragment $\MHL(O)$ has the \emph{quasi-quadratic size model property} with respect to
a frame class \Fclass{F} if,
for every $\varphi \in \MSaT[\Fclass{F}](O)$,
there exists a model over a frame in \Fclass{F} that is of size quasi-quadratic in $\md(\varphi)$ and satisfies $\varphi$.

\vspace{2ex}
\noindent
\textbf{Theorem \ref{thm:QLMP_specific}}
\emph{$\MHL(\Diamond,\Box,\at)$ has the quasi-quadratic size model property with respect to $\lin$ and $\Nat$.}

\begin{proof}
  Let $K=(W,<,\eta)$ be a linear model and $w_0 \in W$ with $K,w_0 \models \varphi$.
  Consider $\varphi' = i \wedge \varphi$ for a fresh nominal $i$.
  Let $m = \md(\varphi) = \md(\varphi')$. Then $\varphi'$ is satisfiable in the $w_0$ of the model $K'$ obtained from $K$ by interpreting $i$ in $w_0$.
  Now take an arbitrary model reduction function $f$ for $K'$, which has to have $w_0$ in its domain,
  and apply Lemma \ref{lem:model_reduction} to obtain $K_m,f(w_0) \models \varphi'$.
\end{proof}

}

\end{document}